\documentclass[article]{IEEEtran}

\usepackage{xcolor,comment}
\usepackage{hyperref}
\usepackage{amsmath,amssymb,amsthm,bm}

\theoremstyle{remark}

\newcommand{\Utrue}{\mathbf {U}_{S_3}}
\newcommand{\Cplx}{\mathbb C}
\newcommand{\Prob}{\mathbb P}
\newcommand{\norm}[1]{\lVert #1 \rVert}
\newcommand{\herm}{H}
\newcommand{\E}{\mathbb{E}}

\newcommand{\Uhat}{\widehat{\mathbf{U}}_{\mathrm{s,mat}}}
\newcommand{\Umat}{\widehat{\mathbf U}_{\mathrm{s,mat}}}
\newcommand{\Uten}{\widehat{\mathbf{U}}_{\mathrm{s,ten}}}
\newcommand{\PiS}{\Utrue\Utrue^{\herm}}
\newcommand{\Epar}{\mathbf{E}_{\parallel}}

\newcommand{\sigmaL}{\sigma_{L}}
\newcommand{\lmax}{\lambda_{\max}}
\newcommand{\lmin}{\lambda_{\min}}
\DeclareMathOperator{\range}{range}

\newcommand{\Proj}{\mathbf{P}}

\newcommand{\Ksub}{\mathcal{K}}
\newcommand{\Ksubhat}{\widehat{\mathcal{K}}}
\newcommand{\PK}{\mathbf{P}_{\Ksub}}
\newcommand{\PKhat}{\mathbf {P}_{\Ksubhat}}

\newcommand{\Uorc}{\widetilde{\mathbf U}_{\mathrm{s,ten}}}

\newcommand{\gor}{g_{\mathrm{oracle}}}
\newcommand{\gorlower}{g_{\mathrm{oracle}}^{\mathrm{lower}}}
\newcommand{\gorupper}{g_{\mathrm{oracle}}^{\mathrm{upper}}}
\newcommand{\lem}{\ell_{\mathrm{emp}}}
\newcommand{\lemupper}{\ell_{\mathrm{emp}}^{\mathrm{upper}}}
\newcommand{\blower}{b_{\mathrm{lower}}}

\newcommand{\spec}[1]{\lVert #1\rVert_2}
\DeclareMathOperator{\rank}{rank}
\newcommand{\I}{\mathbf{I}}

\DeclareMathOperator{\tr}{tr}

\newcommand{\Uperp}{\mathbf{U}_{S_3}^{\perp}}
\newcommand{\UperpH}{\mathbf{U}_{S_3}^{\perp\herm}}
\newcommand{\Vhat}{\widehat{\mathbf{V}}_{\mathrm{s,mat}}}
\newcommand{\Xt}{\mathbf{X}_3}
\newcommand{\St}{\mathbf{S}_3}
\newcommand{\Qt}{\mathbf{Q}_3}

\newcommand{\vL}{\mathbf{v}_L}
\newcommand{\vLh}{\widehat{\mathbf{v}}_L}
\newcommand{\uLh}{\widehat{\mathbf{u}}_L}
\newcommand{\Sig}{\boldsymbol{\Sigma}}
\newcommand{\mMK}{\min(M,K)}
\newcommand{\twonorm}[1]{\left\lVert #1\right\rVert_2}
\newcommand{\Shat}{\widehat{\boldsymbol{\Sigma}}_{\mathrm{s,mat}}}
\newcommand{\fnorm}[1]{\left\lVert #1\right\rVert_F}

\usepackage{graphicx}
\usepackage{amsmath}
\usepackage{bm}
\usepackage{footnote}
\usepackage{amssymb}
\usepackage{subfigure}
\usepackage{cite}
\usepackage{algorithm}
\usepackage{algorithm,algcompatible,amsmath}
\usepackage{algpseudocode}
\usepackage{xcolor}
\usepackage{booktabs}
\usepackage{cuted}
\usepackage{extarrows}
\usepackage{stmaryrd}
\newtheorem{theorem}{Theorem}[section]
\newtheorem{corollary}[theorem]{Corollary}
\newtheorem{lemma}[theorem]{Lemma}
\newtheorem{proposition}[theorem]{Proposition}

\title{
Tensor-Based Joint Pitch and DOA Estimation
}

\author{Yifan Wu and Michael B. Wakin
\thanks{Yifan Wu is with the Department of Electrical and Computer Engineering, California State Polytechnic University, Pomona, CA, 91768, USA (e-mail: yiw062@ucsd.edu, yifanwu@cpp.edu).
       }%
\thanks{Michael B. Wakin is with the Department of Electrical Engineering, Colorado School of Mines, Golden, CO, 80401, USA (e-mail:mwakin@mines.edu).}%
}

\begin{document}

\maketitle
\thispagestyle{empty}

\begin{abstract}

We propose a tensor-based subspace method for the joint estimation of pitch and direction of arrival (DOA) of multiple harmonic sources. Unlike conventional matrix-based approaches that unfold the spatio-temporal data into a single matrix, the proposed method preserves the multidimensional Hankel tensor structure and exploits a Kronecker product constraint satisfied by the true signal subspace. The matrix-based subspace estimate is refined by projecting it onto an empirically estimated Kronecker-structured signal cage obtained from mode-unfolded tensor subspaces. Beyond the estimator itself, we provide a non-asymptotic theory explaining when this tensor refinement improves the matrix baseline. We show that the oracle refinement using the true Kronecker projector is non-worsening, decompose the overall tensor gain into oracle gain and empirical loss, and derive deterministic and high-probability sufficient conditions for positive overall tensor gain under complex Gaussian noise. The analysis reveals a sweet-spot behavior: the largest certified gain occurs when the matrix estimate is neither nearly perfect nor too inaccurate. Simulations over closely spaced, harmonically related, and coherent-source scenarios show that the proposed method improves subspace accuracy and downstream pitch/DOA estimation, and also outperforms representative matrix and tensor-train-based baselines.

\begin{IEEEkeywords}
Joint pitch and DOA estimation, Subspace estimation, Kronecker projector, ESPRIT, Tensor gain.
\end{IEEEkeywords}



\end{abstract}

\section{INTRODUCTION}

The estimation of the fundamental frequency, or pitch, of a signal is a cornerstone of speech and audio processing~\cite{kim2025neural,haeb2025microphone,liu2025unsupervised,lehmann2025comprehensive}. Because real-world scenarios often involve multiple concurrent harmonic sources \cite{christensen2009multi}, research has increasingly shifted toward joint pitch and direction-of-arrival (DOA) estimation using multi-channel sensor arrays \cite{christensen2009multi, wu2014joint, karimian2016computationally, jensen2013nonlinear, zhang2012joint, zhou2019parametric}. Early joint estimation methods were often constrained by narrowband signal assumptions or high computational costs associated with multi-dimensional grid searches. Progress in this field has seen the development of more efficient subspace-based techniques such as Estimation of Signal Parameters via Rotational Invariance Techniques (ESPRIT)~\cite{wu2014joint}, which leverages temporal and spatial shift-invariance structures. More recently, the focus has shifted toward enhancing robustness in non-ideal conditions through filtering methods and minimum variance solutions that account for colored noise statistics.

Meanwhile, tensor-based techniques \cite{haardt2008higher, roemer2014analytical, cheng2014tensor, rakhimov2023analytical, xu2022doa, zheng2021coupled,  goulart2015tensor, wu2009joint, sidiropoulos2017tensor, cichocki2015tensor, miron2020tensor, tokcan2025tensor, auddy2025tensors} have significantly advanced multi-dimensional harmonics estimation by exploiting the inherent structure of signals received by sensor arrays. In \cite{haardt2008higher}, a higher-order SVD (HOSVD) based method was proposed. By estimating the signal subspace directly from the tensor instead of ``stacking'' multidimensional data into matrices, this approach preserves the inherent multi-dimensional structure from the start, leading to significantly more accurate subspace and parameter estimates. In \cite{roemer2014analytical}, a generic framework to assess the performance of tensor methods was established, and the ``tensor gain'' over matrix methods was empirically demonstrated. In \cite{cheng2014tensor}, a general framework (TeTraKron) was introduced that extends arbitrary matrix-based subspace tracking schemes to their tensor-based counterparts via a Kronecker-structured projection applied to the matrix-based subspace estimate, yielding improved accuracy in time-varying multidimensional harmonic retrieval.  A closed-form analytical performance assessment of 2-D Tensor ESPRIT was derived in terms of physical parameters in \cite{rakhimov2023analytical}, expressing the mean squared estimation error as a function of the SNR, array geometry, number of snapshots, and signal correlation to quantify the accuracy difference between the tensor and matrix versions of ESPRIT. In \cite{xu2022doa}, a novel tensor-based framework for high-resolution DOA estimation in transmit beamspace (TB) MIMO radar systems was proposed. In \cite{zheng2021coupled}, the authors proposed a coupled coarray tensor canonical polyadic (CP) decomposition-based method for DOA estimation that exploits the multidimensional structure of coprime L-shaped arrays. 
The authors in \cite{goulart2015tensor} developed algorithms for tensor CP decomposition with structured factor matrices and analyzed their performance, demonstrating how incorporating structural constraints improves identifiability and estimation accuracy. A joint time-delay and frequency estimation approach based on parallel factor analysis was proposed in \cite{wu2009joint}. We refer readers to \cite{sidiropoulos2017tensor, cichocki2015tensor, miron2020tensor, tokcan2025tensor, auddy2025tensors} for a comprehensive review of the progress of tensor methods in signal processing problems. 

Despite this widespread success, the advantage of tensor-based subspace estimation is almost universally invoked as a qualitative principle---the ``tensor gain'' attributed to the algebraic structure of the multidimensional
data---rather than quantified or guaranteed. The phenomenon was recognized early: the HOSVD-based estimate of \cite{haardt2008higher} was shown to improve
subspace accuracy, and \cite{thakre2010tensor} demonstrated that a tensor-based estimate is a strictly better estimate of the signal subspace than its matrix-based counterpart in the presence of noise. The most complete relevant theoretical treatment to date is the analytical performance-assessment framework of \cite{roemer2014analytical}, which provides explicit first-order expansions of the subspace estimation error for both matrix- and tensor-based ESPRIT-type algorithms and has become the standard tool for predicting their accuracy. Its reach, however, is fundamentally \emph{asymptotic}---the expansions derive from a first-order perturbation in the noise and are exact only in the high-SNR or large-sample regime---and \emph{descriptive}, in that it predicts each estimator's error and, in the noiseless limit, identifies when the two subspace estimates coincide, but does not certify that the gain is strictly positive at a finite operating point. Related efforts likewise remain asymptotic in nature, whether characterizing the maximum gain of spatial smoothing~\cite{steinwandt2017performance} or the perturbation of the HOSVD itself~\cite{zhang2018tensor}. In short, the existence of the tensor gain has been established and its asymptotic behavior analyzed, yet the conditions guaranteeing it for a fixed geometry, source configuration, and noise level have remained largely open.

In this work, we propose a tensor-based framework for joint pitch and DOA estimation. By modeling the received signals as a multi-dimensional tensor and utilizing the inherent Kronecker structure of the pitch and DOA harmonics, we propose a refined subspace estimation technique that utilizes the Kronecker product of projection matrices from mode-unfolded data. We demonstrate that this tensor-based refinement provides superior subspace estimation as well as pitch and DOA estimation accuracy compared to its matrix-based counterpart, particularly in noisy environments, thereby establishing a more robust foundation for subsequent pitch and DOA retrieval.

Beyond proposing the estimator, we also provide a theoretical analysis explaining when this refinement improves the matrix-based subspace estimate. We first show that the ideal, oracle version of the tensor refinement—formed using the true Kronecker projector—is never worse than the matrix-based subspace estimate whenever the matrix estimate has nonzero overlap with the true signal subspace. We then decompose the empirical tensor gain into two competing terms: an oracle gain, which measures the improvement obtained from the exact Kronecker constraint, and an empirical loss, which accounts for the error incurred when the Kronecker projector is estimated from noisy data. This decomposition yields a deterministic sufficient condition for positive tensor gain. 

We further convert this deterministic condition into a high-probability guarantee under additive complex Gaussian noise. The analysis combines Wedin-type subspace perturbation bounds, concentration inequalities for the structured Hankel noise unfoldings, and a bound on the leakage of the matrix-based subspace outside the Kronecker-constrained signal cage. 
This leads to an explicit non-asymptotic condition under which the tensor-refined subspace is guaranteed to be closer to the true signal subspace than the matrix-based estimate.

In summary, the main contributions of this paper are as follows: (1)~We formulate joint pitch and DOA estimation as a structured tensor subspace problem and identify a Kronecker product constraint satisfied by the true signal subspace. (2)~We propose a tensor-refined subspace estimator that projects the conventional matrix-based estimate onto an empirically estimated Kronecker-structured signal cage. (3)~We establish an oracle tensor-gain result showing that the ideal tensor refinement cannot degrade the matrix-based subspace estimate. (4)~We derive deterministic and high-probability sufficient conditions for positive empirical tensor gain, explicitly balancing oracle improvement against empirical Kronecker-projector estimation error. (5)~We demonstrate through simulations that the proposed method improves both subspace accuracy and downstream pitch/DOA estimation accuracy across several challenging acoustic scenarios. We also numerically demonstrate the validity of the theoretical analysis. To the best of our knowledge, this is the first work to provide explicit deterministic and high-probability sufficient conditions under which a Kronecker-structured tensor refinement is guaranteed to improve a given matrix-based subspace estimate.

The remainder of this paper is organized as follows. Section~\ref{sec:sigmod} introduces the signal model, the matrix and tensor data structures, and some useful decompositions of the unfolded tensors. Section~\ref{sec:methodology} first presents subspace-based estimation methods formulated using both matrix and tensor representations, and subsequently develops joint pitch and DOA estimation techniques based on ESPRIT. Section~\ref{sec:guarantee} presents a deterministic sufficient condition for positive tensor gain and then converts the deterministic condition into a non-asymptotic high probability guarantee under additive complex Gaussian noise. Section~\ref{sec:examples} provides numerical examples that demonstrate the performance advantages of the tensor-based method. Finally, Section~\ref{sec:conclusion} concludes. 

Throughout the paper, we use lowercase boldface for vectors ($\mathbf{a}$), uppercase boldface for matrices ($\mathbf{A}$), and calligraphic script for tensors
($\mathcal{X}$). Scalars are written in italics ($a$, $L$). $j = \sqrt{-1}$. For a matrix $\mathbf{A}$, we write $\mathbf{A}^{\top}$,
$\mathbf{A}^{H}$, $\mathbf{A}^{-1}$, and $\mathbf{A}^{+}$ for its transpose, conjugate transpose, inverse, and Moore--Penrose pseudoinverse, respectively; $\sigma_i(\mathbf{A})$ denotes the $i$-th largest singular value; 
and $\|\mathbf{A}\|_2$ and $\|\mathbf{A}\|_{F}$ denote the spectral and Frobenius norms, respectively. Finally, letting $\mathbf F$ and $\mathbf G$ be matrices of the same dimension, we measure the distance between the subspaces $\range(\mathbf F)$ and $\range(\mathbf G)$ as:
\[
d(\mathbf F,\mathbf G) := \spec{\Proj_{\range(\mathbf F)}-\Proj_{\range(\mathbf G)}},
\]
where $\Proj_{\range(\mathbf F)} = \mathbf F\mathbf F^{+}$ and $\Proj_{\range(\mathbf G)}  = \mathbf G\mathbf G^{+}$ denote orthogonal projection matrices onto $\range(\mathbf F)$ and $\range(\mathbf G)$, respectively. The subspace distance coincides with the sine of the largest principal angle between $\range(\mathbf F)$ and $\range(\mathbf G)$. Note that when $\mathbf F$ has orthonormal columns, $\Proj_{\range(\mathbf F)}=\mathbf F\mathbf F^{H}$, and when $\mathbf G$ has orthonormal columns, $\Proj_{\range(\mathbf G)}=\mathbf G\mathbf G^{H}$.

\section{Signal Model}
\label{sec:sigmod}

\subsection{Data}

\subsubsection{Setup}

We suppose that $R$ microphones are arranged in a uniform linear array (ULA) with array spacing $d$. The microphones synchronously collect $N$ uniform time samples, which we denote as $x_r(n)$ for microphone $r \in \{0,1,\dots,R-1\}$ and sample index $n \in \{0,1,\dots,N-1\}$.

We assume that $P$ sources are incident on the microphone array. Each source $p \in \{1,2,\dots,P\}$ is harmonic with a known harmonic number (model order) denoted by $L_p$. We let $\omega_p$ denote the unknown pitch of source $p$ and $\alpha_{p, l}$ denote the unknown complex amplitude of harmonic $l \in \{1,2,\dots,L_p\}$ for source $p$. Finally, we let $\theta_p$ denote the unknown DOA of source $p$. 

We can therefore model the received data as 
\[
x_r(n) = s_r(n) + q_r(n),
\]
where 
\begin{equation}
\label{s_r}
s_r(n) = \sum_{p = 1}^P \sum_{l = 1}^{L_p} \alpha_{p, l}e^{j(\omega_p ln + \phi_p lr)}
\end{equation}
and $q_r(n)$ is additive complex white Gaussian noise with variance $\sigma^2$. In~\eqref{s_r}, $\phi_p = \omega_p f_s \frac{d}{c} \sin(\theta_p)$ is the phase shift for source $p$ caused by the propagation time delay between array elements, where $f_s$ is the signal sampling frequency, and $c$ is the speed of propagation.

\subsubsection{Matrices}

For a positive integer $M$ that we refer to as the data matrix dimension, we can arrange the received data into a series of $R \times M$ matrices
\begin{equation*}
    \mathbf{X}(n) = 
    \begin{bmatrix}
x_0(n) & \dots & x_0(n + M - 1)\\
\vdots & \ddots & \vdots \\
x_{R-1}(n) & \dots & x_{R-1}(n + M - 1)
\end{bmatrix} \in \mathbb{C}^{R \times M}
\end{equation*}
for $n = 0, 1, ..., N-M$. Similarly, $s_r(n)$ and $q_r(n)$ may be arranged into $\mathbf{S}(n)$ (signal matrices) and $\mathbf{Q}(n)$ (noise matrices), respectively. We then have $\mathbf{X}(n) = \mathbf{S}(n) + \mathbf{Q}(n)$, $n = 0, 1, ..., N-M$. The signal matrices $\mathbf{S}(n)$ can be rewritten using \eqref{s_r} as 
\begin{equation*}
    \mathbf{S}(n) = \sum_{p = 1}^P \sum_{l = 1}^{L_p} \beta_{p, l}(n) \mathbf{a}_s(l\phi_p) \mathbf{a}_t^T(l\omega_p),
\end{equation*}
where $\beta_{p, l}(n) = \alpha_{p, l} e^{j l \omega_p n}$, and 
\[
\mathbf{a}_s(\phi_p) = [1 \quad e^{j\phi_p} \quad \cdots \quad e^{j(R -1 )\phi_p} ]^T
\]
and 
\[
\mathbf{a}_t(\omega_p) = [1 \quad e^{j\omega_p} \quad \cdots \quad e^{j(M-1)\omega_p}]^T
\]
are steering vectors.
 
\subsubsection{Tensors}

Stacking all of the received data matrices together, we can construct the  $R \times M \times (N - M + 1)$ tensor 
\[
\mathcal{X} = [\mathbf{X}(0) | \mathbf{X}(1) | \cdots | \mathbf{X}(N-M)] \in \mathbb{C}^{R \times M \times (N - M + 1)}.
\]
Similarly, $\mathbf{S}(n)$ and $\mathbf{Q}(n)$ may be arranged into $\mathcal{S}$ (signal) and $\mathcal{Q}$ (noise) tensors, respectively. We then have $\mathcal{X} = \mathcal{S} + \mathcal{Q}$.

The $R \times M \times (N - M + 1)$ signal tensor $\mathcal{S}$ can be unfolded into the mode-1, mode-2, and mode-3 unfolding matrices $\mathbf{S}_1 = [\mathcal{S}]_{(1)} \in \mathbb{C}^{R \times M(N-M+1)}$, $\mathbf{S}_2 = [\mathcal{S}]_{(2)} \in \mathbb{C}^{M \times R(N-M+1)}$, and $\mathbf{S}_3 = [\mathcal{S}]_{(3)} \in \mathbb{C}^{RM \times (N-M+1)}$, respectively.
Note that $\mathbf{S}_1  = [\mathbf{S}(0) ~ \mathbf{S}(1) ~ \cdots ~ \mathbf{S}(N-M)]$ and $\mathbf{S}_3  = [\mathrm{vec}(\mathbf{S}(0)) ~ \mathrm{vec}(\mathbf{S}(1)) ~ \cdots ~ \mathrm{vec}(\mathbf{S}(N-M))]$. Analogous mode-1, 2, and 3 unfoldings can be defined for $\mathcal{X}$ and~$\mathcal{Q}$.

\subsection{Useful decompositions}
\label{sec:decomps}

\subsubsection{Steering matrices}

Letting 
\[
\mathbf{a}(\omega_p, \phi_p) = \mathbf{a}_t(\omega_p) \otimes \mathbf{a}_s(\phi_p) \in \mathbb{C}^{RM \times 1}
\]
denote the Kronecker product of $\mathbf{a}_t(\omega_p)$ and $\mathbf{a}_s(\phi_p)$, we define the matrix 
\begin{align*}
\mathbf{A} &= [\mathbf{a}(\omega_1, \phi_1), \dots, \mathbf{a}(L_1\omega_1, L_1\phi_1), \mathbf{a}(\omega_2, \phi_2), \dots \\ &\qquad\qquad \dots, \mathbf{a}(L_P\omega_P, L_P\phi_P)]
\end{align*}
of size ${RM \times L}$, where 
\[
L := \sum_{p=1}^P L_p
\]
is the total number of harmonic components. Similarly, we can define the $L \times (N - M + 1)$ matrix
$\mathbf{Z} := [\mathbf{z}(0) \; \mathbf{z}(1) \; \cdots \; \mathbf{z}(N-M)]$, 
where $\mathbf{z}(n) = [\beta_{1,1}(n), \dots, \beta_{1,L_1}(n), \beta_{2,1}(n), \dots, \beta_{P, L_P}(n)]^T$. With $\mathbf{A}$ and $\mathbf{Z}$, we can then write
\begin{equation}
\label{S3A}
    \mathbf{S}_3 = \mathbf{A}\mathbf{Z}.
\end{equation}

Define the matrices
\begin{align*}
    \mathbf{A}_t &= [\mathbf{a}_t(\omega_1), \dots, \mathbf{a}_t(L_1\omega_1), \mathbf{a}_t(\omega_2), \dots, \mathbf{a}_t(L_P\omega_P)] \in \mathbb{C}^{M \times L}, \\
    \mathbf{A}_s &= [\mathbf{a}_s(\phi_1), \dots, \mathbf{a}_s(L_1\phi_1), \mathbf{a}_s(\phi_2), \dots \mathbf{a}_s(L_P\phi_P)] \in \mathbb{C}^{R \times L}.
\end{align*}
An important observation is that
\begin{equation}
    \mathbf{A} = \mathbf{A}_t * \mathbf{A}_s,
    \label{eq:AKRprod}
\end{equation}
where $*$ denotes the Khatri-Rao product. 

We can also write 
\begin{align}
\label{S1AsANDS2At}
\mathbf{S}_{1} &= \mathbf{A}_s (\mathbf{Z} \bullet \mathbf{A}_t^{\top}) ~\text{and}~ \mathbf{S}_{2} = \mathbf{A}_t (\mathbf{Z} \bullet \mathbf{A}_s^{\top}), 
\end{align}
where $\bullet$ denotes the row-wise 
Khatri-Rao product.

\subsubsection{Singular value decompositions}
\label{sec:svd}

Throughout this paper, we assume $\min\{R,M,N - M + 1\} \ge L$. From the decompositions~\eqref{S3A} and ~\eqref{S1AsANDS2At}, it follows that $\operatorname{rank}(\mathbf{S}_{1}) \le L$, $\operatorname{rank}(\mathbf{S}_{2}) \le L$, and $\operatorname{rank}(\mathbf{S}_{3}) \le L$.

We denote the compact singular value decompositions (SVDs) of $\mathbf{S}_{1}$, $\mathbf{S}_{2}$, and $\mathbf{S}_{3}$ as $\mathbf{S}_{1} = \mathbf{U}_{\mathbf{S}_1} \mathbf{\Sigma}_{\mathbf{S}_1} \mathbf{V}_{\mathbf{S}_1}^H$, $\mathbf{S}_{2} = \mathbf{U}_{\mathbf{S}_2} \mathbf{\Sigma}_{\mathbf{S}_2} \mathbf{V}_{\mathbf{S}_2}^H$, and $\mathbf{S}_{3} = \mathbf{U}_{\mathbf{S}_3} \mathbf{\Sigma}_{\mathbf{S}_3} \mathbf{V}_{\mathbf{S}_3}^H$, where $\mathbf{\Sigma}_{\mathbf{S}_1}, \mathbf{\Sigma}_{\mathbf{S}_2}, \mathbf{\Sigma}_{\mathbf{S}_3}$ are $L \times L$ diagonal matrices containing the singular values of $\mathbf{S}_1,\mathbf{S}_2,\mathbf{S}_3$ on the diagonal (including zeros if the matrix has rank less than $L$), and $\mathbf{U}_{\mathbf{S}_1}, \mathbf{U}_{\mathbf{S}_2}, \mathbf{U}_{\mathbf{S}_3}$ and
$\mathbf{V}_{\mathbf{S}_1}, \mathbf{V}_{\mathbf{S}_2}, \mathbf{V}_{\mathbf{S}_3}$
are appropriately sized matrices, each with $L$ orthornormal columns.

\section{Methodology}
\label{sec:methodology}

\subsection{Assumptions}
\label{sec:assumptions}

Our analysis rests on four mild assumptions:
\begin{enumerate}
    \item {\bf (Sufficient dimension)} As mentioned in Section~\ref{sec:svd}, we assume $\min\{R,M,N - M + 1\} \ge L$, where $L = \sum_{p=1}^P L_p$ is the total number of harmonic components.
    \item {\bf (Nontrivial amplitudes)} All complex amplitudes $\alpha_{p, l}$ are nonzero.
    \item {\bf (Harmonic incoherence)} The set of all frequencies $\{\Omega_{p,l} = l\omega_p \pmod{2\pi}\}$ contains $L$ unique values.
    \item {\bf (Spatial incoherence)} The set of all spatial frequencies $\{\Psi_{p,l} = l\phi_p \pmod{2\pi}\}$ contains $L$ unique values; recall that $\phi_p = \omega_p f_s \frac{d}{c} \sin(\theta_p)$, so this condition depends both on the DOAs and on the source frequencies.
\end{enumerate}

\begin{proposition}
\label{prop:rank}
Under Assumptions 1--4, the following matrices all have rank $L$: $\mathbf{A}_t$, $\mathbf{A}_s$, $\mathbf{A}$, $\mathbf{Z}$, $\mathbf{S}_{1}$, $\mathbf{S}_{2}$, and $\mathbf{S}_{3}$.
\end{proposition}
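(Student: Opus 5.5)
The plan is to prove the rank claims in dependency order: first the steering matrices, then $\mathbf{Z}$, and finally the three unfoldings through their factorizations \eqref{S3A} and \eqref{S1AsANDS2At}.

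\textbf{Steering matrices.} The matrix $\mathbf{A}_t\in\Cplx^{M\times L}$ is a Vandermonde matrix with nodes $e^{j\Omega_{p,l}}$. By Assumption~3 these $L$ nodes are distinct points on the unit circle. Since $M\ge L$ (Assumption~1), the top $L\times L$ block is a square Vandermonde matrix with nonzero determinant $\prod_{i<k}(z_k-z_i)$. Hence $\rank(\mathbf{A}_t)=L$. The same argument with Assumption~4 and $R\ge L$ gives $\rank(\mathbf{A}_s)=L$.

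For $\mathbf{A}=\mathbf{A}_t*\mathbf{A}_s$, I will use the fact that the Khatri--Rao product has full column rank whenever either factor does. To see this, the column $\mathbf{a}_t\otimes\mathbf{a}_s$ has first $R$ entries equal to $\mathbf{a}_s$, because the first entry of $\mathbf{a}_t$ is $1$. So $\mathbf{A}_s$ is a row submatrix of $\mathbf{A}$, and therefore $\rank(\mathbf{A})\ge\rank(\mathbf{A}_s)=L$.

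\textbf{The amplitude matrix $\mathbf{Z}$.} Its entries are $Z_{(p,l),n}=\alpha_{p,l}e^{j\Omega_{p,l}n}$, so
\[
\mathbf{Z}=\mathrm{diag}(\alpha_{p,l})\,\mathbf{V}^{\top},
\]
where $\mathbf{V}\in\Cplx^{(N-M+1)\times L}$ is a Vandermonde matrix with the same distinct nodes $e^{j\Omega_{p,l}}$. The diagonal factor is invertible by Assumption~2. Because $N-M+1\ge L$, $\mathbf{V}$ has rank $L$, and therefore so does $\mathbf{Z}$.

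\textbf{Unfoldings.} For $\mathbf{S}_3=\mathbf{A}\mathbf{Z}$, both $\mathbf{A}$ (full column rank) and $\mathbf{Z}$ (full row rank) have rank $L$, so their product has rank $L$.

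For $\mathbf{S}_1=\mathbf{A}_s(\mathbf{Z}\bullet\mathbf{A}_t^{\top})$, the left factor has full column rank. It therefore suffices that the $L\times M(N-M+1)$ row-wise Khatri--Rao product $\mathbf{Z}\bullet\mathbf{A}_t^{\top}$ has full row rank $L$. Its rows are $\mathbf{z}_i^{\top}\otimes(\mathbf{a}_{t,i})^{\top}$, where $\mathbf{z}_i^{\top}$ denotes the $i$-th row of $\mathbf{Z}$ (the exact Kronecker ordering is immaterial). Since the first entry of $\mathbf{a}_{t,i}$ is $1$, this product contains a column submatrix equal to $\mathbf{Z}$ up to column permutation. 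Hence its row rank is at least $\rank(\mathbf{Z})=L$. The identical argument, using that the first entry of each $\mathbf{a}_{s,i}$ is $1$, gives $\rank(\mathbf{S}_2)=L$.

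\textbf{Main obstacle.} The only delicate point is index bookkeeping: confirming that the row-wise Khatri--Rao product in \eqref{S1AsANDS2At} indeed contains the rows of $\mathbf{Z}$ as a submatrix (or, symmetrically, $\mathbf{A}_t^{\top}$), whatever ordering convention the unfolding uses. I will handle this by selecting the columns where the steering-vector index equals $0$. All other steps are standard Vandermonde facts together with rank of products of full-rank factors.
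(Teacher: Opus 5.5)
Your proposal is correct and follows the same route as the paper's proof: Vandermonde structure for $\mathbf{A}_t$, $\mathbf{A}_s$ and $\mathbf{Z}$, then the factorizations \eqref{S3A} and \eqref{S1AsANDS2At}. It also fills in steps the paper only asserts, namely the row-submatrix argument for $\mathbf{A}$ and the full row rank of $\mathbf{Z}\bullet\mathbf{A}_t^{\top}$ and $\mathbf{Z}\bullet\mathbf{A}_s^{\top}$ via the leading unit entry of the steering vectors. One small caveat: your general claim that a Khatri--Rao product has full column rank whenever either factor does holds only if the other factor has no zero columns, though your concrete argument never relies on it.
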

\textit{Proof:} Due to its Vandermonde structure, $\mathbf{A}_t$ has rank $L$ if $M \ge L$ and Assumption 3 is satisfied. Similarly, $\mathbf{A}_s$ has rank $L$ if $R \ge L$ and Assumption 4 is satisfied. From~\eqref{eq:AKRprod}, if $\mathbf{A}_t$ and $\mathbf{A}_s$ are rank $L$, then $\mathbf{A}$ must also be rank $L$. The matrix $\mathbf{Z}$ also has a Vandermonde structure and will have rank $L$ if $N-M+1 \ge L$ and Assumptions 2 and 3 hold. Finally, from \eqref{S3A} and \eqref{S1AsANDS2At}, we will have $\operatorname{rank}(\mathbf{S}_{1}) = \operatorname{rank}(\mathbf{S}_{2}) = \operatorname{rank}(\mathbf{S}_{3}) = L$ if $\operatorname{rank}(\mathbf{A}_{t}) = \operatorname{rank}(\mathbf{A}_{s}) = \operatorname{rank}(\mathbf{Z}) = L$. $\hfill\square$

\subsection{Subspace estimation}
\label{sec:subest}

The subspace of $\mathbb{C}^{RM}$ spanned by the columns of $\mathbf{U}_{\mathbf{S}_3}$ contains key information that can be exploited by the ESPRIT method for joint estimation of both DOA and pitch (see Section~\ref{sec:esprit}). We begin by detailing an important connection between this subspace and the mode-1 and mode-2 unfoldings of $\mathcal{S}$. Inspired by \cite[Corollary 1, Appendix A]{roemer2014analytical}, we establish the following relationship between the mode-3 signal subspace and the signal subspaces associated with the other two modes. Although the resulting identity is the same as that in \cite{roemer2014analytical}, our signal model and assumptions differ; therefore, we provide a self-contained proof tailored to the present setting. 

\begin{proposition}
\label{prop:main}
Suppose Assumptions 1--4 above are met. Define $\mathbf{T}_1 = \mathbf{U}_{\mathbf{S}_1} \mathbf{U}_{\mathbf{S}_1}^H$ and $\mathbf{T}_2 = \mathbf{U}_{\mathbf{S}_2} \mathbf{U}_{\mathbf{S}_2}^H$. Then
\begin{equation}
\label{prop1}
    \mathbf{U}_{\mathbf{S}_3} = (\mathbf{T}_2 \otimes \mathbf{T}_1) \mathbf{U}_{\mathbf{S}_3}.
\end{equation}
\end{proposition}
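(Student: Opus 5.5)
The plan is to show that every column of $\mathbf{U}_{\mathbf{S}_3}$ lies in the range of the orthogonal projector $\mathbf{T}_2\otimes\mathbf{T}_1$. Since $\mathbf{T}_2\otimes\mathbf{T}_1$ is itself an orthogonal projector, it then acts as the identity on $\range(\mathbf{U}_{\mathbf{S}_3})$, which gives \eqref{prop1}. First, I would note that $\mathbf{T}_1,\mathbf{T}_2$ are Hermitian and idempotent. By the mixed-product property, $(\mathbf{T}_2\otimes\mathbf{T}_1)^2=\mathbf{T}_2^2\otimes\mathbf{T}_1^2=\mathbf{T}_2\otimes\mathbf{T}_1$, and $(\mathbf{T}_2\otimes\mathbf{T}_1)^H=\mathbf{T}_2\otimes\mathbf{T}_1$. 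Hence $\mathbf{T}_2\otimes\mathbf{T}_1$ is the orthogonal projector onto $\range(\mathbf{U}_{\mathbf{S}_2})\otimes\range(\mathbf{U}_{\mathbf{S}_1})$.

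Next, I would identify the three column spaces using Proposition~\ref{prop:rank}. From \eqref{S3A}, $\range(\mathbf{S}_3)\subseteq\range(\mathbf{A})$. Both $\mathbf{S}_3$ and $\mathbf{A}$ have rank $L$, so $\range(\mathbf{U}_{\mathbf{S}_3})=\range(\mathbf{S}_3)=\range(\mathbf{A})$. In the same way, \eqref{S1AsANDS2At} gives $\range(\mathbf{S}_1)\subseteq\range(\mathbf{A}_s)$ and $\range(\mathbf{S}_2)\subseteq\range(\mathbf{A}_t)$, and equality of ranks ($L$) gives $\range(\mathbf{U}_{\mathbf{S}_1})=\range(\mathbf{A}_s)$ and $\range(\mathbf{U}_{\mathbf{S}_2})=\range(\mathbf{A}_t)$. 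Therefore $\mathbf{T}_1\mathbf{A}_s=\mathbf{A}_s$ and $\mathbf{T}_2\mathbf{A}_t=\mathbf{A}_t$.

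The key step uses the Khatri--Rao structure \eqref{eq:AKRprod}. Each column of $\mathbf{A}$ is $\mathbf{a}_t(l\omega_p)\otimes\mathbf{a}_s(l\phi_p)$, the Kronecker product of the matching columns of $\mathbf{A}_t$ and $\mathbf{A}_s$. Applying the mixed-product property gives
\[
(\mathbf{T}_2\otimes\mathbf{T}_1)\bigl(\mathbf{a}_t(l\omega_p)\otimes\mathbf{a}_s(l\phi_p)\bigr)=\bigl(\mathbf{T}_2\mathbf{a}_t(l\omega_p)\bigr)\otimes\bigl(\mathbf{T}_1\mathbf{a}_s(l\phi_p)\bigr)=\mathbf{a}_t(l\omega_p)\otimes\mathbf{a}_s(l\phi_p).
\]
Doing this column by column yields $(\mathbf{T}_2\otimes\mathbf{T}_1)\mathbf{A}=\mathbf{A}$. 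The columns of $\mathbf{U}_{\mathbf{S}_3}$ lie in $\range(\mathbf{A})$, so $\mathbf{U}_{\mathbf{S}_3}=\mathbf{A}\mathbf{C}$ for some $\mathbf{C}$; concretely one can take $\mathbf{C}=\mathbf{Z}\mathbf{V}_{\mathbf{S}_3}\Sig_{\mathbf{S}_3}^{-1}$, which is valid because $\Sig_{\mathbf{S}_3}$ is invertible by Proposition~\ref{prop:rank}. Then $(\mathbf{T}_2\otimes\mathbf{T}_1)\mathbf{U}_{\mathbf{S}_3}=(\mathbf{T}_2\otimes\mathbf{T}_1)\mathbf{A}\mathbf{C}=\mathbf{A}\mathbf{C}=\mathbf{U}_{\mathbf{S}_3}$, which is \eqref{prop1}.

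The main obstacle is bookkeeping the ordering conventions rather than any conceptual step. One must check that the mode-1 and mode-2 unfoldings, and the vec-ordering in $\mathbf{S}_3$, really produce $\mathbf{a}_t\otimes\mathbf{a}_s$ and not $\mathbf{a}_s\otimes\mathbf{a}_t$. Otherwise the projector would need to be $\mathbf{T}_1\otimes\mathbf{T}_2$. Here $\mathrm{vec}(\mathbf{S}(n))$ stacks the columns of $\mathbf{S}(n)=\sum\beta\,\mathbf{a}_s\mathbf{a}_t^T$. Since $\mathrm{vec}(\mathbf{a}_s\mathbf{a}_t^T)=\mathbf{a}_t\otimes\mathbf{a}_s$, the ordering $\mathbf{T}_2\otimes\mathbf{T}_1$ matches, consistent with \eqref{S3A} and \eqref{eq:AKRprod}. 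I would verify this at the outset and then rely on \eqref{S1AsANDS2At} as stated for the column spaces of the other unfoldings. The rank facts supplied by Proposition~\ref{prop:rank} are what make the range inclusions into equalities.
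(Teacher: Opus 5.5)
Your proposal is correct and follows essentially the same route as the paper. The paper also obtains $\mathbf{A}=(\mathbf{P}_t\otimes\mathbf{P}_s)\mathbf{A}$ from the Khatri--Rao mixed-product property, and it identifies $\mathbf{P}_s=\mathbf{T}_1$, $\mathbf{P}_t=\mathbf{T}_2$ by writing $\mathbf{U}_{\mathbf{S}_1}=\mathbf{A}_s\mathbf{R}_1$, $\mathbf{U}_{\mathbf{S}_2}=\mathbf{A}_t\mathbf{R}_2$ with invertible $\mathbf{R}_i$ (the same fact as your rank-counting range equality), before concluding via $\mathbf{U}_{\mathbf{S}_3}=\mathbf{A}\mathbf{R}_3$ with $\mathbf{R}_3=\mathbf{Z}\mathbf{V}_{\mathbf{S}_3}\mathbf{\Sigma}_{\mathbf{S}_3}^{-1}$, which is exactly your $\mathbf{C}$; your explicit check that $\mathrm{vec}(\mathbf{a}_s\mathbf{a}_t^T)=\mathbf{a}_t\otimes\mathbf{a}_s$ fixes the Kronecker ordering is a point the paper takes for granted.
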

\textit{Proof:} Defining $\mathbf{P}_t := \mathbf{A}_t \mathbf{A}_t^+$ and $\mathbf{P}_s := \mathbf{A}_s \mathbf{A}_s^+$ as orthogonal projectors onto the column span of $\mathbf{A}_t$ and $\mathbf{A}_s$, respectively, it follows that $\mathbf{A}_t = \mathbf{P}_t \mathbf{A}_t$ and $\mathbf{A}_s = \mathbf{P}_s \mathbf{A}_s$. Taking the Khatri-Rao product of $\mathbf{A}_t$ and $\mathbf{A}_s$, we obtain $\mathbf{A} = \mathbf{A}_t * \mathbf{A}_s = (\mathbf{P}_t \mathbf{A}_t) * (\mathbf{P}_s \mathbf{A}_s) = (\mathbf{P}_t \otimes \mathbf{P}_s) (\mathbf{A}_t * \mathbf{A}_s) = (\mathbf{P}_t \otimes \mathbf{P}_s) \mathbf{A}$.

Using~\eqref{S3A} and~\eqref{S1AsANDS2At}, we then have
\begin{align*}
\mathbf{S}_{1} &= \mathbf{A}_s (\mathbf{Z} \bullet \mathbf{A}_t^{\top}) = \mathbf{P}_s \mathbf{A}_s (\mathbf{Z} \bullet \mathbf{A}_t^{\top}) = \mathbf{P}_s \mathbf{S}_{1}, \\
\mathbf{S}_{2} &= \mathbf{A}_t (\mathbf{Z} \bullet \mathbf{A}_s^{\top}) = \mathbf{P}_t \mathbf{A}_t (\mathbf{Z} \bullet \mathbf{A}_s^{\top}) = \mathbf{P}_t \mathbf{S}_{2},~\text{and} \\
\mathbf{S}_3 &= \mathbf{A}\mathbf{Z} = (\mathbf{P}_t \otimes \mathbf{P}_s) \mathbf{A}\mathbf{Z} = (\mathbf{P}_t \otimes \mathbf{P}_s) \mathbf{S}_3.
\end{align*}

From Proposition~\ref{prop:rank}, $\mathbf{A}_t$, $\mathbf{A}_s$, $\mathbf{A}$, $\mathbf{Z}$, $\mathbf{S}_{1}$, $\mathbf{S}_{2}$, and $\mathbf{S}_{3}$ all have rank $L$. Thus $\mathbf{S}_{1} = \mathbf{U}_{\mathbf{S}_1} \mathbf{\Sigma}_{\mathbf{S}_1} \mathbf{V}_{\mathbf{S}_1}^H = \mathbf{A}_s (\mathbf{Z} \bullet \mathbf{A}_t^{\top})$,
from which it follows that
$\mathbf{U}_{\mathbf{S}_1} = \mathbf{A}_s (\mathbf{Z} \bullet \mathbf{A}_t^{\top}) \mathbf{V}_{\mathbf{S}_1} (\mathbf{\Sigma}_{\mathbf{S}_1})^{-1} =: \mathbf{A}_s \mathbf{R}_1$,
where $\mathbf{\Sigma}_{\mathbf{S}_1}, \mathbf{R}_1 \in \mathbb{C}^{L \times L}$ are both invertible. Similarly, $\mathbf{U}_{\mathbf{S}_2} = \mathbf{A}_t (\mathbf{Z} \bullet \mathbf{A}_s^{\top}) \mathbf{V}_{\mathbf{S}_2} (\mathbf{\Sigma}_{\mathbf{S}_2})^{-1} =: \mathbf{A}_t \mathbf{R}_2$ 
and 
$\mathbf{U}_{\mathbf{S}_3} = \mathbf{A} \mathbf{Z} \mathbf{V}_{\mathbf{S}_3} (\mathbf{\Sigma}_{\mathbf{S}_3})^{-1} =: \mathbf{A} \mathbf{R}_3$, 
where $\mathbf{\Sigma}_{\mathbf{S}_2},\mathbf{\Sigma}_{\mathbf{S}_3}, \mathbf{R}_2,\mathbf{R}_3 \in \mathbb{C}^{L \times L}$ are all invertible. Hence,
\begin{equation*}
\begin{aligned}
    \mathbf{P}_s &:= \mathbf{A}_s \mathbf{A}_s^+ = (\mathbf{U}_{\mathbf{S}_1} \mathbf{R}_1^{-1}) (\mathbf{U}_{\mathbf{S}_1} \mathbf{R}_1^{-1})^+ =\mathbf{U}_{\mathbf{S}_1} \mathbf{R}_1^{-1} \mathbf{R}_1 \mathbf{U}_{\mathbf{S}_1}^+ \\
    &= \mathbf{U}_{\mathbf{S}_1} \mathbf{U}_{\mathbf{S}_1}^+ = \mathbf{U}_{\mathbf{S}_1} (\mathbf{U}_{\mathbf{S}_1}^H \mathbf{U}_{\mathbf{S}_1})^{-1} \mathbf{U}_{\mathbf{S}_1}^H 
    = \mathbf{U}_{\mathbf{S}_1} \mathbf{U}_{\mathbf{S}_1}^H = \mathbf{T}_1,
\end{aligned}
\end{equation*}
where the second to the last equality uses the fact $\mathbf{U}_{\mathbf{S}_1}^H \mathbf{U}_{\mathbf{S}_1} = \mathbf{I}$. Similarly, we can obtain $\mathbf{P}_t := \mathbf{A}_t \mathbf{A}_t^+ = \mathbf{U}_{\mathbf{S}_2} \mathbf{U}_{\mathbf{S}_2}^H = \mathbf{T}_2$.

Finally, we have $(\mathbf{T}_2 \otimes \mathbf{T}_1) \mathbf{U}_{\mathbf{S}_3} 
= (\mathbf{P}_t \otimes \mathbf{P}_s) \mathbf{U}_{\mathbf{S}_3} = (\mathbf{P}_t \otimes \mathbf{P}_s) \mathbf{A} \mathbf{R}_3 = \mathbf{A} \mathbf{R}_3 = \mathbf{U}_{\mathbf{S}_3}$, which proves~\eqref{prop1}. $\hfill\square$

The matrix $\mathbf{T}_2 \otimes \mathbf{T}_1$ appearing in~\eqref{prop1} is itself an orthogonal projection matrix. It is self-adjoint (as $(\mathbf{T}_2 \otimes \mathbf{T}_1)^H = \mathbf{T}_2^H \otimes \mathbf{T}_1^H = \mathbf{T}_2 \otimes \mathbf{T}_1$) and idempotent (as 
$(\mathbf{T}_2 \otimes \mathbf{T}_1)(\mathbf{T}_2 \otimes \mathbf{T}_1) = (\mathbf{T}_2 \mathbf{T}_2) \otimes (\mathbf{T}_1 \mathbf{T}_1) = \mathbf{T}_2 \otimes \mathbf{T}_1$).

In practice, $\mathbf{U}_{\mathbf{S}_3}$ is not directly available and must be estimated from noisy data in $\mathcal{X}$. A classical method for doing so is to compute the SVD of the mode-3 unfolding of $\mathcal{X}$: $\mathbf{X}_3 = \mathbf{U}_{\mathbf{X}_3} \mathbf{\Sigma}_{\mathbf{X}_3} \mathbf{V}_{\mathbf{X}_3}^H$.
Because $\mathbf{X}_3$ may be full rank in general, the subspace estimate can be obtained from the span of its first $L$ singular vectors. We denote this as 
\begin{equation}
\Umat := \mathbf{U}_{\mathbf{X}_3}(:,1:L)
\label{eq:usmat}
\end{equation}
because this estimate can be formed directly from the matrix $\mathbf{X}_3 = [\mathrm{vec}(\mathbf{X}(0)) ~ \mathrm{vec}(\mathbf{X}(1)) ~ \cdots ~ \mathrm{vec}(\mathbf{X}(N-M))]$ without using tensors. We also note that $\Umat$ is the same subspace estimate employed in~\cite{wu2014joint}.

The insight to be gleaned from Proposition~\ref{prop:main}, however, is that additional subspace structure can be revealed through multiple unfoldings of the data tensor $\mathcal{X}$. Specifically, $\mathbf{T}_1$ and $\mathbf{T}_2$ can be estimated from the SVD of the mode-1 and mode-2 unfoldings of $\mathcal{X}$ as follows: 
\begin{equation}
\widehat{\mathbf{T}}_1 := \mathbf{U}_{\mathbf{X}_1}(:,1:L) (\mathbf{U}_{\mathbf{X}_1}(:,1:L))^H
\label{eq:t1hat}
\end{equation}
and
\begin{equation}
\widehat{\mathbf{T}}_2 := \mathbf{U}_{\mathbf{X}_2}(:,1:L) (\mathbf{U}_{\mathbf{X}_2}(:,1:L))^H.
\label{eq:t2hat}
\end{equation}
With the estimated projection operator $\widehat{\mathbf{T}}_2 \otimes \widehat{\mathbf{T}}_1$, we can then refine the matrix-based subspace estimate to obtain our tensor-based subspace estimate: 
\begin{equation}
\Uten := (\widehat{\mathbf{T}}_2 \otimes \widehat{\mathbf{T}}_1) \Umat. 
\label{eq:usten}
\end{equation}
While the columns of $\Uten$ are not automatically orthonormal, its column span $\range(\Uten)$ is the refined subspace estimate. By projecting the matrix-based subspace estimate $\Umat$ onto the subspace constrained by both the spatial and temporal steering structures, we effectively suppress noise components that do not conform to the harmonic steering structure defined in \eqref{eq:AKRprod}. The goal is for improvement in the subspace estimate, provided by the approximate knowledge of the orthogonal projector $\mathbf{T}_2 \otimes \mathbf{T}_1$, to translate into improved pitch and DOA accuracy. 

In practice, it is possible for the empirical tensor-based subspace estimate $\Uten$ to be slightly worse than $\Umat$, but this is a rare occurrence in our simulations (see Section~\ref{sec:examples}). Much more commonly, $d(\Uten, \mathbf{U}_{\mathbf{S}_3}) < d(\Umat, \mathbf{U}_{\mathbf{S}_3})$. Section~\ref{sec:guarantee} of this paper is devoted to proving that $d(\Uten, \mathbf{U}_{\mathbf{S}_3}) < d(\Umat, \mathbf{U}_{\mathbf{S}_3})$ with high probability in a variety of scenarios.

\subsection{Pitch and DOA estimation}
\label{sec:esprit}

Once the subspace estimation is finished, we can then proceed to use the ESPRIT method for joint estimation of DOA and pitch. From the proof of Proposition~\ref{prop:main}, we have that $\mathbf{U}_{\mathbf{S}_3} = \mathbf{A} \mathbf{R}_3$ where $\mathbf{R}_3$ is invertible, so $\mathbf{U}_{\mathbf{S}_3}$ and $\mathbf{A}$ share the same column span. Following~\cite{wu2014joint}, we exploit the shift invariance structure in $\mathbf{A}$ to estimate DOA and pitch.

We start by defining the following four selection matrices
\begin{align*}
\mathbf{W}_1 &= 
\begin{bmatrix}
\mathbf{I}_{M-1} \\
\mathbf{0}_{1\times(M-1)}
\end{bmatrix}
\otimes \mathbf{I}_R, 
&\mathbf{W}_2 = 
\begin{bmatrix}
\mathbf{0}_{1\times(M-1)} \\
\mathbf{I}_{M-1}
\end{bmatrix}
\otimes \mathbf{I}_R, \\[6pt]
\mathbf{W}_3 &= 
\mathbf{I}_M \otimes
\begin{bmatrix}
\mathbf{I}_{R-1} \\
\mathbf{0}_{1\times(R-1)}
\end{bmatrix}, 
&\mathbf{W}_4 = 
\mathbf{I}_M \otimes
\begin{bmatrix}
\mathbf{0}_{1\times(R-1)} \\
\mathbf{I}_{R-1}
\end{bmatrix}.
\end{align*}
The selection matrices $\mathbf{W}_1$ and $\mathbf{W}_2$ have size $MR \times (M-1)R$, while $\mathbf{W}_3$ and $\mathbf{W}_4$ have size $MR \times M(R-1)$. 

Using the selection matrices, we have the following shift invariant relationships involving various submatrices of $\mathbf{A}$: $\mathbf{W}_1^T \mathbf{A} \boldsymbol{\Psi}_t = \mathbf{W}_2^T \mathbf{A}$ and $\mathbf{W}_3^T \mathbf{A}  \boldsymbol{\Psi}_s = \mathbf{W}_4^T \mathbf{A}$, where $\boldsymbol{\Psi}_t$ and $\boldsymbol{\Psi}_s$ are $L \times L$ matrices defined as $\boldsymbol{\Psi}_t = \operatorname{diag}\!\left(\boldsymbol{\Psi}_{t_1}, \ldots, \boldsymbol{\Psi}_{t_P}\right)$ where
$\boldsymbol{\Psi}_{t_p} = \operatorname{diag}\!\left(e^{j\omega_p}, \ldots, e^{j\omega_p L_p}\right)$
and $\boldsymbol{\Psi}_s = \operatorname{diag}\!\left(\boldsymbol{\Psi}_{s_1}, \ldots, \boldsymbol{\Psi}_{s_P}\right)$ where
$\boldsymbol{\Psi}_{s_p} = \operatorname{diag}\!\left(e^{j\phi_p}, \ldots, e^{j\phi_p L_p}\right)$.

Substituting $\mathbf{A}=\mathbf{U}_{\mathbf{S}_3}\mathbf{R}_3^{-1}$ into the above, we obtain the relations
$\mathbf{W}_1^T \mathbf{U}_{\mathbf{S}_3} \boldsymbol{\Phi}_t  = \mathbf{W}_2^T \mathbf{U}_{\mathbf{S}_3}$ and $\mathbf{W}_3^T \mathbf{U}_{\mathbf{S}_3} \boldsymbol{\Phi}_s = \mathbf{W}_4^T \mathbf{U}_{\mathbf{S}_3}$, which involve the similarity transformations $\boldsymbol{\Phi}_t := \mathbf{R}_3^{-1} \boldsymbol{\Psi}_t \mathbf{R}_3 $ and $\boldsymbol{\Phi}_s := \mathbf{R}_3^{-1} \boldsymbol{\Psi}_s \mathbf{R}_3$. 
It follows that the eigenvalues of $\boldsymbol{\Phi}_t$ and $\boldsymbol{\Phi}_s$ are the same as the diagonal entries of $\boldsymbol{\Psi}_t$ and $\boldsymbol{\Psi}_s$; we denote these by $\mu_l^{(p)}$ and $\nu_l^{(p)}$, respectively, for $l=1,\ldots,L_p$ and $p=1,\ldots,P$. Recall that $\operatorname{rank}(\mathbf{U}_{\mathbf{S}_3})=L$. If  $\operatorname{rank}(\mathbf{W}_1^T\mathbf{U}_{\mathbf{S}_3})=L$ and $\operatorname{rank}(\mathbf{W}_3^T\mathbf{U}_{\mathbf{S}_3})=L$, it follows that $\boldsymbol{\Phi}_t  = (\mathbf{W}_1^T \mathbf{U}_{\mathbf{S}_3})^+ \mathbf{W}_2^T \mathbf{U}_{\mathbf{S}_3}$  and $
\boldsymbol{\Phi}_s = (\mathbf{W}_3^T \mathbf{U}_{\mathbf{S}_3})^+ \mathbf{W}_4^T \mathbf{U}_{\mathbf{S}_3}$, where $^+$ stands for the Moore-Penrose pseudoinverse. 

In practice, $\mathbf{U}_{\mathbf{S}_3}$ is not directly available, but one of its estimates---either $\Umat$ or $\Uten$ from Section~\ref{sec:subest}---can be used in its place to compute estimates of $\boldsymbol{\Phi}_t$ and $\boldsymbol{\Phi}_s$. The eigenvalues of these estimates are denoted by $\widehat{\mu}_l^{(p)}$ and $\widehat{\nu}_l^{(p)}$, respectively. Because the principal argument is defined modulo $2\pi$, the phases are first unwrapped across the harmonic index $l$; hereafter, $\angle(\cdot)$ denotes the corresponding unwrapped phase, chosen such that
\[
\angle\widehat{\mu}_l^{(p)} \approx l\omega_p
\qquad \text{and} \qquad
\angle\widehat{\nu}_l^{(p)} \approx l\phi_p.
\]
The pitch and DOA parameters can then be estimated by averaging these phases over the model order as
\[
\widehat{\omega}_p
=
\frac{2}{L_p(L_p+1)}
\sum_{l=1}^{L_p}
\angle\widehat{\mu}_l^{(p)}
\]
and
\begin{align*}
\widehat{\theta}_p
&=
\arcsin\!\left(
\frac{2c}{L_p(L_p+1)\widehat{\omega}_p f_s d}
\sum_{l=1}^{L_p}
\angle\widehat{\nu}_l^{(p)}
\right).
\end{align*}
In the multi-pitch case, the estimated eigenvalues $\widehat{\mu}_l^{(p)}$ and $\widehat{\nu}_l^{(p)}$ must be correctly associated to ensure consistent pairing between pitch and DOA estimates. Standard eigenvalue pairing techniques available in the literature (e.g.,~\cite{duofang2008angle, liu1998azimuth}) can be employed to accomplish this task.

\section{Theoretical Guarantee for Tensor Gain}
\label{sec:guarantee}

In this section, we prove that in a variety of scenarios, $d(\Uten, \mathbf{U}_{\mathbf{S}_3}) < d(\Umat, \mathbf{U}_{\mathbf{S}_3})$ with high probability. To do this, we decompose the overall tensor gain $d(\Umat,\Utrue)-d(\Uten,\Utrue)$ into two components as detailed in equation~\eqref{eq:overalltg2}. We provide deterministic bounds on these components in Theorem~\ref{thm:oraclebounds} and Theorem~\ref{thm:exact}. As these bounds involve factors (such as $\Umat$) that ultimately depend on the noise tensor $\mathcal{Q}$, we then establish probabilistic bounds on the relevant quantities. Our analysis culminates in our main result, Theorem~\ref{thm:main}, which establishes that in certain scenarios, the overall tensor gain is positive with high probability. 

Our analysis is entirely non-asymptotic, as it can be applied to finite values of $R$, $M$, $N$, and SNR, and all constants (such as in the failure probabilities) are explicit. In contrast to the asymptotic framework of \cite{roemer2014analytical}, our results provide finite-sample conditions under which a strictly positive tensor gain is guaranteed for a fixed problem configuration.

\subsection{Setup}
\label{sec:proofsetup}

As in Proposition~\ref{prop:main}, suppose Assumptions 1--4 are met. Therefore, from Proposition~\ref{prop:rank}, the following matrices all have rank $L$: $\mathbf{A}_t$, $\mathbf{A}_s$, $\mathbf{A}$, $\mathbf{Z}$, $\mathbf{S}_{1}$, $\mathbf{S}_{2}$, and $\mathbf{S}_{3}$.

As in Section~\ref{sec:sigmod}, we suppose the base noise matrix $\mathbf{Q} \in \mathbb{C}^{R \times N}$ has i.i.d.\ entries distributed as $\mathcal{CN}(0, \sigma^2)$.

Recall that the columns of  $\Utrue \in \mathbb{C}^{RM \times L}$ define an orthonormal basis for the true $L$-dimensional mode-3 signal subspace, that $\Umat \in \mathbb{C}^{RM \times L}$ defines an orthonormal basis for the matrix-based estimate of the subspace as defined in~\eqref{eq:usmat}, and that $\Uten \in \mathbb{C}^{RM \times L}$ defines a basis for our tensor-based estimate of the subspace as defined in~\eqref{eq:usten}.

Recall also the projectors $\mathbf T_1=\mathbf U_{S_1}\mathbf U_{S_1}^H$ and $\mathbf T_2=\mathbf U_{S_2}\mathbf U_{S_2}^H$ together with their estimates $\widehat{\mathbf T}_1$ (see~\eqref{eq:t1hat}) and $\widehat{\mathbf T}_2$ (see~\eqref{eq:t2hat}) from the mode-1 and mode-2 unfoldings of the data tensor $\boldsymbol{\mathcal X}$. 

We let $\Ksub := \range(\mathbf T_2\otimes\mathbf T_1)$ denote the true (``oracle'') subspace constrained by the spatial and temporal steering structures, and we let $\Ksubhat := \range(\widehat{\mathbf T}_2\otimes\widehat{\mathbf T}_1)$ denote its empirical estimate. We define the corresponding orthogonal projection operators onto $\Ksub$ and $\Ksubhat$ as 
\[
\PK := \mathbf T_2\otimes\mathbf T_1 \quad \text{and} \quad
\PKhat :=\widehat{\mathbf T}_2\otimes\widehat{\mathbf T}_1,
\]
and we refer to $\PK$ and $\PKhat$ as the oracle and empirical subspace projectors, respectively. By Proposition~\ref{prop:main}, 
\begin{equation}
\PK\,\Utrue=\Utrue. \label{eq:fixed}
\end{equation}

Due to~\eqref{eq:usten}, we have that $\Uten = \PKhat \Umat$; that is, the empirical subspace projector is used to refine the matrix-based subspace estimate from~\eqref{eq:usmat}. We also define
\begin{equation}
    \Uorc := \PK\Umat;
    \label{eq:Uorc}
\end{equation} 
its range is the refined subspace estimate that would hypothetically be possible with knowledge of the oracle subspace projector~$\PK$.

\subsection{Decomposing overall tensor gain}

Using the concept of distance between subspaces, we define the {\bf overall tensor gain} as the improvement in estimating $\mathbf U_{S_3}$ via our refined empirical estimate $\Uten$ compared to the the original matrix-based estimate $\Umat$:
\[
\text{overall tensor gain} 
:= d(\Umat,\Utrue)-d(\Uten,\Utrue).
\]
Both analytically and intuitively, we find it useful to decompose the overall tensor gain into the hypothetical gain that would be possible with the oracle estimate $\Uorc$ (see~\eqref{eq:Uorc}), minus the amount of gain that is ``given back'' due to the use of the empirical (rather than oracle) projector. That is, 
\begin{equation}
\label{eq:overalltg2}
\text{overall tensor gain} 
= \gor - \lem,
\end{equation}
where we define the {\bf oracle gain} and {\bf empirical loss} as follows:
\begin{align*}
\gor &:= d(\Umat,\Utrue)-d(\Uorc,\Utrue), \\
\lem &:= d(\Uten,\Utrue)-d(\Uorc,\Utrue).
\end{align*}
Our ultimate goal is to prove that the overall tensor gain is positive with high probability in a variety of scenarios. We approach this by identifying cases where the oracle gain exceeds the empirical loss.

\subsection{Oracle gain}
\label{sec:oraclegain}

Recall that the oracle gain $\gor = d(\Umat,\Utrue)-d(\Uorc,\Utrue)$ is the hypothetical tensor gain that would be possible using the oracle subspace projector $\PK$ compared to using the empirical subspace projector $\PKhat$. As we establish in the following lemma, the oracle gain can never be negative. Proofs of this proposition and all other theoretical results from Section~\ref{sec:guarantee} are provided in the Appendix (see Supplementary Material).

\begin{proposition}[Oracle gain is nonnegative]
\label{prop:oracle}
$\gor \ge 0$.
\end{proposition}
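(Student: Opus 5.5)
The plan is to reduce the comparison of the two distances to a comparison of the ``tangent'' operators of the two subspaces with respect to $\Utrue$. The key structural fact is that $\PiS$ and $\PK$ commute. By \eqref{eq:fixed}, $\PK\PiS=\PiS$, and taking adjoints gives $\PiS\PK=\PiS$. Hence
\[
\PK(\I-\PiS) = (\I-\PiS)\PK .
\]
This lets us push $\PK$ past the projector onto the orthogonal complement of the true subspace.

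First I dispose of the degenerate case. Projector differences always satisfy $d(\cdot,\cdot)\le 1$. So if $d(\Umat,\Utrue)=1$ there is nothing to prove, since $d(\Uorc,\Utrue)\le 1$.

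Otherwise $d(\Umat,\Utrue)<1$. Because both subspaces have dimension $L$, this means the largest principal angle is below $\pi/2$, so the $L\times L$ matrix $\mathbf B:=\Utrue^{\herm}\Umat$ is invertible. Then
\[
\Utrue^{\herm}\Uorc=\Utrue^{\herm}\PK\Umat=(\PK\Utrue)^{\herm}\Umat=\mathbf B ,
\]
which is also invertible. In particular $\Uorc$ has rank $L$, so $\range(\Uorc)$ is again $L$-dimensional. This is where the ``nonzero overlap'' condition mentioned in the introduction enters.

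Next I use the standard tangent representation. For any $\mathbf F\in\Cplx^{RM\times L}$ with $\Utrue^{\herm}\mathbf F$ invertible,
\[
\range(\mathbf F)=\range\bigl(\Utrue+\mathbf W_{\mathbf F}\bigr),\qquad \mathbf W_{\mathbf F}:=(\I-\PiS)\mathbf F(\Utrue^{\herm}\mathbf F)^{-1}.
\]
Here $\mathbf W_{\mathbf F}\perp\range(\Utrue)$. A short computation with this representation gives
\[
d(\mathbf F,\Utrue)^2=\frac{\spec{\mathbf W_{\mathbf F}}^2}{1+\spec{\mathbf W_{\mathbf F}}^2}.
\]
The computation goes as follows. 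Orthonormalize $\Utrue+\mathbf W_{\mathbf F}$ using $(\I+\mathbf W_{\mathbf F}^{\herm}\mathbf W_{\mathbf F})^{-1/2}$. The smallest cosine of the principal angles is then $(1+\spec{\mathbf W_{\mathbf F}}^2)^{-1/2}$, and the sine of the largest angle equals $d$. This identity is the one step I would verify carefully. It is routine but is the crux of the argument.

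Finally I compare the two tangent operators. Using the commutation relation and $\Utrue^{\herm}\Uorc=\mathbf B$,
\[
\mathbf W_{\Uorc}=(\I-\PiS)\PK\Umat\,\mathbf B^{-1}=\PK(\I-\PiS)\Umat\,\mathbf B^{-1}=\PK\mathbf W_{\Umat}.
\]
Since $\spec{\PK}\le 1$, this gives $\spec{\mathbf W_{\Uorc}}\le\spec{\mathbf W_{\Umat}}$. The map $t\mapsto t^2/(1+t^2)$ is increasing, so $d(\Uorc,\Utrue)\le d(\Umat,\Utrue)$, that is, $\gor\ge 0$.

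I do not expect a real obstacle. The only delicate points are the rank/invertibility bookkeeping, which the degenerate-case split handles, and stating the tangent–sine identity correctly.
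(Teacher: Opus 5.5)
Your proof is correct, and it follows a genuinely different route from the paper's.

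\textbf{The paper's route (cosines).} It takes a thin QR factorization $\PK\Umat=\mathbf Q\mathbf R$ and uses Lemma~\ref{lem:floor} to get invertibility of $\mathbf R$. It then writes the oracle cosine matrix as $(\Utrue^{\herm}\Umat)\mathbf R^{-1}$. From $\mathbf R^{\herm}\mathbf R=\Umat^{\herm}\PK\Umat\preceq\I$ it concludes that $\sigmaL$ of that matrix can only increase.

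\textbf{Your route (tangents).} You use the graph representation $\Utrue+\mathbf W$ together with $\tan\theta_{\max}=\spec{\mathbf W}$. Your key step is that the oracle tangent is the matrix tangent multiplied by the contraction $\PK$, which you get from the commutation $\PK(\I-\PiS)=(\I-\PiS)\PK$. Both arguments rest on the same identity $\Utrue^{\herm}\PK\Umat=\Utrue^{\herm}\Umat$.

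\textbf{What each buys.} The paper's version produces the factor $\mathbf R$ itself. Its decomposition $\mathbf R^{\herm}\mathbf R=\mathbf M^{\herm}\mathbf M+\Epar^{\herm}\Epar$ (where the paper's $\mathbf M$ is your $\mathbf B$) and the quantity $\rho=\sigmaL(\mathbf R)$ are reused directly in Lemma~\ref{lem:incageenergy} and in the upper bound $\gorupper$ of Theorem~\ref{thm:oraclebounds}.

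Your version is more self-contained: rank $L$ of $\Uorc$ follows at once from $\Utrue^{\herm}\Uorc=\mathbf B$, with no need for Lemma~\ref{lem:floor}. It also gives an exact formula. Since $\PK(\I-\PiS)=\PK-\PiS$, you have $\mathbf W_{\Uorc}=\Epar\mathbf B^{-1}$, hence $\tan\theta^{\mathrm{or}}_{\max}=\spec{\Epar\mathbf B^{-1}}\le\spec{\Epar}/\sqrt{a}$. Substituting this into your tangent--sine identity gives $d^2(\Uorc,\Utrue)\le\spec{\Epar}^2/(a+\spec{\Epar}^2)$. That is exactly the paper's $b\ge\blower$, and hence $\gorlower$, obtained in one line.
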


The proof of Proposition~\ref{prop:oracle} relies on the following lemma.

\begin{lemma}[Retained-energy floor]
\label{lem:floor}
If $d(\Umat,\Utrue)<1$ then $\PK\Umat$ has rank $L$ and $\rho > 0$, where 
\begin{equation}
    \rho := \sigmaL(\PK\Umat). \label{eq:rho} 
\end{equation}
\end{lemma}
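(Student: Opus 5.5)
The plan is to bound $\sigmaL(\PK\Umat)$ from below by $\sigmaL(\Utrue^H\Umat)$, and then show that the latter is positive exactly when $d(\Umat,\Utrue)<1$.

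\textbf{Step 1 (domination).} By \eqref{eq:fixed}, $\range(\Utrue)\subseteq\Ksub$. Hence $\PiS\PK=\PiS$, since $\PK\PiS=\PiS$ and both operators are self-adjoint. For any unit vector $\mathbf x\in\Cplx^L$ we then have
\[
\norm{\PK\Umat\mathbf x}_2\ \ge\ \norm{\PiS\PK\Umat\mathbf x}_2=\norm{\PiS\Umat\mathbf x}_2=\norm{\Utrue^H\Umat\mathbf x}_2 ,
\]
where the first inequality holds because $\PiS$ is an orthogonal projector and so has operator norm at most one. Taking the minimum over unit $\mathbf x$ gives
\[
\rho=\sigmaL(\PK\Umat)\ \ge\ \sigmaL(\Utrue^H\Umat).
\]

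\textbf{Step 2 (principal angles).} Both $\Utrue$ and $\Umat$ have $L$ orthonormal columns. The singular values of $\Utrue^H\Umat$ are therefore the cosines $\cos\theta_i$ of the principal angles between the two subspaces. For two subspaces of equal dimension, $d(\Umat,\Utrue)=\spec{\PiS-\Umat\Umat^H}=\sin\theta_{\max}$. This is the standard fact recalled in the notation paragraph. It follows that
\[
\sigmaL(\Utrue^H\Umat)=\cos\theta_{\max}=\sqrt{1-d(\Umat,\Utrue)^2},
\]
which is strictly positive whenever $d(\Umat,\Utrue)<1$.

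\textbf{Step 3 (conclusion).} Combining the two steps gives $\rho\ge\sqrt{1-d(\Umat,\Utrue)^2}>0$. The matrix $\PK\Umat$ has $L$ columns and a positive $L$-th singular value, so it has full column rank $L$.

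The only point needing care is the identity in Step 2 for equal-dimensional subspaces. If one prefers to avoid quoting it, there is a direct check. Suppose $\Utrue^H\Umat\mathbf x=0$ for some unit $\mathbf x$. Then $\mathbf y=\Umat\mathbf x$ lies in $\range(\Umat)$ and is orthogonal to $\range(\Utrue)$, so $(\Umat\Umat^H-\PiS)\mathbf y=\mathbf y$. This forces $d\ge1$, contradicting the hypothesis. That argument already suffices for rank $L$ and $\rho>0$. The quantitative bound in Steps 1 and 2 is a bonus that would likely be useful later in the paper.
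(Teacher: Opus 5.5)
Your proposal is correct and follows essentially the same route as the paper. You bound $\rho$ below by $\sigmaL(\Utrue^{H}\Umat)$ using $\range(\Utrue)\subseteq\Ksub$; the paper phrases this step as $\Utrue\Utrue^{H}\preceq\PK$, while you use $\Utrue\Utrue^{H}\PK=\Utrue\Utrue^{H}$ together with contraction. You then identify that singular value with $\cos\theta_{\max}=\sqrt{1-d^2(\Umat,\Utrue)}>0$, exactly as the paper does.
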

The condition $d(\Umat,\Utrue)<1$ is satisfied except in pathological cases where the noise in $\mathbf{X}_3 = \mathbf{S}_3 + \mathbf{Q}_3$ causes its first $L$ principal components to be completely orthogonal to those of $\mathbf{S}_3$. We henceforth adopt a fifth assumption.
\begin{enumerate}
\setcounter{enumi}{4} 
\item {\bf (Nonzero correlation)} The matrix estimate $\Umat$ has nonzero correlation with the ground truth subspace, i.e., $d(\Umat,\Utrue)<1$. Similarly, we assume that $\range(\PKhat\Umat)$ is $L$-dimensional.
\end{enumerate}

We can furthermore establish deterministic lower and upper bounds on the oracle gain. To do this, we define
\begin{align}
a &:= \sigmaL^2(\Utrue^{\herm}\Uhat), \label{eq:adef}
\\
\Epar &:= (\PK-\PiS)\,\Uhat.  \label{eq:epardef}
\end{align}

\begin{theorem}[Deterministic bounds on oracle gain]
\label{thm:oraclebounds}
The oracle gain satisfies 
\[
\gorlower \le \gor \le \gorupper,
\]
where
\[
\gorlower := \sqrt{1-a}\;-\;\frac{\sqrt{\|\Epar\|_2^2}}{\sqrt{a+\|\Epar\|_2^2}}.
\]
and
\[
\gorupper := \sqrt{1-a} - \sqrt{1-\frac{a}{\rho^2}}.
\]
\end{theorem}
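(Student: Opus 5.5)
The plan is to reduce both bounds to a Rayleigh-quotient description of $d(\Uorc,\Utrue)$. First, $\Uhat$ and $\Utrue$ both have $L$ orthonormal columns, so $d(\Uhat,\Utrue)$ is the sine of the largest principal angle:
\[
d(\Uhat,\Utrue)=\sqrt{1-\sigmaL^2(\Utrue^{\herm}\Uhat)}=\sqrt{1-a}.
\]
Hence $\gor=\sqrt{1-a}-d(\Uorc,\Utrue)$, and it suffices to prove
\[
\sqrt{1-\tfrac{a}{\rho^2}}\;\le\; d(\Uorc,\Utrue)\;\le\; \frac{\|\Epar\|_2}{\sqrt{a+\|\Epar\|_2^2}} .
\]

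\textbf{Step 1: an orthogonal splitting.} Set $\mathbf Y:=\Uorc=\PK\Uhat$. By Lemma~\ref{lem:floor} (valid under Assumption~5), $\mathbf Y$ has rank $L$ and $\sigmaL(\mathbf Y)=\rho>0$. Write
\[
\mathbf Y=\PiS\Uhat+\Epar .
\]
From \eqref{eq:fixed} we have $\PK\PiS=\PiS\PK=\PiS$. Therefore $\PK-\PiS$ is the orthogonal projector onto $\Ksub\ominus\range(\Utrue)$, and each column of $\Epar$ is orthogonal to $\range(\Utrue)$. With $\mathbf B:=\Utrue^{\herm}\Uhat$, for every $\mathbf x\in\Cplx^L$ this gives
\[
\|\PiS\mathbf Y\mathbf x\|_2=\|\mathbf B\mathbf x\|_2,\qquad \|(\I-\PiS)\mathbf Y\mathbf x\|_2=\|\Epar\mathbf x\|_2,
\]
\[
\|\mathbf Y\mathbf x\|_2^2=\|\mathbf B\mathbf x\|_2^2+\|\Epar\mathbf x\|_2^2 .
\]

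\textbf{Step 2: variational form of the distance.} Since $\range(\mathbf Y)$ and $\range(\Utrue)$ are both $L$-dimensional, the projector-difference distance equals $\|(\I-\PiS)\mathbf P_{\range(\mathbf Y)}\|_2$. This can be expressed as a maximum over nonzero $\mathbf x$:
\[
d^2(\mathbf Y,\Utrue)=\max_{\mathbf x\neq 0}\frac{\|\Epar\mathbf x\|_2^2}{\|\mathbf B\mathbf x\|_2^2+\|\Epar\mathbf x\|_2^2},
\]
or equivalently
\[
1-d^2(\mathbf Y,\Utrue)=\min_{\mathbf x\neq0}\frac{\|\mathbf B\mathbf x\|_2^2}{\|\mathbf Y\mathbf x\|_2^2}.
\]
The point needing care here is the equal-dimension identity relating the projector-difference norm to the one-sided sine. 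I would invoke the standard fact that for equal-dimensional subspaces the two one-sided quantities $\|(\I-\mathbf P_1)\mathbf P_2\|_2$ and $\|(\I-\mathbf P_2)\mathbf P_1\|_2$ coincide and equal $\|\mathbf P_1-\mathbf P_2\|_2$. This is the main technical subtlety; everything else is elementary.

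\textbf{Step 3: upper bound on the distance, giving $\gorlower$.} Restrict to unit $\mathbf x$, so that $\|\mathbf B\mathbf x\|_2^2\ge\sigmaL^2(\mathbf B)=a$ and $\|\Epar\mathbf x\|_2^2\le\|\Epar\|_2^2$. The map $(s,t)\mapsto t/(s+t)$ is decreasing in $s$ and increasing in $t$. Hence the maximum in Step 2 is at most $\|\Epar\|_2^2/(a+\|\Epar\|_2^2)$, which yields $\gor\ge\gorlower$.

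\textbf{Step 4: lower bound on the distance, giving $\gorupper$.} Choose a unit $\mathbf x_0$ achieving $\|\mathbf B\mathbf x_0\|_2^2=a$. Since $\|\mathbf Y\mathbf x_0\|_2\ge\sigmaL(\mathbf Y)=\rho$, the minimum in Step 2 satisfies
\[
1-d^2(\mathbf Y,\Utrue)\le \frac{a}{\rho^2},
\]
so $d(\mathbf Y,\Utrue)\ge\sqrt{1-a/\rho^2}$. Note that $\rho^2\ge a$ holds automatically, because $\|\mathbf Y\mathbf x_0\|_2^2\ge\|\mathbf B\mathbf x_0\|_2^2$, so the square root is real. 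This gives $\gor\le\gorupper$.
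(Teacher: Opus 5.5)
Your proof is correct and is essentially the paper's: with the thin QR factorization $\PK\Umat=\mathbf Q\mathbf R$ and $\mathbf M:=\Utrue^{\herm}\Umat$ (your $\mathbf B$), your Pythagorean split is the paper's identity $\mathbf R^{\herm}\mathbf R=\mathbf M^{\herm}\mathbf M+\Epar^{\herm}\Epar$, and your Rayleigh quotient $\min_{\mathbf x\neq\mathbf 0}\|\mathbf B\mathbf x\|_2^2/\|\mathbf Y\mathbf x\|_2^2$ is its $b=\sigmaL^2(\mathbf M\mathbf R^{-1})$ (the paper reaches it via principal-angle cosines rather than your one-sided sine), while your test vector $\mathbf x_0$ plays the role of its bound $\sigmaL(\mathbf M\mathbf R^{-1})\le\sigmaL(\mathbf M)\|\mathbf R^{-1}\|_2=\sqrt{a}/\rho$. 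One slip: $\|\mathbf Y\mathbf x_0\|_2^2\ge\|\mathbf B\mathbf x_0\|_2^2$ only gives $\|\mathbf Y\mathbf x_0\|_2^2\ge a$, so to get $\rho^2\ge a$ you should apply the pointwise inequality at every unit $\mathbf x$ and minimize, $\rho^2=\min_{\|\mathbf x\|_2=1}\|\mathbf Y\mathbf x\|_2^2\ge\min_{\|\mathbf x\|_2=1}\|\mathbf B\mathbf x\|_2^2=a$, which is Lemma~\ref{lem:rholower}.
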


Theorem~\ref{thm:oraclebounds} warrants some interpretation. First, we note that in situations where $d(\Umat,\Utrue)$ is small to begin with, there is not much room for oracle gain. This corresponds to situations where $a$ is close to $1$, causing the first square root term in $\gorupper$ to be near $0$. Second, we note that if the largest principal angle between $\range(\Umat)$ and $\Ksub$ is near $0$, then $\rho$ will be close to $1$, and there will be little separation between the two square root terms in the upper bound. Note that because of~\eqref{eq:fixed}, if there exists at least one large angle between $\range(\Umat)$ and $\Ksub$, there must exist at least one large angle between $\range(\Umat)$ and $\range(\Utrue)$. On the other hand, if the largest principal angle between $\range(\Umat)$ and $\range(\Utrue)$ is large, $a$ will be close to $0$, which shrinks $\gorlower$. So there is a ``sweet spot'' for $a$; we revisit this in the numerical experiments. Third, in situations where $\|\Epar\|_2$ is large, $\gorlower$ will be suppressed. In our probabilistic analysis, we identify situations where $\|\Epar\|_2$ can be bounded from above with high probability. 

\subsection{Empirical loss}
\label{sec:emploss}

Recall that the empirical loss $\lem = d(\Uten,\Utrue)-d(\Uorc,\Utrue)$ quantifies how much of the oracle gain is ``given back'' due to the error in estimating $\Ksub$. 

Our next theorem establishes a deterministic bound on the empirical loss. To do this, we define
\begin{equation}
\eta := \spec{\PKhat-\PK}.
\label{eq:etadef}
\end{equation}

\begin{theorem}[Deterministic bound on empirical loss]
\label{thm:exact}
If $\eta < \rho$, the empirical loss satisfies
\[
\lem \le \lemupper,
\]
where
\[
\lemupper := \frac{\eta}{\rho-\eta}.
\]
\end{theorem}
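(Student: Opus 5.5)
We bound the empirical loss by the triangle inequality for the projector distance,
\[
\lem = d(\Uten,\Utrue)-d(\Uorc,\Utrue) \le d(\Uten,\Uorc),
\]
which holds because $d$ is the spectral norm of a difference of orthogonal projectors and is therefore a metric on subspaces. It then suffices to show $d(\PKhat\Umat,\PK\Umat)\le \eta/(\rho-\eta)$. Write $\mathbf F:=\PK\Umat=\Uorc$ and $\mathbf G:=\PKhat\Umat=\Uten$, so that
\[
\mathbf G-\mathbf F=(\PKhat-\PK)\Umat, \qquad \spec{\mathbf G-\mathbf F}\le \eta\,\spec{\Umat}=\eta,
\]
using that $\Umat$ has orthonormal columns. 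By Lemma~\ref{lem:floor} and Assumption 5, $\mathbf F$ has rank $L$ and $\sigmaL(\mathbf F)=\rho$. Weyl's inequality then gives $\sigmaL(\mathbf G)\ge \rho-\eta>0$ whenever $\eta<\rho$, so $\mathbf G$ also has rank $L$.

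Next we bound the distance between the ranges of two full-column-rank matrices of the same size. Since $\range(\mathbf F)$ and $\range(\mathbf G)$ have equal dimension $L$,
\[
d(\mathbf F,\mathbf G)=\spec{(\I-\Proj_{\range(\mathbf F)})\Proj_{\range(\mathbf G)}}.
\]
Because $\mathbf G$ has full column rank, $\Proj_{\range(\mathbf G)}=\mathbf G\mathbf G^{+}$, and $(\I-\Proj_{\range(\mathbf F)})\mathbf F=\mathbf 0$. Hence
\[
(\I-\Proj_{\range(\mathbf F)})\mathbf G\mathbf G^{+}=(\I-\Proj_{\range(\mathbf F)})(\mathbf G-\mathbf F)\mathbf G^{+},
\]
whose spectral norm is at most
\[
\spec{\mathbf G-\mathbf F}\,\spec{\mathbf G^{+}}\le \frac{\eta}{\sigmaL(\mathbf G)}\le\frac{\eta}{\rho-\eta}.
\]
This is exactly $\lemupper$, which completes the argument.

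The main point to get right is the identity $\spec{\Proj_1-\Proj_2}=\spec{(\I-\Proj_1)\Proj_2}$ for equal-dimensional subspaces. We will cite this standard fact about principal angles (the sines of the angles from either side coincide) rather than reprove it. We will also note that Assumption 5's requirement that $\range(\PKhat\Umat)$ be $L$-dimensional is automatically implied by $\eta<\rho$. An alternative route avoids the identity altogether: use $\spec{\Proj_{\range(\mathbf G)}-\Proj_{\range(\mathbf F)}}\le \max\{\spec{(\I-\Proj_{\range(\mathbf F)})\Proj_{\range(\mathbf G)}},\spec{(\I-\Proj_{\range(\mathbf G)})\Proj_{\range(\mathbf F)}}\}$ and bound the second term symmetrically by $\eta/\rho\le\eta/(\rho-\eta)$.
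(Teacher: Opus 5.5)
Your proposal is correct and follows essentially the same route as the paper. It uses the triangle inequality to reduce to $d(\Uten,\Uorc)$, Weyl's inequality to get $\sigmaL(\PKhat\Umat)\ge\rho-\eta>0$, and a bound of $\eta/\sigmaL(\PKhat\Umat)$ on the component of $\range(\PKhat\Umat)$ outside $\range(\PK\Umat)$, which the paper derives vector by vector rather than through your operator form $(\I-\Proj_{\range(\mathbf F)})(\mathbf G-\mathbf F)\mathbf G^{+}$. Two of your points go slightly beyond the paper's write-up: you make explicit the equal-dimension identity $\spec{\Proj_1-\Proj_2}=\spec{(\I-\Proj_1)\Proj_2}$, which the paper uses implicitly when it passes from the supremum over unit $\mathbf v$ to $d(\Uten,\Uorc)$, and you correctly observe that $\eta<\rho$ by itself forces $\range(\PKhat\Umat)$ to be $L$-dimensional.
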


Combining Theorems~\ref{thm:oraclebounds} and~\ref{thm:exact} with~\eqref{eq:overalltg2}, we see that in situations where $\lemupper < \gorlower$, we are guaranteed to have positive overall tensor gain. While this condition does not hold deterministically, we identify a range of scenarios where it holds with high probability.

\subsection{Bounding terms in $\gorlower$ and $\lemupper$}
\label{sec:prob}

In this section, we focus on bounding $\rho$, $a$, $\|\Epar\|_2$, and $\eta$, which are the terms appearing in $\lemupper$ and $\gorlower$.

~

\subsubsection{Bounding $\rho$ from below} 

To help bound $\lemupper$ from above, we bound $\rho$ from below. Recall from~\eqref{eq:rho} that $\rho = \sigmaL(\PK\Umat)$.

\begin{lemma}
[Lower bound on $\rho$]
\label{lem:rholower}
$\rho^2 \ge a$.
\end{lemma}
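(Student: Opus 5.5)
The plan is to compare the two quadratic forms $\Uhat^{\herm}\PK\Uhat$ and $\Uhat^{\herm}\PiS\Uhat$ in the Loewner order and then read off their smallest eigenvalues. Since $\PK$ is an orthogonal projector (self-adjoint and idempotent, as noted after Proposition~\ref{prop:main}), we have
\[
\rho^2=\sigmaL^2(\PK\Uhat)=\lmin\!\bigl(\Uhat^{\herm}\PK^{\herm}\PK\Uhat\bigr)=\lmin\!\bigl(\Uhat^{\herm}\PK\Uhat\bigr).
\]
In the same way, $a=\sigmaL^2(\Utrue^{\herm}\Uhat)=\lmin(\Uhat^{\herm}\Utrue\Utrue^{\herm}\Uhat)=\lmin(\Uhat^{\herm}\PiS\Uhat)$, because $\Uhat$ is $RM\times L$ and so both Gram matrices are $L\times L$.

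The key step is the operator inequality $\PiS\preceq\PK$. From \eqref{eq:fixed}, $\PK\Utrue=\Utrue$, so $\range(\Utrue)\subseteq\Ksub$. It follows that $\PK\PiS=\PiS=\PiS\PK$, and $\PK-\PiS$ is itself an orthogonal projector: it is self-adjoint, and expanding its square using these identities returns $\PK-\PiS$. Hence $\PK-\PiS\succeq 0$. Conjugating by $\Uhat$ preserves this ordering, which gives $\Uhat^{\herm}\PK\Uhat\succeq\Uhat^{\herm}\PiS\Uhat$.

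Finally, the smallest eigenvalue is monotone under the Loewner order; this follows from Courant--Fischer, or simply from $\mathbf x^{\herm}\mathbf B\mathbf x\ge\mathbf x^{\herm}\mathbf C\mathbf x$ evaluated at a minimizing unit vector. Therefore $\rho^2\ge a$.

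There is no real obstacle here. The only points that need care are verifying that $\PK-\PiS$ is a projector, which uses $\PK\PiS=\PiS$ from \eqref{eq:fixed}, and making sure both $\sigmaL$ quantities are taken for $L$-column matrices, so that they coincide with the minimal eigenvalues of the $L\times L$ Gram matrices.
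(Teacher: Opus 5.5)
Your proof is correct and is essentially the paper's argument, given in the proof of Lemma~\ref{lem:floor} (eq.~\eqref{eq:rhovsa}). There the paper also uses $\PiS\preceq\PK$, which follows from \eqref{eq:fixed}, to get $\|\PK\mathbf v\|_2\ge\|\PiS\mathbf v\|_2$ for every $\mathbf v$, and then compares $\min_{\|\mathbf w\|_2=1}\|\PK\Umat\mathbf w\|_2$ with $\min_{\|\mathbf w\|_2=1}\|\Utrue^{\herm}\Umat\mathbf w\|_2$; this is just the vector-norm form of your Loewner comparison of $\Umat^{\herm}\PK\Umat$ and $\Umat^{\herm}\PiS\Umat$.
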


We establish a lower bound on $a$ in a subsequent section. 

~

\subsubsection{Bounding $\eta$ from above}

To help bound $\lemupper$ from above, we also bound $\eta$ from above. Recall from~\eqref{eq:etadef} that the joint projection error $\eta = \spec{\PKhat-\PK}$.

The following lemma decouples the joint Kronecker error into individual mode-space errors.

\begin{lemma}[Kronecker telescoping bound]
\label{lem:kronecker_tele}
Let $\mathbf T_i$ and $\widehat{\mathbf T}_i$ ($i=1,2$) be the orthogonal projection matrices onto the true and estimated mode-$i$ signal subspaces. Define $\delta_i := \spec{\widehat{\mathbf T}_i-\mathbf T_i}$ for $i=1,2$. Then, the distance between the joint tensor projectors satisfies:
\[
\eta \le \delta_1 + \delta_2.
\]
\end{lemma}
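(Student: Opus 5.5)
The plan is a telescoping split of the Kronecker difference, after which each term is bounded using the multiplicativity of the spectral norm under Kronecker products. We insert the mixed term $\widehat{\mathbf T}_2\otimes\mathbf T_1$ and write
\begin{align*}
\PKhat-\PK &= \widehat{\mathbf T}_2\otimes\widehat{\mathbf T}_1-\mathbf T_2\otimes\mathbf T_1\\
&= \widehat{\mathbf T}_2\otimes(\widehat{\mathbf T}_1-\mathbf T_1) + (\widehat{\mathbf T}_2-\mathbf T_2)\otimes\mathbf T_1 .
\end{align*}
This identity follows from bilinearity of the Kronecker product: the two cross terms $\widehat{\mathbf T}_2\otimes\mathbf T_1$ cancel.

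Next we apply the triangle inequality together with $\spec{\mathbf B\otimes\mathbf C}=\spec{\mathbf B}\,\spec{\mathbf C}$. That identity holds because the singular values of a Kronecker product are the pairwise products of the singular values of its factors. We then use the fact that $\widehat{\mathbf T}_2$ and $\mathbf T_1$ are orthogonal projectors onto $L$-dimensional subspaces, with $L\ge 1$, so their spectral norms equal $1$. Indeed, $\widehat{\mathbf T}_2=\mathbf U_{\mathbf X_2}(:,1:L)\mathbf U_{\mathbf X_2}(:,1:L)^H$ has orthonormal columns in its factor, and $\mathbf T_1=\mathbf U_{\mathbf S_1}\mathbf U_{\mathbf S_1}^H$ likewise. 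This gives
\[
\eta\le \spec{\widehat{\mathbf T}_2}\,\delta_1+\delta_2\,\spec{\mathbf T_1}=\delta_1+\delta_2 .
\]

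There is no real obstacle. The only point to state carefully is the norm identity for Kronecker products. We justify it via SVDs: if $\mathbf B=\mathbf U_B\Sig_B\mathbf V_B^H$ and $\mathbf C=\mathbf U_C\Sig_C\mathbf V_C^H$, then $\mathbf B\otimes\mathbf C=(\mathbf U_B\otimes\mathbf U_C)(\Sig_B\otimes\Sig_C)(\mathbf V_B\otimes\mathbf V_C)^H$, and the outer factors have orthonormal columns. It also suffices to note that each projector has norm at most $1$, so the inequality holds even without exact equality.
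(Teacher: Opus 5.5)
Your proof is correct and follows essentially the same route as the paper: a telescoping split of $\widehat{\mathbf T}_2\otimes\widehat{\mathbf T}_1-\mathbf T_2\otimes\mathbf T_1$, the triangle inequality, multiplicativity of the spectral norm under Kronecker products, and unit norms of the projectors. The only difference is that you insert the mixed term $\widehat{\mathbf T}_2\otimes\mathbf T_1$, whereas the paper inserts $\mathbf T_2\otimes\widehat{\mathbf T}_1$; this choice has no effect on the argument or the bound.
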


By utilizing standard subspace perturbation theory, we can explicitly relate each $\delta_i$ directly to the underlying signal-to-noise dynamics of the individual unfoldings.

\begin{corollary}[Perturbation bound for $\eta$]
\label{cor:wedin_sin}
Let $\mathbf{X}_i = \mathbf{S}_i + \mathbf{Q}_i$ represent the mode-$i$ unfoldings of the noisy data tensor $\boldsymbol{\mathcal{X}}$, where $\mathbf{S}_i$ has rank $L$. Define $\gamma_i := \sigmaL(\mathbf S_i)$ for $i=1,2$. If the spectral norm of the unfolded noise matrix satisfies $\spec{\mathbf{Q}_i} < \gamma_i$ for $i=1,2$, then:
\begin{equation}
\eta \le \frac{\spec{\mathbf{Q}_1}}{\gamma_1 - \spec{\mathbf{Q}_1}} + \frac{\spec{\mathbf{Q}_2}}{\gamma_2 - \spec{\mathbf{Q}_2}}.
\label{eq:etaupperQ1Q2}
\end{equation}
\end{corollary}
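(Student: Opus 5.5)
The plan is to combine Lemma~\ref{lem:kronecker_tele} with a Wedin-type $\sin\Theta$ bound applied separately to each mode unfolding. By Lemma~\ref{lem:kronecker_tele}, $\eta \le \delta_1+\delta_2$ with $\delta_i=\spec{\widehat{\mathbf T}_i-\mathbf T_i}$. So it suffices to show, for $i=1,2$,
\[
\delta_i \le \frac{\spec{\mathbf Q_i}}{\gamma_i-\spec{\mathbf Q_i}}
\]
whenever $\spec{\mathbf Q_i}<\gamma_i$.

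Fix $i$. Since $\mathbf S_i$ has rank $L$ and $\gamma_i=\sigmaL(\mathbf S_i)>0$, Weyl's inequality gives two facts:
\[
\sigmaL(\mathbf X_i)\ge \gamma_i-\spec{\mathbf Q_i}>0,
\qquad
\sigma_{L+1}(\mathbf X_i)\le \sigma_{L+1}(\mathbf S_i)+\spec{\mathbf Q_i}=\spec{\mathbf Q_i}.
\]
Hence the top-$L$ left singular subspace of $\mathbf X_i$ is well defined and $L$-dimensional, and $\widehat{\mathbf T}_i$ from \eqref{eq:t1hat}--\eqref{eq:t2hat} is the orthogonal projector onto it.

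Next apply Wedin's theorem to $\mathbf S_i$ and $\mathbf X_i=\mathbf S_i+\mathbf Q_i$. I will use the simple one-sided form, which only needs a gap between $\sigmaL(\mathbf X_i)$ and the zero trailing singular values of $\mathbf S_i$. Write $\widehat{\mathbf U}_i:=\mathbf U_{\mathbf X_i}(:,1:L)$ and $\widehat{\mathbf V}_i:=\mathbf V_{\mathbf X_i}(:,1:L)$, so that $\mathbf X_i\widehat{\mathbf V}_i=\widehat{\mathbf U}_i\widehat{\boldsymbol\Sigma}_i$. Project onto the complement of the true subspace with $\I-\mathbf T_i$. Because $(\I-\mathbf T_i)\mathbf S_i=\mathbf 0$, we get
\[
(\I-\mathbf T_i)\widehat{\mathbf U}_i\widehat{\boldsymbol\Sigma}_i=(\I-\mathbf T_i)\mathbf Q_i\widehat{\mathbf V}_i .
\]
Taking spectral norms and using $\sigma_{\min}(\widehat{\boldsymbol\Sigma}_i)=\sigmaL(\mathbf X_i)$ gives
\[
\spec{(\I-\mathbf T_i)\widehat{\mathbf U}_i}\le \frac{\spec{\mathbf Q_i}}{\sigmaL(\mathbf X_i)}\le \frac{\spec{\mathbf Q_i}}{\gamma_i-\spec{\mathbf Q_i}}.
\]

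Finally, $\widehat{\mathbf T}_i$ and $\mathbf T_i$ are projectors of equal rank $L$. For such projectors the standard identity (the CS decomposition) gives
\[
\spec{\widehat{\mathbf T}_i-\mathbf T_i}=\spec{(\I-\mathbf T_i)\widehat{\mathbf T}_i}=\spec{(\I-\mathbf T_i)\widehat{\mathbf U}_i},
\]
so the bound on $\delta_i$ follows. Summing over $i=1,2$ yields \eqref{eq:etaupperQ1Q2}.

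The main care point is this last identification of $\spec{\widehat{\mathbf T}_i-\mathbf T_i}$ with the one-sided quantity $\spec{(\I-\mathbf T_i)\widehat{\mathbf U}_i}$. It needs equal dimensions, which holds because $\sigmaL(\mathbf X_i)>0$. The Wedin step itself is routine once the gap $\gamma_i-\spec{\mathbf Q_i}>0$ is in place.
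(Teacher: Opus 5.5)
Your proof is correct and follows the paper's route: the telescoping bound $\eta\le\delta_1+\delta_2$, a Wedin $\sin\Theta$ bound whose gap is $\sigmaL(\mathbf X_i)$ (since $\sigma_{L+1}(\mathbf S_i)=0$), and Weyl's inequality $\sigmaL(\mathbf X_i)\ge\gamma_i-\spec{\mathbf Q_i}$. The only difference is that you derive the one-sided Wedin estimate by hand, from $(\I-\mathbf T_i)\widehat{\mathbf U}_i\widehat{\boldsymbol\Sigma}_i=(\I-\mathbf T_i)\mathbf Q_i\widehat{\mathbf V}_i$ and the equal-rank projector identity, where the paper cites the theorem; as a minor aside, your two Weyl inequalities do not imply uniqueness of the top-$L$ subspace (that would need $\gamma_i>2\spec{\mathbf Q_i}$), but uniqueness is not needed, since $\widehat{\mathbf T}_i$ has rank $L$ by construction and the argument works for any choice of top-$L$ singular vectors.
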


We next derive non-asymptotic concentration bounds for the spectral norms $\spec{\mathbf{Q}_1}$ and $\spec{\mathbf{Q}_2}$ appearing in~\eqref{eq:etaupperQ1Q2}. We first bound (with high probability) the spectral norm of the mode-1 noise unfolding. To state this result, we define
\[
K := N - M + 1.
\]

\begin{lemma}[Mode-1 noise concentration]
\label{lem:mode1noise} For any $t_1 > 0$, the spectral norm of the mode-1 noise unfolding satisfies:
\begin{align*}
& \mathbb{P}\left( \spec{\mathbf{Q}_1} > \sigma \sqrt{2\min(M, K)} \left( \sqrt{R} + \sqrt{N} \right) + t_1 \right) \\
& \quad \le \exp\left( -\frac{t_1^2}{\sigma^2 \min(M, K)} \right).
\end{align*}
\end{lemma}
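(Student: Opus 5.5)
The plan is to write the mode-1 noise unfolding as a fixed linear map applied to the base noise matrix $\mathbf Q\in\Cplx^{R\times N}$, and then combine a Gordon-type bound on $\E\spec{\mathbf Q}$ with Gaussian Lipschitz concentration.

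\emph{Structural step.} By construction, $\mathbf Q(n)=\mathbf Q(:,n:n+M-1)$, so every column of $\mathbf Q_1=[\mathbf Q(0)~\cdots~\mathbf Q(N-M)]$ is a column of $\mathbf Q$. Hence $\mathbf Q_1=\mathbf Q\mathbf J$, where $\mathbf J\in\{0,1\}^{N\times MK}$ is a selection matrix with exactly one $1$ per column. The matrix $\mathbf J\mathbf J^{\top}$ is diagonal. Its $i$-th entry is the number of pairs $(n,m)$ with $0\le n\le K-1$, $0\le m\le M-1$ and $n+m=i$, which is at most $\min(M,K)$. Therefore $\spec{\mathbf J}\le\sqrt{\min(M,K)}$, and $\spec{\mathbf Q_1}\le\spec{\mathbf Q}\sqrt{\min(M,K)}$.

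\emph{Expectation step.} Write $\mathbf Q=\mathbf G_r+j\mathbf G_i$, where $\mathbf G_r$ and $\mathbf G_i$ are independent real $R\times N$ Gaussian matrices with i.i.d.\ $\mathcal N(0,\sigma^2/2)$ entries. By Gordon's inequality (Davidson--Szarek), $\E\spec{\mathbf G_r},\E\spec{\mathbf G_i}\le\frac{\sigma}{\sqrt2}(\sqrt R+\sqrt N)$. The triangle inequality then gives $\E\spec{\mathbf Q}\le\sqrt2\,\sigma(\sqrt R+\sqrt N)$. Consequently
\[
\E\spec{\mathbf Q_1}\le\sigma\sqrt{2\min(M,K)}\,(\sqrt R+\sqrt N),
\]
which is exactly the centering term in the statement.

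\emph{Concentration step.} Regard $f(\mathbf G_r,\mathbf G_i):=\spec{(\mathbf G_r+j\mathbf G_i)\mathbf J}$ as a function of the real Gaussian vector of dimension $2RN$. For two inputs,
\[
|f-f'|\le\spec{(\Delta_r+j\Delta_i)\mathbf J}\le\spec{\mathbf J}\,\fnorm{\Delta_r+j\Delta_i}\le\sqrt{\min(M,K)}\,\bigl\|(\Delta_r,\Delta_i)\bigr\|_2 ,
\]
so $f$ is $\sqrt{\min(M,K)}$-Lipschitz in the Euclidean norm. The Gaussian concentration inequality (Tsirelson--Ibragimov--Sudakov) for coordinates of variance $s^2=\sigma^2/2$ gives
\[
\Prob(f>\E f+t_1)\le\exp\!\bigl(-t_1^2/(2\min(M,K)s^2)\bigr)=\exp\!\bigl(-t_1^2/(\sigma^2\min(M,K))\bigr).
\]
Replacing $\E f$ by its upper bound from the expectation step yields the claim.

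\emph{Main obstacle.} The only delicate point is the multiplicity count giving $\spec{\mathbf J}\le\sqrt{\min(M,K)}$. The Hankel/sliding-window structure makes the entries of $\mathbf Q_1$ dependent, so a direct i.i.d.\ random-matrix bound on $\mathbf Q_1$ is unavailable. Routing the argument through $\mathbf J$ is what simultaneously controls the mean and the Lipschitz constant.
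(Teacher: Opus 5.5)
Your proposal is correct and follows essentially the same route as the paper: the selection-matrix factorization $\mathbf Q_1=\mathbf Q\mathbf J$ with column multiplicity at most $\min(M,K)$, Gordon's bound on the real and imaginary parts plus the triangle inequality for the mean, and Gaussian Lipschitz concentration with the same constant. The only cosmetic difference is that the paper obtains the $\sqrt{\min(M,K)}$ Lipschitz constant from a Frobenius-norm entry count, whereas you obtain it from $\spec{\mathbf J}$ and reuse that bound for the mean as well.
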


Next we bound (with high probability) the spectral norm of the mode-2 noise unfolding.

\begin{lemma}[Mode-2 noise concentration]
\label{lem:mode2noise}
For any $t_2 > 0$, the spectral norm of the mode-2 noise unfolding satisfies:
\begin{align*}
& \mathbb{P}\left( \spec{\mathbf{Q}_2} > \sigma \sqrt{2K} (\sqrt{R} + \sqrt{N}) + t_2 \right) \\
& \quad \le \exp\left( -\frac{t_2^2}{\sigma^2 \min(M, K)} \right).
\end{align*}
\end{lemma}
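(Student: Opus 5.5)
The idea is to write the mode-2 unfolding $\mathbf Q_2$ as a structured linear image of the base noise matrix $\mathbf Q\in\Cplx^{R\times N}$. This splits the argument into two parts: a deterministic bound on $\E\spec{\mathbf Q_2}$ via $\spec{\mathbf Q}$, and Gaussian concentration for the map $\mathbf Q\mapsto\spec{\mathbf Q_2}$, which is Lipschitz. The same template presumably yields Lemma~\ref{lem:mode1noise}; only the block decomposition changes.

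\emph{Step 1 (structure).} The entry of $\mathbf Q_2$ in row $m\in\{0,\dots,M-1\}$ and column $(r,k)$, with $k\in\{0,\dots,K-1\}$, is $q_r(k+m)$. Group the columns by the time index $k$. The $k$-th column block is the $M\times R$ matrix $\mathbf J_k\mathbf Q^{\top}$, where $\mathbf J_k\in\{0,1\}^{M\times N}$ selects rows $k,\dots,k+M-1$ of $\mathbf Q^\top$. So, up to a column permutation, which does not change the spectral norm, $\mathbf Q_2=[\mathbf J_0\mathbf Q^\top,\dots,\mathbf J_{K-1}\mathbf Q^\top]$. Since $\spec{\mathbf J_k}=1$, we get $\spec{\mathbf Q_2}^2\le\sum_k\spec{\mathbf J_k\mathbf Q^\top}^2\le K\spec{\mathbf Q}^2$, that is, $\spec{\mathbf Q_2}\le\sqrt K\,\spec{\mathbf Q}$.

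\emph{Step 2 (expectation).} Write $\mathbf Q=\tfrac{\sigma}{\sqrt2}(\mathbf G_1+j\mathbf G_2)$, where $\mathbf G_1,\mathbf G_2$ are independent $R\times N$ matrices with standard real Gaussian entries. By the triangle inequality and Gordon's bound $\E\spec{\mathbf G_i}\le\sqrt R+\sqrt N$, we obtain $\E\spec{\mathbf Q}\le\sqrt2\,\sigma(\sqrt R+\sqrt N)$. Combined with Step 1 this gives $\E\spec{\mathbf Q_2}\le\sigma\sqrt{2K}(\sqrt R+\sqrt N)$, which is exactly the centering term in the lemma.

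\emph{Step 3 (concentration).} Let $f(\mathbf G_1,\mathbf G_2):=\spec{\mathbf Q_2}$, viewed as a function of the $2RN$ standard real Gaussian coordinates. Each sample $q_r(n)$ appears in $\mathbf Q_2$ once for every pair $(m,k)$ with $m+k=n$, and there are at most $\min(M,K)$ such pairs. Hence $\spec{\mathbf Q_2-\mathbf Q_2'}\le\fnorm{\mathbf Q_2-\mathbf Q_2'}\le\sqrt{\min(M,K)}\,\fnorm{\mathbf Q-\mathbf Q'}$. In the real coordinates this means $f$ is Lipschitz with constant $\sigma\sqrt{\min(M,K)/2}$. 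The Gaussian concentration inequality $\Prob(f>\E f+t)\le\exp(-t^2/(2\,\mathrm{Lip}^2))$ then gives tail $\exp(-t_2^2/(\sigma^2\min(M,K)))$. Inserting the bound from Step 2 on $\E f$ completes the proof.

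The main point of care is Step 3. The Lipschitz constant must come from the \emph{entry multiplicity} $\min(M,K)$ and not from the crude factor $\sqrt K$ of Step 1. The mean uses the block-counting bound, while the tail uses the Frobenius/multiplicity bound. The factor $\tfrac12$ from the complex variance convention also has to be tracked, so that the exponent comes out as $\sigma^2\min(M,K)$ and not twice that.
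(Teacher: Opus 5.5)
Your proposal is correct and follows essentially the same route as the paper. You both bound $\E\spec{\mathbf Q_2}$ via $\spec{\mathbf Q_2}\le\sqrt K\,\spec{\mathbf Q}$ from the $K$ shifted column blocks and $\E\spec{\mathbf Q}\le\sqrt2\,\sigma(\sqrt R+\sqrt N)$, and both get the tail from Gaussian concentration with the entry-multiplicity Lipschitz constant $\sigma\sqrt{\min(M,K)/2}$ in the real coordinates. The paper derives the block bound through $\spec{\mathbf Q_2\mathbf z}\le\spec{\mathbf Q}\sum_k\spec{\mathbf z^{(k)}}$ and Cauchy--Schwarz rather than through your sum of block spectral norms squared, which is an inessential difference.
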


Combining Lemma~\ref{lem:mode1noise} and Lemma~\ref{lem:mode2noise} with Corollary~\ref{cor:wedin_sin} provides a probabilistic upper bound for the joint projection error $\eta$.

~

\subsubsection{Bounding $a$ from below}

To help with bounding both $\gorlower$ and $\lemupper$, we establish a lower bound on $a$. Recall from~\eqref{eq:adef} that $a = \sigmaL^2(\Utrue^{\herm}\Uhat)$.

\begin{lemma}[Lower bound on baseline subspace alignment]
\label{lemma:baseline_alignment_lower}
Let $\Utrue \in \mathbb{C}^{RM \times L}$ be the true mode-3 signal subspace basis, and let $\Umat \in \mathbb{C}^{RM \times L}$ be the baseline matrix-based estimate obtained from the $L$ dominant left singular vectors of the mode-3 unfolding $\mathbf{X}_3 = \mathbf{S}_3 + \mathbf{Q}_3$. Let $\gamma_3 = \sigmaL(\mathbf{S}_3)$ denote the singular value gap of the true signal matrix. If $\|\mathbf{Q}_3\|_2 < \frac{1}{2}\gamma_3$, then
\begin{equation}
    a \ge 1 - \left( \frac{\|\mathbf{Q}_3\|_2}{\gamma_3 - \|\mathbf{Q}_3\|_2} \right)^2.
\end{equation}
\end{lemma}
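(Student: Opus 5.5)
The plan is to reduce the bound on $a$ to a Wedin-type $\sin\Theta$ estimate, in the same spirit as Corollary~\ref{cor:wedin_sin}.

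First, since $\Utrue$ and $\Umat$ both have $L$ orthonormal columns, the singular values of $\Utrue^{\herm}\Umat$ are the cosines of the principal angles between the two ranges. Hence
\[
\sigmaL^2(\Utrue^{\herm}\Umat) = 1 - \sin^2\theta_{\max} = 1 - d(\Umat,\Utrue)^2 .
\]
One way to see this is to write $\Umat = \PiS\Umat + (\I-\PiS)\Umat$. This gives $\Umat^{\herm}\Umat = \I = (\Utrue^{\herm}\Umat)^{\herm}(\Utrue^{\herm}\Umat) + \Umat^{\herm}(\I-\PiS)\Umat$. Consequently $\sigmaL^2(\Utrue^{\herm}\Umat) = 1-\spec{(\I-\PiS)\Umat}^2$, and for equal-dimensional subspaces $\spec{(\I-\PiS)\Umat} = d(\Umat,\Utrue)$. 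So it suffices to show
\[
d(\Umat,\Utrue) \le \frac{\spec{\Qt}}{\gamma_3-\spec{\Qt}} .
\]

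Second, I bound $d(\Umat,\Utrue)$ by Wedin's $\sin\Theta$ theorem applied to $\Xt = \St + \Qt$. Here $\St$ has rank exactly $L$ by Proposition~\ref{prop:rank}, so its $(L+1)$-th singular value is $0$. Weyl's inequality gives $\sigma_{L+1}(\Xt) \le \spec{\Qt}$. The gap between $\sigmaL(\St)=\gamma_3$ and the tail spectrum of $\Xt$ is therefore at least $\gamma_3 - \spec{\Qt} > 0$. Wedin's bound then controls $\spec{(\I-\PiS)\Umat}$ by
\[
\frac{\max\{\spec{\Qt\mathbf V_{\mathbf S_3}},\,\spec{\Qt^{\herm}\Utrue}\}}{\gamma_3-\spec{\Qt}} \le \frac{\spec{\Qt}}{\gamma_3-\spec{\Qt}},
\]
which is exactly the same form used for $\delta_i$ in Corollary~\ref{cor:wedin_sin}. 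I expect this step to be the main obstacle, because it requires picking the right variant of Wedin's theorem. If the version I use places the singular-value gap on the perturbed matrix instead, I will convert it with Weyl, using $\sigmaL(\Xt) \ge \gamma_3 - \spec{\Qt}$.

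Third, I square the bound and substitute it into the identity from the first step, which gives the claim. The hypothesis $\spec{\Qt} < \gamma_3/2$ makes the ratio strictly less than $1$. This guarantees that the bound is nontrivial, so $a>0$, consistent with Assumption~5. It also guarantees the positivity of the gap needed in the second step.
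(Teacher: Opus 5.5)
Your proposal is correct and follows essentially the same route as the paper: the identity $a = 1 - d^2(\Umat,\Utrue)$, followed by Wedin's $\sin\Theta$ bound $d(\Umat,\Utrue)\le \spec{\Qt}/(\gamma_3-\spec{\Qt})$. The only difference is cosmetic: you place the gap between $\gamma_3$ and $\sigma_{L+1}(\Xt)$, whereas the paper's Wedin usage (as in Corollary~\ref{cor:wedin_sin}) bounds by $\sigmaL(\Xt)$ and then applies Weyl. Both variants are valid and give the same final bound.
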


Next we bound the spectral norm of the mode-3 noise unfolding.

\begin{lemma}[Mode-3 noise concentration]
\label{lemma:mode3_noise_concentration}
For any tail parameter $t_3 > 0$, the spectral norm of the mode-3 noise unfolding satisfies:
\begin{align*}
& \mathbb{P}\left( \|\mathbf{Q}_3\|_2 > \sigma \sqrt{2M} \left( \sqrt{R} + \sqrt{N} \right) + t_3 \right) \\
& \quad \le \exp\left( -\frac{t_3^2}{\sigma^2 \min(M, K)} \right).
\end{align*}
\end{lemma}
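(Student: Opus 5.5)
The plan is to mirror what Lemmas~\ref{lem:mode1noise} and~\ref{lem:mode2noise} must do: bound $\E\|\Qt\|_2$ through the Hankel structure of $\Qt$, then add Gaussian concentration of a Lipschitz function.

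\emph{Step 1 (block-Hankel representation).} By construction, the entry of $\Qt$ in row $r+Rm$ and column $n$ is $q_r(n+m)$, for $0\le r\le R-1$, $0\le m\le M-1$, $0\le n\le K-1$. Let $\mathbf J_m\in\{0,1\}^{N\times K}$ be the column-selection matrix picking columns $m,\dots,m+K-1$ of the base noise matrix $\mathbf Q\in\Cplx^{R\times N}$. Then $\Qt$ is the vertical stack of the $M$ blocks $\mathbf Q\mathbf J_m$, $m=0,\dots,M-1$. Since each $\mathbf J_m$ has orthonormal columns, $\|\mathbf Q\mathbf J_m\|_2\le\|\mathbf Q\|_2$. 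For a vertical stack, $\|\Qt\|_2^2\le\sum_m\|\mathbf Q\mathbf J_m\|_2^2$, which gives the deterministic bound
\[
\|\Qt\|_2\le\sqrt{M}\,\|\mathbf Q\|_2 .
\]

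\emph{Step 2 (expectation).} Write $\mathbf Q=\tfrac{\sigma}{\sqrt2}(\mathbf G_1+j\mathbf G_2)$, where $\mathbf G_1,\mathbf G_2$ are independent $R\times N$ matrices with i.i.d.\ $\mathcal N(0,1)$ entries. The standard Gordon/Slepian bound $\E\|\mathbf G_i\|_2\le\sqrt R+\sqrt N$ and the triangle inequality give $\E\|\mathbf Q\|_2\le\sigma\sqrt2(\sqrt R+\sqrt N)$. Combined with Step 1,
\[
\E\|\Qt\|_2\le\sigma\sqrt{2M}(\sqrt R+\sqrt N),
\]
which is exactly the centering term in the statement.

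\emph{Step 3 (Lipschitz constant and concentration).} View $f(\mathbf G_1,\mathbf G_2):=\|\Qt\|_2$ as a function of the $2RN$ standard real Gaussians. The map $\mathbf Q\mapsto\Qt$ is linear, so $f$ is Lipschitz by the triangle inequality:
\[
\bigl|\|\Qt\|_2-\|\Qt'\|_2\bigr|\le\|\Qt-\Qt'\|_F .
\]
Each entry $q_r(i)$ of $\mathbf Q$ appears in $\Qt$ once for every pair $(m,n)$ with $m+n=i$. There are at most $\min(M,K)$ such pairs, so
\[
\|\Qt-\Qt'\|_F\le\sqrt{\min(M,K)}\,\|\mathbf Q-\mathbf Q'\|_F=\sigma\sqrt{\min(M,K)/2}\;\|(\mathbf G_1,\mathbf G_2)-(\mathbf G_1',\mathbf G_2')\|_F .
\]
Hence $f$ is Lipschitz with constant $\sigma\sqrt{\min(M,K)/2}$. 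Gaussian concentration (Tsirelson--Ibragimov--Sudakov) gives
\[
\Prob\bigl(f>\E f+t_3\bigr)\le\exp\!\bigl(-t_3^2/(\sigma^2\min(M,K))\bigr).
\]
Replacing $\E f$ by the upper bound from Step 2 only enlarges the event's threshold, so the same tail bound holds for the stated threshold, which finishes the proof.

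The main obstacle is Step 3. One has to count carefully how many times a single base noise sample $q_r(i)$ is repeated in the Hankel unfolding, which is what produces the factor $\min(M,K)$ rather than $M$. One also has to keep the complex-to-real normalization consistent: variance $\sigma^2/2$ per real coordinate is what turns $2L^2$ in the exponent into $\sigma^2\min(M,K)$. The expectation step is routine once the stacking bound of Step 1 is in place.
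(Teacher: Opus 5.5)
Your proposal is correct and follows the paper's proof essentially step for step. The paper bounds $\|\mathbf{Q}_3\|_2\le\sqrt{M}\,\|\mathbf Q\|_2$ by writing $\mathbf{Q}_3\mathbf z$ as a stack of the vectors $\mathbf Q\mathbf g_m(\mathbf z)$, where the shifted copies $\mathbf g_m(\mathbf z)$ are exactly your $\mathbf J_m\mathbf z$; it then uses $\mathbb{E}\|\mathbf Q\|_2\le\sigma\sqrt2(\sqrt R+\sqrt N)$ and gets the tail from Gaussian concentration with the same Lipschitz constant $\sigma\sqrt{\min(M,K)/2}$, coming from the same count of at most $\min(M,K)$ repetitions per noise sample.
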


Combining Lemma~\ref{lemma:baseline_alignment_lower} with Lemma~\ref{lemma:mode3_noise_concentration} provides a probabilistic lower bound for $a$.

~

\subsubsection{Bounding $a$ from above}

To help with bounding $\gorlower$ from below, we also establish an upper bound on $a = \sigmaL^2(\Utrue^{\herm}\Uhat)$.

For this analysis, let $\Uperp\in\mathbb{C}^{RM\times(RM-L)}$ be an orthonormal basis for
$\range(\Utrue)^{\perp}$. Let $\Uhat,\Vhat$ hold the top-$L$ left/right singular vectors of $\Xt$,
with $\uLh,\vLh$ their $L$-th columns and $\widehat\sigma_L:=\sigmaL(\Xt)$. Let $\vL$ be the
$L$-th right singular vector of $\St$.  

\begin{lemma}[Upper bound on baseline subspace alignment]\label{lem:det}
If $\|\Qt\|_2<\gamma_3$, then
\begin{equation*}
a \le 1-\!\left(\frac{\big\|\UperpH\Qt\,\vLh\big\|_2}{\gamma_3+\|\Qt\|_2}\right)^{2}.    
\end{equation*}
\end{lemma}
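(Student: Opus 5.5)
We need $1-a \ge \left(\|\UperpH\Qt\vLh\|_2/(\gamma_3+\|\Qt\|_2)\right)^2$. Note that $1-a = 1-\sigmaL^2(\Utrue^{\herm}\Uhat) = \spec{\UperpH\Uhat}^2$. This holds because $\Uhat$ has orthonormal columns, so $\Uhat^{\herm}\Uhat = \Uhat^{\herm}\PiS\Uhat + \Uhat^{\herm}\Uperp\UperpH\Uhat = \I$. Consequently the largest eigenvalue of $\Uhat^{\herm}\Uperp\UperpH\Uhat$ equals $1-\lmin(\Uhat^{\herm}\PiS\Uhat) = 1-a$. It therefore suffices to exhibit one unit vector whose image under $\UperpH\Uhat$ is large. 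We take the $L$-th coordinate vector, giving $\spec{\UperpH\Uhat}\ge\norm{\UperpH\uLh}_2$.

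Next we express $\uLh$ through the singular value equation $\Xt\vLh = \widehat\sigma_L\uLh$. Since $\UperpH\St = \mathbf 0$ (the range of $\St$ is $\range(\Utrue)$), we get $\widehat\sigma_L\,\UperpH\uLh = \UperpH(\St+\Qt)\vLh = \UperpH\Qt\vLh$. Hence
\[
\norm{\UperpH\uLh}_2 = \frac{\norm{\UperpH\Qt\vLh}_2}{\widehat\sigma_L}.
\]

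It remains to bound $\widehat\sigma_L$ from above. By Weyl's inequality for singular values, $\widehat\sigma_L = \sigmaL(\St+\Qt)\le \sigmaL(\St)+\spec{\Qt} = \gamma_3+\spec{\Qt}$. The hypothesis $\spec{\Qt}<\gamma_3$ guarantees $\widehat\sigma_L\ge\gamma_3-\spec{\Qt}>0$, so the division above is legitimate and $\uLh$ is well defined. Combining the three steps gives $1-a\ge \norm{\UperpH\Qt\vLh}_2^2/(\gamma_3+\spec{\Qt})^2$, which rearranges to the claim.

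The only delicate point is the identity $1-a=\spec{\UperpH\Uhat}^2$. Once this sine–cosine relation for principal angles is in place, the remaining steps are one-line consequences of the SVD equation and Weyl's inequality. Notably, $\vL$ is not actually needed for this bound.
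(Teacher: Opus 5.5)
Your proof is correct and follows essentially the same argument as the paper: the SVD relation $\Xt\vLh=\widehat\sigma_L\uLh$ combined with $\UperpH\St=\mathbf 0$, the bound $\|\UperpH\uLh\|_2\le\|\UperpH\Uhat\|_2$, and Weyl's upper bound $\widehat\sigma_L\le\gamma_3+\|\Qt\|_2$. The only difference is cosmetic: you derive $1-a=\|\UperpH\Uhat\|_2^2$ directly from $\Uhat^{\herm}\Uhat=\I$, whereas the paper uses $\|\UperpH\Uhat\|_2=d(\Uhat,\Utrue)$ together with $a=1-d^2(\Uhat,\Utrue)$.
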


The norm $\|\mathbf{Q}_3\|_2$ can be bounded probabilistically using Lemma~\ref{lemma:mode3_noise_concentration}. We also probabilistically bound $\big\|\UperpH\Qt\,\vLh\big\|_2$, which appears in the upper bound on $a$.

\begin{lemma}[Probabilistic floor on the weakest-direction leakage]\label{lem:prob}
Suppose $K\le RM-L$. Fix $\varepsilon\in(0,\tfrac12)$ and  $\mu\in(0,1)$, and define
\begin{equation}
p_{\mathrm{w}}(\varepsilon,\mu)
:=2\Big(1+\tfrac{2}{\varepsilon}\Big)^{2K}
\exp\!\Big(-\frac{\mu^2(1-2\varepsilon)^2\,RM}{4\min(M,K)}\Big).
\label{eq:pw}
\end{equation}
Then with probability at least
$1-p_{\mathrm{w}}(\varepsilon,\mu)$,
\begin{equation}
\bigl\|\UperpH\mathbf Q_3\widehat{\mathbf v}_L\bigr\|_2
\;\ge\;
\sigma\sqrt{\bigl[\,RM(1-\mu)-L\min(M,K)\,\bigr]_+}\,.
\label{star_mu}
\end{equation}
\end{lemma}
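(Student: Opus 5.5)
The plan is to prove a bound that holds \emph{uniformly} over all unit vectors $\mathbf v\in\Cplx^{K}$, so that the data dependence of $\vLh$ (which is built from $\Qt$) does not matter. Set $\mathbf B:=\UperpH\Qt\in\Cplx^{(RM-L)\times K}$. It then suffices to show that, with probability at least $1-p_{\mathrm w}(\varepsilon,\mu)$,
\[
\inf_{\|\mathbf v\|_2=1}\|\mathbf B\mathbf v\|_2^2\ \ge\ \sigma^2\bigl(RM(1-\mu)-L\min(M,K)\bigr).
\]
Applying this at $\mathbf v=\vLh$ gives \eqref{star_mu}; the positive part $[\cdot]_+$ is trivial because the left side is nonnegative. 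I would compare the Gram matrix $\mathbf B^{\herm}\mathbf B$ with its expectation $\mathbf D:=\E[\mathbf B^{\herm}\mathbf B]$ and control the Hermitian deviation $\mathbf B^{\herm}\mathbf B-\mathbf D$ in spectral norm.

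\textbf{Step 1 (the mean).} Fix a unit $\mathbf v$. The entry of $\Qt\mathbf v$ indexed by $(r,m)$ is $\sum_n v_n q_r(m+n)$. Its variance is $\sigma^2\|\mathbf v\|_2^2$, so $\E\|\Qt\mathbf v\|_2^2=\sigma^2RM$. The covariance $\mathbf C_{\mathbf v}$ of $\Qt\mathbf v$ has a Hankel/convolution structure: it is block diagonal over microphones $r$, and each block is the autocorrelation Toeplitz matrix of $\mathbf v$. Hence $\|\mathbf C_{\mathbf v}\|_2\le\sigma^2\min(M,K)$, since a convolution by a unit vector of length $K$ acting on a length-$M$ window has norm at most $\sqrt{\min(M,K)}$. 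It follows that
\[
\mathbf v^{\herm}\mathbf D\mathbf v=\sigma^2RM-\tr\!\bigl(\Utrue^{\herm}\mathbf C_{\mathbf v}\Utrue\bigr)\ \ge\ \sigma^2\bigl(RM-L\min(M,K)\bigr).
\]

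\textbf{Step 2 (pointwise concentration).} Write $\|\mathbf B\mathbf v\|_2^2=\mathbf g^{\herm}\mathbf G\mathbf g$, where $\mathbf g$ is a standard complex Gaussian vector and $\mathbf G\succeq 0$. Here $\tr\mathbf G\le\sigma^2RM$ and $\|\mathbf G\|_2\le\sigma^2\min(M,K)$, so $\|\mathbf G\|_F^2\le\sigma^4RM\min(M,K)$. A two-sided Hanson--Wright / Laurent--Massart bound for PSD Gaussian quadratic forms then gives
\[
\Prob\bigl(|\mathbf v^{\herm}(\mathbf B^{\herm}\mathbf B-\mathbf D)\mathbf v|>s\bigr)\le 2\exp\!\bigl(-s^2/(4\|\mathbf G\|_F^2)\bigr).
\]
The constant $4$ comes from the lower tail. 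Large-$s$ effects do not arise, because we only need $s\le \sigma^2RM$.

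\textbf{Step 3 (net and union bound).} Take an $\varepsilon$-net $\mathcal N$ of the complex unit sphere of $\Cplx^K$. Viewed as the real sphere in $\mathbb R^{2K}$, it has $|\mathcal N|\le(1+2/\varepsilon)^{2K}$. For a Hermitian $\mathbf H$ the standard argument gives $\|\mathbf H\|_2\le(1-2\varepsilon)^{-1}\max_{\mathbf v\in\mathcal N}|\mathbf v^{\herm}\mathbf H\mathbf v|$. Choose $s=(1-2\varepsilon)\mu\sigma^2RM$. The union bound with Step 2 then yields $\|\mathbf B^{\herm}\mathbf B-\mathbf D\|_2\le\mu\sigma^2RM$ except with probability
\[
2(1+2/\varepsilon)^{2K}\exp\!\Bigl(-\frac{\mu^2(1-2\varepsilon)^2RM}{4\min(M,K)}\Bigr)=p_{\mathrm w}(\varepsilon,\mu).
\]
On this event, every unit $\mathbf v$ satisfies $\|\mathbf B\mathbf v\|_2^2\ge\mathbf v^{\herm}\mathbf D\mathbf v-\mu\sigma^2RM$, and Step 1 finishes the proof. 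The hypothesis $K\le RM-L$ only guarantees that $\mathbf B$ is tall, so that the bound is not vacuous.

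\textbf{Main obstacle.} I expect the main work to be Step 1's operator-norm bound $\|\mathbf C_{\mathbf v}\|_2\le\sigma^2\min(M,K)$, which has to handle the overlap of noise samples across Hankel columns carefully. Tied to this is getting the Frobenius-norm control in Step 2 with exactly the $\min(M,K)$ scaling, so that the exponent matches \eqref{eq:pw}. The first thing I would try is to bound $\|\mathbf C_{\mathbf v}\|_2$ by the row- and column-sum (Schur) test applied to the autocorrelation sequence of $\mathbf v$, using Cauchy--Schwarz.
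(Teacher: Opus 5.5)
Your proposal is correct and follows essentially the same route as the paper's proof. Both take a uniform lower bound on $\lambda_{\min}$ of the $K\times K$ Gram matrix of $\UperpH\Qt$, get the mean floor $RM-L\min(M,K)$ from $\|\mathbf G_{\mathbf z}\|_2\le\min(M,K)$, bound each direction with a Gaussian quadratic-form tail that stays in its sub-Gaussian branch because $(1-2\varepsilon)\mu<1$, and finish with an $\varepsilon$-net of size $(1+2/\varepsilon)^{2K}$.

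One bookkeeping note: the constant $4$ in your Step 2 is valid only after you replace the actual Frobenius and spectral norms by the envelopes $\sigma^4RM\min(M,K)$ and $\sigma^2\min(M,K)$ and track the complex-to-real halving of both. The condition $s\le\sigma^2RM$ is the crossover point of those envelopes, not of the true norms. Also, it is the upper tail, not the lower one, that forces the $4$: via Laurent--Massart the requirement is $1/\sqrt2+1/4<1$, while the lower tail alone would give $2$.
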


We note that the failure probability $p_{\mathrm{w}}(\varepsilon,\mu)$ in~\eqref{eq:pw} is minimized by choosing $\varepsilon \approx 1/16$. This leads us to a probabilistic upper bound for $a$. 

~

\subsubsection{Bounding $\spec{\Epar}^2$ from above}

To help with bounding $\gorlower$ from below, we establish an upper bound on $\spec{\Epar}^2$. Recall from~\eqref{eq:epardef} that $\Epar = (\PK-\PiS)\,\Uhat$.

\begin{lemma}[In-cage identity]\label{lem:incage}
Let $\Xt=\St+\Qt$ with $\St$ of rank $L$ and $\gamma_3=\sigmaL(\St)$, and let
$\Uhat,\Vhat,\Shat$ be the top-$L$ left singular vectors, right singular vectors, and singular
values of $\Xt$. If $\twonorm{\Qt}<\gamma_3$ then
\begin{equation}
\twonorm{\Epar}^2\;\le\;\left(\frac{\twonorm{(\PK-\PiS)\,\Qt}}{\gamma_3-\twonorm{\Qt}}\right)^2.
\label{eq:incage-bound}
\end{equation}
\end{lemma}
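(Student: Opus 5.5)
The plan is to express $\Uhat$ through the SVD identity $\Xt\Vhat=\Uhat\Shat$, that is, $\Uhat=\Xt\Vhat\Shat^{-1}$. The condition $\twonorm{\Qt}<\gamma_3$ makes $\Shat$ invertible: Weyl's inequality gives $\sigmaL(\Xt)\ge\gamma_3-\twonorm{\Qt}>0$. Substituting into the definition~\eqref{eq:epardef}, we will write
\[
\Epar=(\PK-\PiS)\Xt\Vhat\Shat^{-1}=(\PK-\PiS)(\St+\Qt)\Vhat\Shat^{-1}.
\]

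The key observation, which gives the lemma its name, is that the signal part is annihilated: $(\PK-\PiS)\St=0$. To see this, note that $\St=\Utrue\Sig_{\mathbf S_3}\mathbf V_{\mathbf S_3}^{\herm}$, so $\range(\St)=\range(\Utrue)$. By~\eqref{eq:fixed}, $\PK\Utrue=\Utrue$, and trivially $\PiS\Utrue=\Utrue$. Hence $(\PK-\PiS)\Utrue=0$ and therefore $(\PK-\PiS)\St=0$. This leaves
\[
\Epar=(\PK-\PiS)\Qt\Vhat\Shat^{-1}.
\]

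The norm bound then follows from submultiplicativity:
\[
\twonorm{\Epar}\le\twonorm{(\PK-\PiS)\Qt}\,\twonorm{\Vhat}\,\twonorm{\Shat^{-1}}.
\]
Here $\twonorm{\Vhat}=1$ because $\Vhat$ has orthonormal columns. Also $\twonorm{\Shat^{-1}}=1/\widehat\sigma_L\le 1/(\gamma_3-\twonorm{\Qt})$ by Weyl's inequality, since $\sigmaL(\St)=\gamma_3$. Squaring both sides gives~\eqref{eq:incage-bound}.

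No step is really an obstacle. The one point to state carefully is that $\PK-\PiS$ is applied on the left of $\Xt$. Because $\PiS\preceq\PK$, the difference $\PK-\PiS$ is itself the orthogonal projector onto $\Ksub\ominus\range(\Utrue)$. It vanishes on $\range(\St)$ precisely because of Proposition~\ref{prop:main}, and this is the only structural input the argument needs.
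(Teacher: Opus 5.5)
Your proof is correct and follows essentially the same route as the paper: both write $\Uhat=\Xt\Vhat\Shat^{-1}$, eliminate the signal part to get $\Epar=(\PK-\PiS)\Qt\Vhat\Shat^{-1}$, and finish with submultiplicativity, $\twonorm{\Vhat}=1$, and Weyl's inequality. The only cosmetic difference is the order of steps. The paper first derives $(\I-\PiS)\Uhat=(\I-\PiS)\Qt\Vhat\Shat^{-1}$ and then left-multiplies by $\PK-\PiS$ using $(\PK-\PiS)(\I-\PiS)=\PK-\PiS$. You instead apply $\PK-\PiS$ directly and note that $(\PK-\PiS)\St=\mathbf 0$.
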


The norm $\|\mathbf{Q}_3\|_2$ can be bounded probabilistically using Lemma~\ref{lemma:mode3_noise_concentration}. We also probabilistically bound $\twonorm{(\PK-\PiS)\,\Qt}$.

\begin{lemma}[Probabilistic in-cage envelope]\label{lem:incageenv}
For any $t_4>0$,
\begin{align}
& \mathbb{P}\left( \twonorm{(\PK-\PiS)\Qt} > \sigma\sqrt{(L^2-L)K}+t_4 \right) \nonumber \\
& \le \exp\left( -\frac{t_4^2}{\sigma^2 \min(M, K)} \right).
\label{eq:incage-prob}
\end{align}
\end{lemma}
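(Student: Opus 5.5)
The plan is to reduce $\twonorm{(\PK-\PiS)\Qt}$ to the norm of a Gaussian matrix compressed onto a fixed $(L^2-L)$-dimensional subspace. We then bound its mean by a Frobenius-norm (second-moment) computation, and finally apply Gaussian concentration for Lipschitz functions of the \emph{base} noise matrix $\mathbf Q\in\Cplx^{R\times N}$. The constants in the statement ($\sigma^2\min(M,K)$ in the exponent, and $\sigma\sqrt{(L^2-L)K}$ as the centering) match this route exactly.

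\emph{Step 1 (structure of $\PK-\PiS$).} By~\eqref{eq:fixed}, $\range(\Utrue)\subseteq\Ksub$, so $\PK\PiS=\PiS\PK=\PiS$. Hence $\PK-\PiS$ is the orthogonal projector onto $\Ksub\ominus\range(\Utrue)$. Since $\rank\mathbf T_1=\rank\mathbf T_2=L$, we have $\dim\Ksub=L^2$, and this complement has dimension $L^2-L$. Let $\mathbf W\in\Cplx^{RM\times(L^2-L)}$ be an orthonormal basis for it. Then $\twonorm{(\PK-\PiS)\Qt}=\twonorm{\mathbf W^H\Qt}$, and $\mathbf W$ is deterministic because it depends only on the signal.

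\emph{Step 2 (mean).} We use $\E\twonorm{\mathbf W^H\Qt}\le\E\fnorm{\mathbf W^H\Qt}\le(\E\fnorm{\mathbf W^H\Qt}^2)^{1/2}$. Each column $\mathrm{vec}(\mathbf Q(k))$ of $\Qt$ has entries $q_r(k+m)$ for $r=0,\dots,R-1$ and $m=0,\dots,M-1$. These are $RM$ \emph{distinct} base noise samples, so $\mathrm{vec}(\mathbf Q(k))\sim\mathcal{CN}(\mathbf 0,\sigma^2\I_{RM})$. Summing over the $K$ columns gives $\E\fnorm{\mathbf W^H\Qt}^2=K\sigma^2\tr(\mathbf W^H\mathbf W)=\sigma^2(L^2-L)K$. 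Correlations \emph{across} columns are irrelevant to this second moment, which is why the Hankel dependence causes no difficulty here.

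\emph{Step 3 (concentration).} Regard $f(\mathbf Q):=\twonorm{\mathbf W^H\Qt(\mathbf Q)}$ as a function of the real and imaginary parts of $\mathbf Q$. It is convex and Lipschitz with
\[
|f(\mathbf Q)-f(\mathbf Q')|\le\fnorm{\Qt(\mathbf Q-\mathbf Q')}\le\sqrt{\min(M,K)}\,\fnorm{\mathbf Q-\mathbf Q'}.
\]
The last inequality is the key counting step. A base entry $q_r(n)$ occupies the positions of $\Qt$ indexed by $(r,m,k)$ with $m+k=n$, $0\le m\le M-1$ and $0\le k\le K-1$, so it appears at most $\min(M,K)$ times. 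Writing $\mathbf Q=(\sigma/\sqrt2)(\mathbf G_1+j\mathbf G_2)$ with standard real Gaussians $\mathbf G_1,\mathbf G_2$, the Lipschitz constant in standard coordinates becomes $L_f=\sigma\sqrt{\min(M,K)/2}$. Gaussian concentration then gives $\Prob(f>\E f+t_4)\le\exp(-t_4^2/(2L_f^2))=\exp(-t_4^2/(\sigma^2\min(M,K)))$. Combining this with Step~2 yields~\eqref{eq:incage-prob}.

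The main obstacle is Step~3's multiplicity count, i.e., controlling the Lipschitz constant through the Hankel embedding $\mathbf Q\mapsto\Qt$. The same argument presumably underlies Lemmas~\ref{lem:mode1noise}--\ref{lemma:mode3_noise_concentration}, which share this exponent. Everything else is a direct consequence of the projector identity in Step~1 and the fact that each column of $\Qt$ is white.
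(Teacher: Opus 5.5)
Your proposal is correct and matches the paper's proof essentially step for step. The paper bounds the mean through $\E\fnorm{(\PK-\PiS)\Qt}^2=\sigma^2K\tr(\PK-\PiS)=\sigma^2(L^2-L)K$, using $\E[\Qt\Qt^{\herm}]=\sigma^2K\I_{RM}$, which is the same fact as your column-whiteness observation. It then applies Gaussian concentration with the same $\sqrt{\min(M,K)}$ Lipschitz constant obtained from the Hankel multiplicity count in Lemma~\ref{lem:mode1noise}; you simply spell out the rank-$(L^2-L)$ projector structure and the real/imaginary rescaling that the paper leaves implicit.
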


This leads us to a probabilistic upper bound for $\spec{\Epar}^2$.

\subsection{Proving that $\lemupper < \gorlower$ with high probability}
\label{sec:finalproof}

We are now in position to prove our main result: that in certain scenarios the overall tensor gain is positive with high probability.

\begin{theorem}[Main result]\label{thm:main}
Let $\gamma_i := \sigmaL(\mathbf S_i)$ for $i=1,2,3$ be the unfolding gaps. Fix positive slacks $t_1,t_2,t_3,t_4 > 0$ and define 
\begin{align*}
\omega_1&:=\sigma\sqrt{2\mMK}\,(\sqrt R+\sqrt N)+t_1, \\
\omega_2&:=\sigma\sqrt{2K}\,(\sqrt R+\sqrt N)+t_2, \\
\omega_3&:=\sigma\sqrt{2M}\,(\sqrt R+\sqrt N)+t_3, \\
\omega_4&:=\sigma\sqrt{(L^2-L)K}+t_4.
\end{align*}
Then:
\begin{enumerate}
\item {\bf Oracle gain.} Suppose $K\le RM-L$, $\omega_3 < \frac{1}{2} \gamma_3$, and $\omega_4 \le \gamma_3-\omega_3$. Fix $\varepsilon = 1/16$, and $\mu\in(0,\,1-L\mMK/RM)$. Define
\begin{align*}
\underline{a} & := 1-\left(\tfrac{\omega_3}{\gamma_3-\omega_3}\right)^2, \\
\overline{a} & := 1 - \left( \frac{\sigma\sqrt{[\,RM(1-\mu)-L \mMK\,]_+}} {\gamma_3+\omega_3} \right)^2, \\
\overline{E}&:=\Bigg(\frac{\omega_4}{\gamma_3-\omega_3}\Bigg)^{\!2}.
\end{align*}
If $\frac{4\sqrt{7}-2}{27} \le \underline{a} \le \overline{a} < 1$, then with probability at least $1-e^{-t_3^2/(\sigma^2\mMK)}-e^{-t_4^2/(\sigma^2\mMK)}-p_{\mathrm{w}}\!\big(\varepsilon,\mu\big)$, the oracle gain is lower bounded as follows:
\begin{equation}
\gor \ge 
\underbrace{\sqrt{1-\overline{a}} - \frac{\sqrt{\overline{E}}}{\sqrt{\overline{a}+\overline{E}}}}_{\underline{\gorlower}}.
\label{eq:finalgor}
\end{equation}
\item {\bf Empirical loss.} Suppose $\omega_1 < \gamma_1$, $\omega_2 < \gamma_2$, and $\omega_3 < \frac{1}{2} \gamma_3$. Define
\begin{align*}
\overline{\eta}&:=\frac{\omega_1}{\gamma_1-\omega_1}+\frac{\omega_2}{\gamma_2-\omega_2}, \\
\underline{\rho}&:=\sqrt{1-\Big(\tfrac{\omega_3}{\gamma_3-\omega_3}\Big)^2}.
\end{align*}
If $\overline{\eta} \le \frac{1}{2} \underline{\rho}$, then with probability at least $1-e^{-t_1^2/(\sigma^2\mMK)}-e^{-t_2^2/(\sigma^2\mMK)}-e^{-t_3^2/(\sigma^2\mMK)}$, the empirical loss is upper bounded as follows:
\begin{equation}
\lem \le \underbrace{\frac{\overline{\eta}}{\underline{\rho}-\overline{\eta}}}_{\overline{\lemupper}}.
\label{eq:finallem}
\end{equation}
\item {\bf Overall tensor gain.} Suppose all assumptions above are satisfied and that $\overline{\lemupper} < \underline{\gorlower}$. Then with probability at least $1-e^{-t_1^2/(\sigma^2\mMK)}-e^{-t_2^2/(\sigma^2\mMK)}-e^{-t_3^2/(\sigma^2\mMK)}-e^{-t_4^2/(\sigma^2\mMK)}-p_{\mathrm{w}}\!\big(\varepsilon,\mu\big)$, the overall tensor gain is lower bounded as follows:
\begin{equation}
\text{overall tensor gain}~\ge~ \underline{\gorlower} - \overline{\lemupper}. 
\label{eq:finalgain}
\end{equation}
\end{enumerate}
\end{theorem}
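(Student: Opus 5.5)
The proof assembles the deterministic bounds of Theorems~\ref{thm:oraclebounds} and~\ref{thm:exact} with the concentration lemmas of Section~\ref{sec:prob}, working on the intersection of high-probability events. Define the events
$\mathcal{E}_i := \{\spec{\mathbf Q_i}\le\omega_i\}$ for $i=1,2,3$, $\mathcal{E}_4 := \{\spec{(\PK-\PiS)\Qt}\le\omega_4\}$, and $\mathcal{E}_{\mathrm w}$ as the event in Lemma~\ref{lem:prob} with $\varepsilon=1/16$. By Lemmas~\ref{lem:mode1noise}, \ref{lem:mode2noise}, \ref{lemma:mode3_noise_concentration}, \ref{lem:incageenv} and~\ref{lem:prob}, each complement has probability at most the corresponding exponential term or $p_{\mathrm w}$. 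A union bound then gives the three stated probabilities. Everything else is deterministic on these events.

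\textbf{Part 2 (empirical loss).} On $\mathcal E_1\cap\mathcal E_2$ we have $\spec{\mathbf Q_i}\le\omega_i<\gamma_i$. The map $x\mapsto x/(\gamma_i-x)$ is increasing on $[0,\gamma_i)$, so Corollary~\ref{cor:wedin_sin} gives $\eta\le\overline\eta$. On $\mathcal E_3$, $\spec{\Qt}\le\omega_3<\gamma_3/2$, and since the bound in Lemma~\ref{lemma:baseline_alignment_lower} is monotone in $\spec{\Qt}$, we get $a\ge\underline a=\underline\rho^2$. Lemma~\ref{lem:rholower} then yields $\rho\ge\sqrt a\ge\underline\rho$. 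The hypothesis $\overline\eta\le\underline\rho/2$ gives $\eta<\rho$, so Theorem~\ref{thm:exact} applies. Finally, $\eta/(\rho-\eta)$ is increasing in $\eta$ and decreasing in $\rho$, so $\lem\le\overline\eta/(\underline\rho-\overline\eta)$.

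\textbf{Part 1 (oracle gain).} On $\mathcal E_3\cap\mathcal E_4$, Lemma~\ref{lem:incage} together with $\omega_4\le\gamma_3-\omega_3$ gives $\spec{\Epar}^2\le\overline E$. On $\mathcal E_3\cap\mathcal E_{\mathrm w}$, Lemma~\ref{lem:det} gives $a\le\overline a$: the numerator of its correction term is bounded below by Lemma~\ref{lem:prob}, and the denominator above by $\gamma_3+\omega_3$. Combined with Part 2, this places $a\in[\underline a,\overline a]$.

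It remains to show that $\gorlower(a,e)=\sqrt{1-a}-\sqrt{e/(a+e)}$, with $e=\spec{\Epar}^2$, is bounded below by its value at $(\overline a,\overline E)$.
\begin{itemize}
\item In $e$: the term $e/(a+e)$ is increasing, so $\gorlower$ is decreasing in $e$, and we may replace $e$ by $\overline E$.
\item In $a$: the first term $\sqrt{1-a}$ decreases in $a$, while $-\sqrt{e/(a+e)}$ increases in $a$, so monotonicity in $a$ is the main obstacle.
\end{itemize}
I expect the threshold $\tfrac{4\sqrt7-2}{27}$ on $\underline a$ to be exactly what makes $\gorlower(\cdot,\overline E)$ nonincreasing on $[\underline a,1)$. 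The plan is to differentiate in $a$ and compare $\tfrac{1}{2\sqrt{1-a}}$ against $\tfrac{\sqrt e}{2}(a+e)^{-3/2}$. This requires showing $(a+e)^3\ge e(1-a)$ uniformly over the relevant range of $e$, and the worst case over $e$ should produce a cubic condition in $a$ whose root is $\tfrac{4\sqrt7-2}{27}$. If that monotonicity holds, then $\gor\ge\gorlower(a,e)\ge\gorlower(\overline a,\overline E)=\underline{\gorlower}$ on the intersection of events. If the derivative condition fails for large $e$, I would instead use that $\overline E\le1$, which follows from $\omega_4\le\gamma_3-\omega_3$, to restrict the range of $e$ before taking the worst case.

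\textbf{Part 3.} On the intersection of all five events, Parts 1 and 2 hold simultaneously. By~\eqref{eq:overalltg2}, the overall tensor gain equals $\gor-\lem\ge\underline{\gorlower}-\overline{\lemupper}$, which is positive by assumption.
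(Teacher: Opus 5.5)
Your proposal is correct and follows the paper's proof essentially step for step. It uses the same five events with a union bound, the same chain through Corollary~\ref{cor:wedin_sin}, Lemma~\ref{lem:rholower} and Lemma~\ref{lemma:baseline_alignment_lower} for Part 2, and the same derivative-based monotonicity argument, which is the paper's Lemma~\ref{lem:monotonicity}. For the last step, the worst case of $e(1-a)/(a+e)^3$ occurs at $e=a/2$, so the condition reduces to the quadratic $4(1-a)\le 27a^2$ (not a cubic), whose positive root is $\frac{4\sqrt7-2}{27}$; it then holds uniformly over all $e\ge 0$, so your fallback via $\overline E\le 1$ is unnecessary.
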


Theorem~\ref{thm:main} defines $\overline{\lemupper}$ (a probabilistic upper bound on $\lem$) and $\underline{\gorlower}$ (a probabilistic lower bound on $\gor$). Both terms depend on $\sigma$, $M$, $K$, $R$, $L$, and $\gamma_3$. However, only $\overline{\lemupper}$ depends on $\gamma_1$ and $\gamma_2$. Our numerical examples identify scenarios where $\overline{\lemupper} \le \underline{\gorlower}$.

\begin{table*}[h]
\centering
\caption{Parameter Settings for the Six Experimental Setups}
\label{tab:simulation_parameters}
\begin{tabular}{@{}lccccccc@{}}
\toprule
\textbf{Setup} & \textbf{N} & \textbf{M} & \textbf{P} & \boldmath{$L_p$} & \boldmath{$\omega$} & \boldmath{$\theta$} & \textbf{Description} \\ \midrule
($i$)   & 12 & 8 & 2 & [2, 3]    & [0.45, 0.5]      & [35$^\circ$, -15$^\circ$] & Two sources \\
($ii$)  & 10 & 6 & 2 & [2, 3]    & [0.45, 0.4]      & [35$^\circ$, 33$^\circ$]  & Two close sources \\
($iii$) & 12 & 6 & 3 & [1, 2, 3] & [0.45, 0.4, 0.35] & [-3$^\circ$, 35$^\circ$, 4$^\circ$] & Three sources; two close \\
($iv$)  & 12 & 8 & 2 & [2, 3]    & [0.5, 0.25]     & [35$^\circ$, -15$^\circ$] & Harmonically related \\ 
($v$)   & 13 & 8 & 2 & [2, 3]    & [0.45, 0.5]    &
[35$^\circ$, -15$^\circ$] & Coherent sources (i.e. $\alpha_{p, l} = 1$) \\ 
($vi$)  & 13 & 8 & 2 & [2, 3]  & [0.3, 0.315]  &
[35$^\circ$, -25$^\circ$] & Two sources, \textbf{$R=12$} in this case \\
\bottomrule
\end{tabular}
\end{table*}

\section{Numerical Examples} 
\label{sec:examples}

\subsection{Performance evaluation and comparison with matrix methods}

We simulate a uniform linear array consisting of \( R =15  \) microphone sensors with inter-element spacing \( d = c/f_s \), where the speed of sound is set to \( c = 340 \,\mathrm{m/s} \) and the sampling frequency is \( f_s = 8 \,\mathrm{kHz} \). We vary the number of samples \( N  \), the number of sources $P$, the data matrix dimension $M$, and the signal-to-noise ratio $\text{SNR} = 10 \log_{10} \frac{\|\mathcal{S}\|_\text{HS}^2}{\|\mathcal{Q}\|_\text{HS}^2}$, where $\|\cdot\|_{\text{HS}}$ is the Hilbert-Schmidt norm of a tensor. Our six setups are summarized in Table~\ref{tab:simulation_parameters}.

\begin{figure}[!t]
\includegraphics[width=0.99\columnwidth]{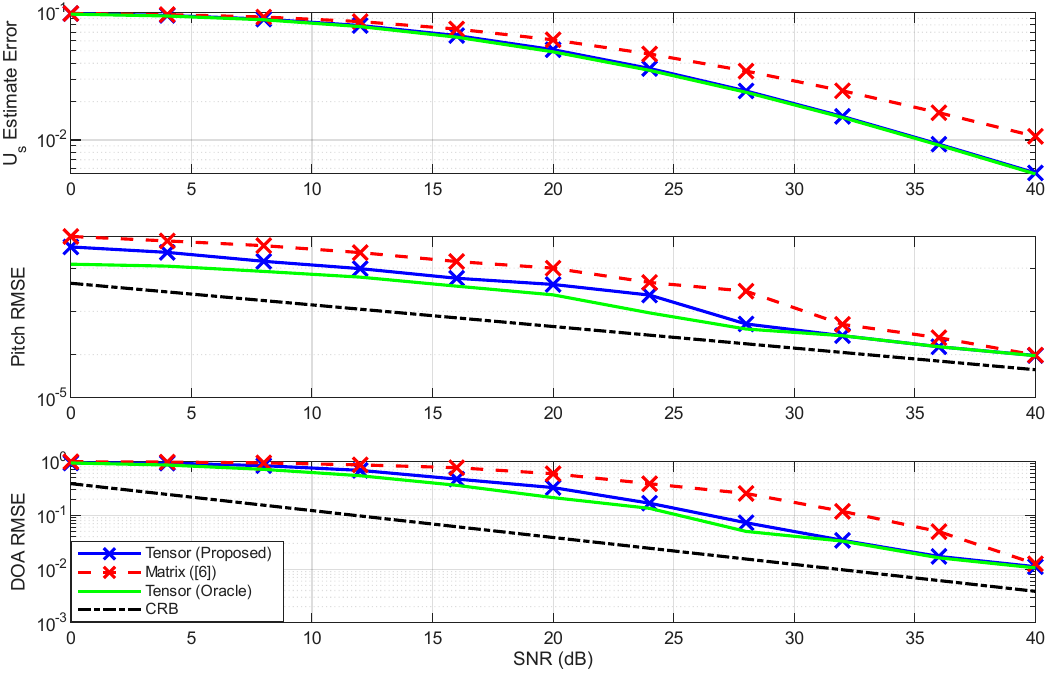} 
\caption{Comparison of the proposed tensor method with matrix method \cite{wu2014joint} for two sources (setup ($i$)).}
\label{rmse}
\end{figure}

\begin{figure}[!t]
\includegraphics[width=0.99\columnwidth]{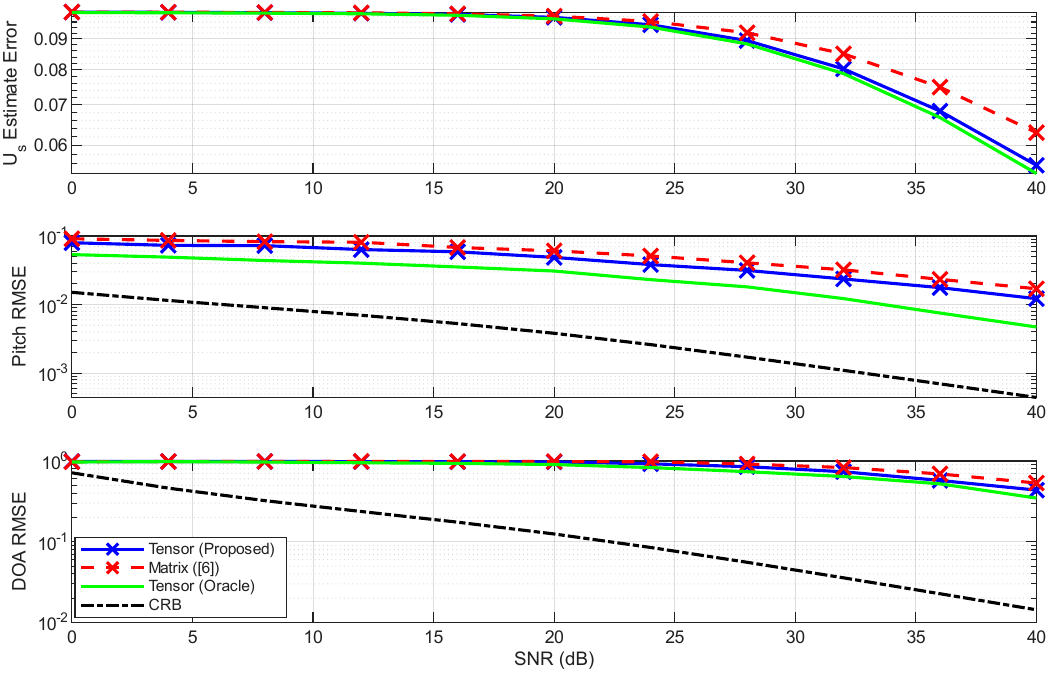} 
\caption{Comparison 
for two close sources (setup ($ii$)).}
\label{rmse2}
\end{figure}

\begin{figure}[!t]
\includegraphics[width=0.99\columnwidth]{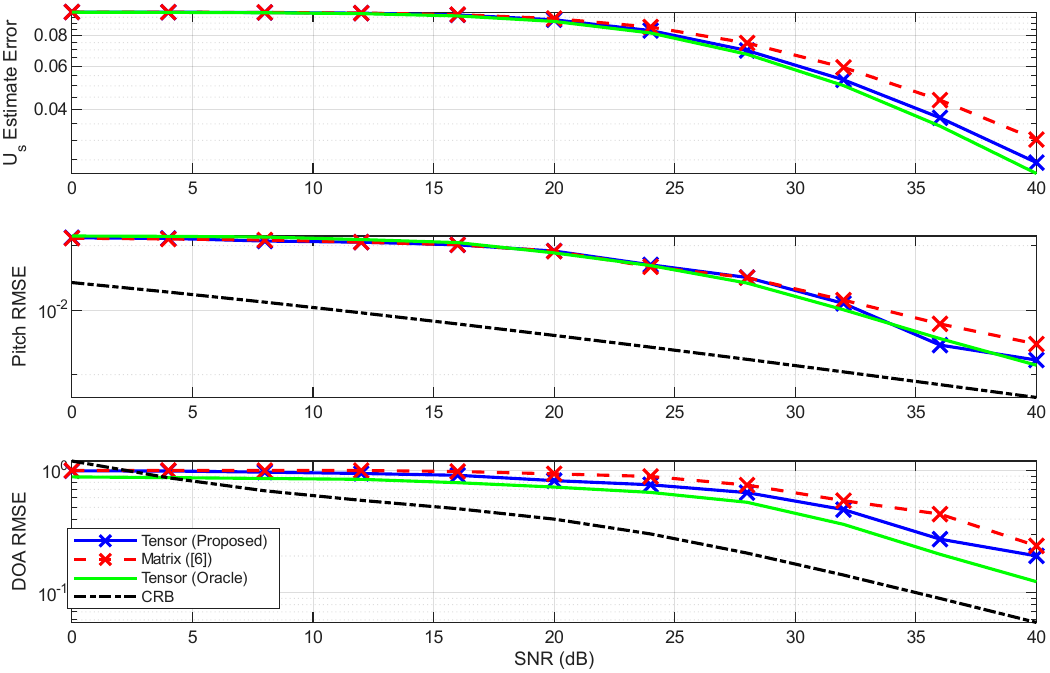}
\caption{Comparison 
for three sources with two close sources (setup ($iii$)).}
\label{rmse3}
\end{figure}

\begin{figure}[!t]
\includegraphics[width=0.99\columnwidth]{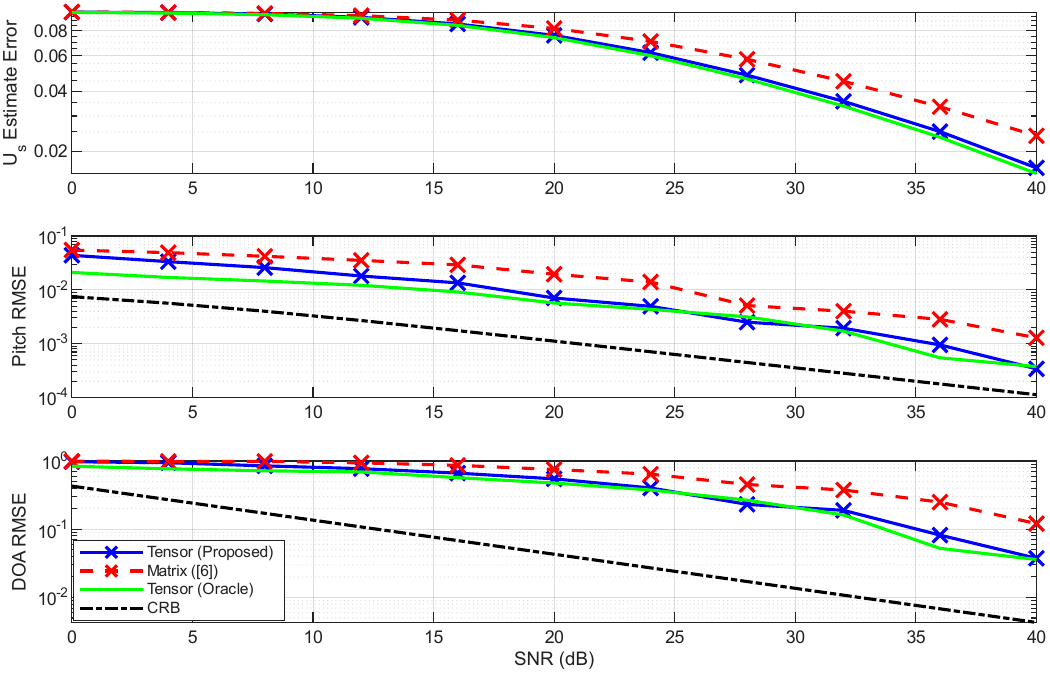} 
\caption{Comparison 
when there are two sources and the pitches of these two sources are harmonically related (setup ($iv$)).}
\label{rmse4}
\end{figure}

\begin{figure}[!t]
\includegraphics[width=0.99\columnwidth]{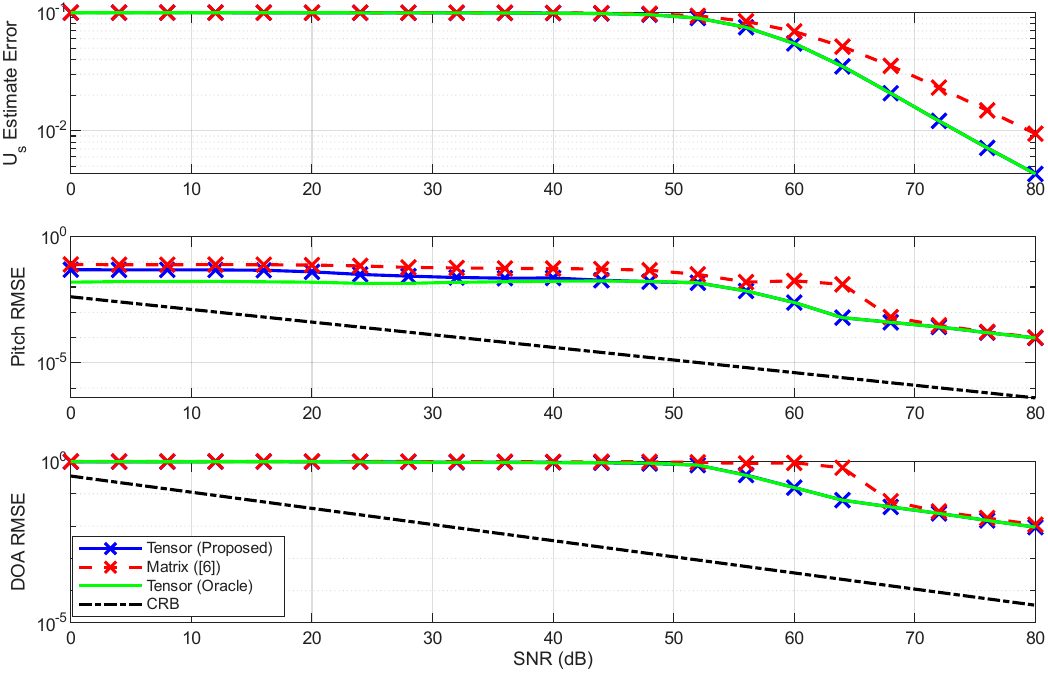} 
\caption{Comparison 
for two coherent sources (setup ($v$)).}
\label{rmse5}
\end{figure}
 
Figures \ref{rmse}--\ref{rmse5} show the corresponding numerical results. In each figure, the top plot reports the subspace estimation error, comparing $d(\Uten, \mathbf{U}_{\mathbf{S}_3})$ to $d(\Umat, \mathbf{U}_{\mathbf{S}_3})$; recall that~\cite[Sec.~III-D]{wu2014joint} is equivalent to using $\Umat$. We also report $d(\Uorc, \mathbf{U}_{\mathbf{S}_3})$, where $\Uorc$ is the ``oracle'' tensor-based subspace estimate discussed in Section~\ref{sec:proofsetup}. The middle and bottom plots report the root mean square error (RMSE) of the resulting parameter estimates. The RMSEs of the pitch and DOA estimation are defined as $\text{Pitch RMSE} = \sqrt{\frac{1}{P}\sum_{p = 1}^P (\widehat{\omega}_p - \omega_p)^2}$ and $\text{DOA RMSE} = \sqrt{\frac{1}{P}\sum_{p = 1}^P (\widehat{\theta}_p - \theta_p)^2}$. 


We see that the tensor method consistently outperforms the matrix method across the entire SNR range in subspace estimation, exhibiting a steeper decay in estimation error as the SNR increases. This improved subspace accuracy directly translates into enhanced parameter estimation performance. For pitch estimation, the tensor-based approach achieves lower RMSE at low to moderate SNRs, while both methods converge to similar performance levels at high SNRs, indicating asymptotic efficiency. In DOA estimation, the tensor method again demonstrates clear advantages, particularly in the low-SNR regime, where it yields significantly smaller RMSE than the matrix-based counterpart. 


While setups $(i)$--$(iv)$ corresponded to incoherent sources with randomly generated complex amplitudes $\alpha_{p, l}$, setup $(v)$ corresponds to coherent sources with all $\alpha_{p,l}=1$. In practical scenarios, such as indoor room acoustics, sound reflections frequently lead to such coherence \cite{kuttruff2024room}. In this case, while $\mathbf{Z}$ still has rank $L$ (recall Proposition~\ref{prop:rank}), it becomes poorly conditioned compared to the incoherent case. As shown in Figure~\ref{rmse5}, this introduces significant challenges for both estimation techniques. Nevertheless, the proposed tensor method consistently outperforms the matrix-based approach. 

\subsection{Performance analysis and comparison with state-of-the-art tensor methods}
We benchmark the proposed tensor method against the matrix baseline \cite{wu2014joint}, CP-ALS \cite{sidiropoulos2017tensor}, and TT-MDL \cite{gong2020tensor}, one of the state-of-the-art tensor methods (see Figure \ref{rmse6}). The proposed method attains the lowest subspace estimation error against the matrix baseline and TT-MDL at every tested SNR, but its ranking against CP-ALS is SNR-dependent: CP-ALS attains lower subspace error in a mid-range band from 8 to 20 dB, while the proposed method is strictly better at the low end (0--4 dB) and, more markedly, from 24 dB upward, where the performance of CP-ALS saturates.

The CP model represents the tensor as a sum of $L$ rank-one components, with each component coupling a spatial steering vector, a temporal steering vector, and a snapshot coefficient vector across the three modes. In contrast, the Tucker-type relaxation underlying the proposed method only requires the individual mode unfoldings to have rank at most $L$, without enforcing a one-to-one correspondence among their latent components. Consequently, CP exploits a stronger structural model and can provide higher estimation accuracy when the factors are well conditioned. Its performance may nevertheless be sensitive to initialization and to near-collinearity among the factor-matrix columns. Although near-collinearity does not necessarily violate Kruskal’s sufficient uniqueness condition, it can make the CP decomposition poorly conditioned and numerically unstable. The proposed method sacrifices some of this stronger component-level coupling in exchange for a simpler noniterative SVD-based implementation, greater robustness to ill-conditioning, and explicit non-asymptotic performance guarantees.

\begin{figure}[!t]
\includegraphics[width=0.99\columnwidth]{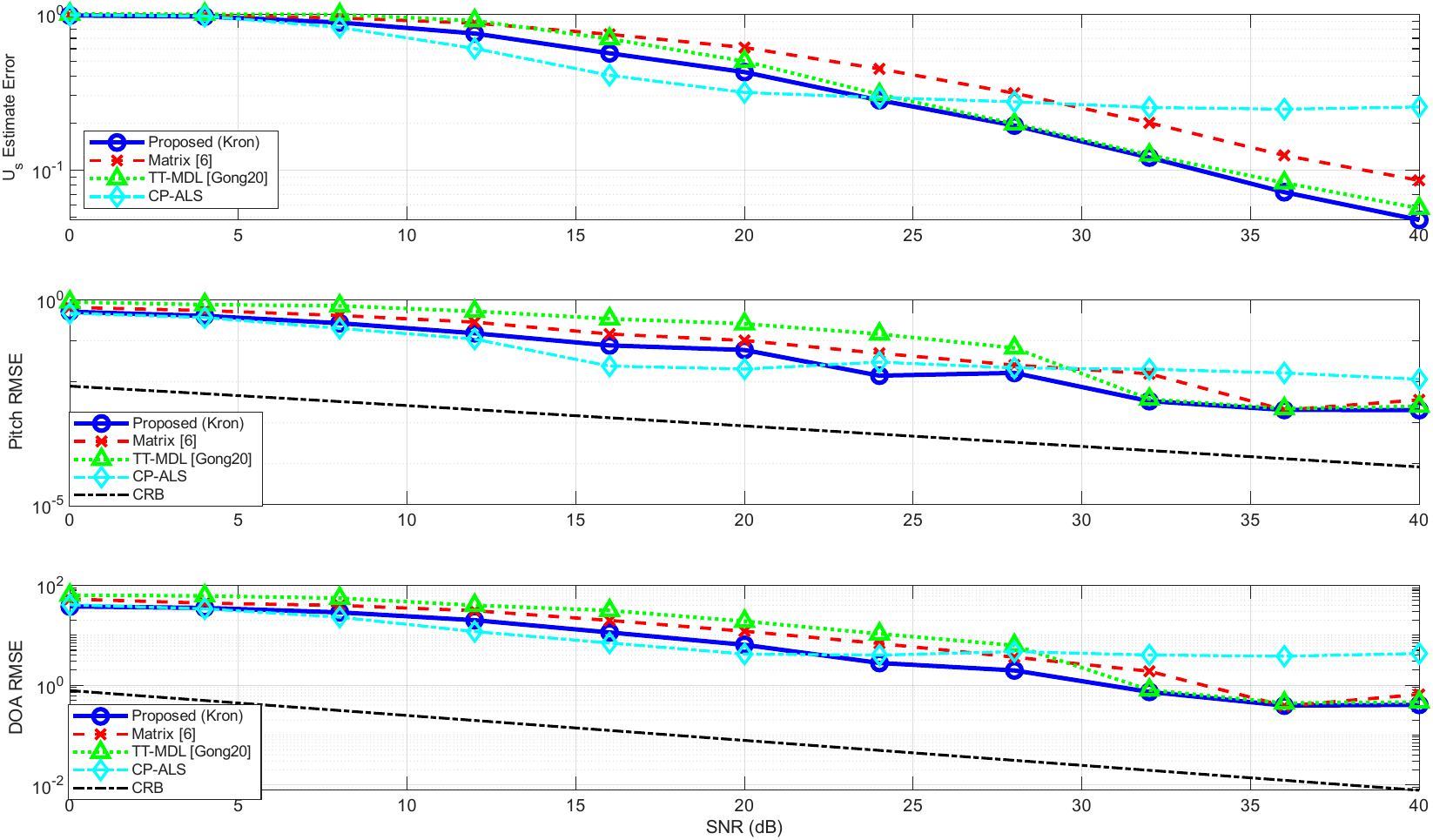} 
\caption{Comparison of the proposed tensor method with matrix method \cite{wu2014joint}, CP-ALS \cite{sidiropoulos2017tensor} and TT-MDL method \cite{gong2020tensor} (setup ($vi$)). Note in this setup, $R = 12$. }
\label{rmse6}
\end{figure}

\subsection{Tensor gain bound characterization}
Figure~\ref{bound_demo1} illustrates the deterministic bounds on the oracle gain and empirical loss derived in Sections~\ref{sec:oraclegain} and~\ref{sec:emploss}. The array configuration for this experiment is described in the figure caption. We vary the noise standard deviation and compare the realized oracle gain and empirical loss with their corresponding theoretical bounds in panels (a) and (b), respectively. Because the oracle gain is lower bounded, all points in panel (a) lie on or below the diagonal line, which represents equality between the measured quantity and its bound. Similarly, because the empirical loss is upper bounded, all points in panel (b) lie on or above the diagonal line. These results numerically demonstrate the validity of the bounds established in Theorems~\ref{thm:oraclebounds} and~\ref{thm:exact}.



Figure~\ref{sweet_demo1} illustrates the ``sweet-spot'' behavior of the empirical tensor gain. The gain is small when the matrix-based estimate is already highly accurate, typically at high SNR, increases as the matrix estimate enters an intermediate-quality regime, and decreases again when the matrix estimate becomes too inaccurate, typically at low SNR. At high SNR, little room remains for further refinement. At intermediate SNR, the matrix estimate contains a substantial component outside the Kronecker-structured signal cage that can be removed by the tensor refinement, leading to the largest gain (marked as star in Figure~\ref{sweet_demo1}). At very low SNR, however, the empirically estimated Kronecker projector becomes unreliable, and the benefit of the refinement diminishes.


Figure \ref{sweet_demo3} provides a representative illustration of the impact of the Kronecker projector error $\eta$ on the range of $a$ where positive overall tensor gain is guaranteed according to our theory. We refer to this as the ``certified interval'' of $a$. Larger $\eta$ indicates the empirical Kronecker projector is less accurate. As $\eta$ increases, the certified interval narrows and eventually disappears. This behavior reinforces that the tensor-gain ``sweet spot'' represents a finite operating region that depends critically on the estimation accuracy of the Kronecker projector.


Finally, Figure \ref{prob_demo} presents two practical scenarios in which $\overline{\lemupper}$, the probabilistic upper bound on the empirical loss, falls below $\underline{\gorlower}$, the lower bound on the oracle gain. In such scenarios, our Theorem \ref{thm:main} guarantees positive overall tensor gain with high probability. In this figure, we define the certified SNR as the smallest SNR at which $\overline{\lemupper} \le \underline{\gorlower}$, i.e., where the dashed green curve exceeds the solid red curve. These examples demonstrate that the probabilistic guarantee in Theorem \ref{thm:main} can be practically achieved. Nevertheless, the guarantee remains conservative, as certification requires relatively favorable operating conditions. In practice, positive empirical tensor gain is often observed under substantially less demanding conditions.


\begin{figure}[!t]
\includegraphics[width=0.99\columnwidth]{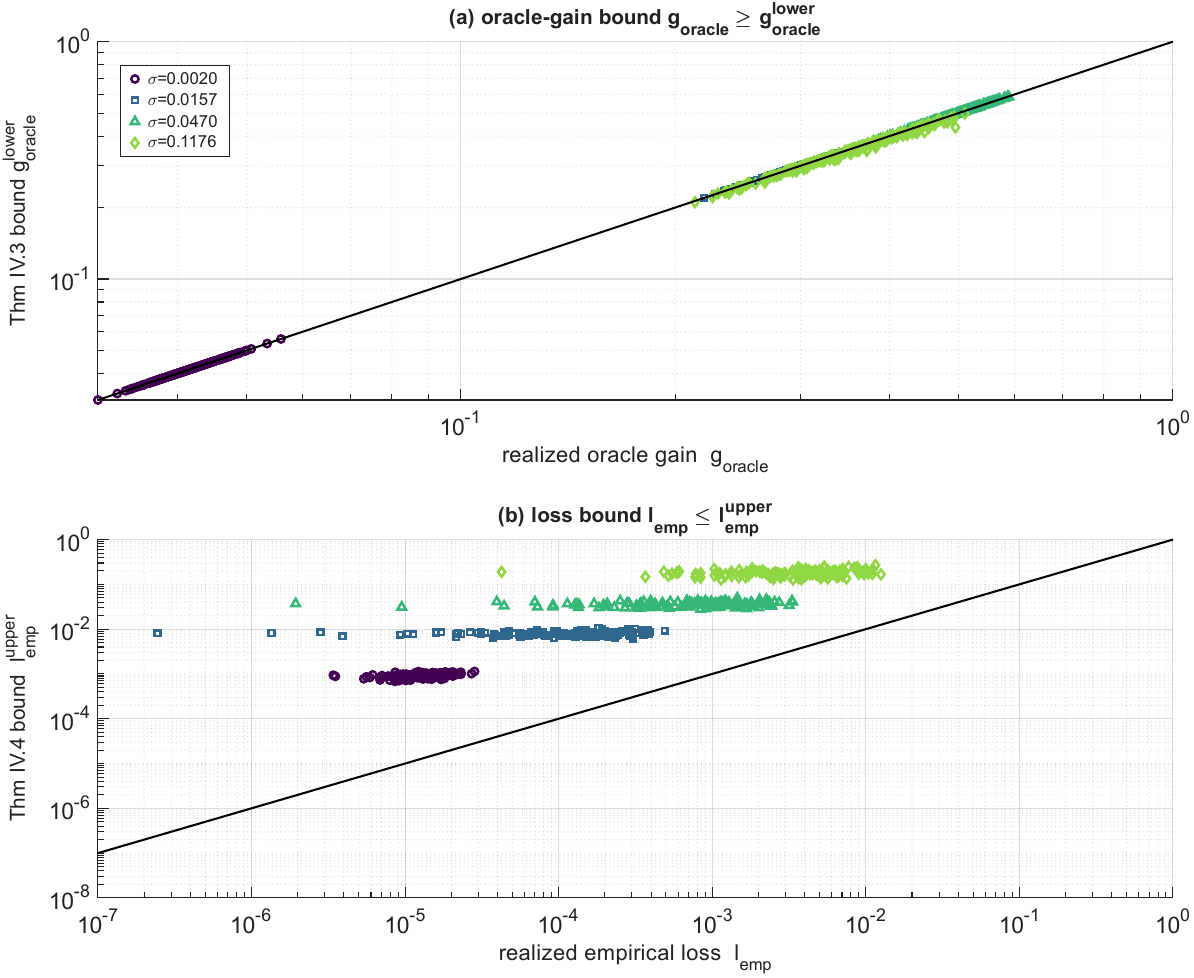} 
\caption{Demonstration of the deterministic bounds: (a) oracle gain lower bound, (b) empirical loss upper bound. $R=15$, $M=8$, $N=12$, $K=5$, $P = 2$, $L_p = [2, 3]$, $\omega_p = [1.3, 1.0]$, $\theta_p = [50^\circ, -35^\circ]$. The sources are coherent (i.e., $\alpha_{p, l} = 1$). For each of four fixed noise levels $\sigma$, $200$ trials are run.}
\label{bound_demo1}
\end{figure}

\begin{figure}[!t]
\includegraphics[width=0.99\columnwidth]{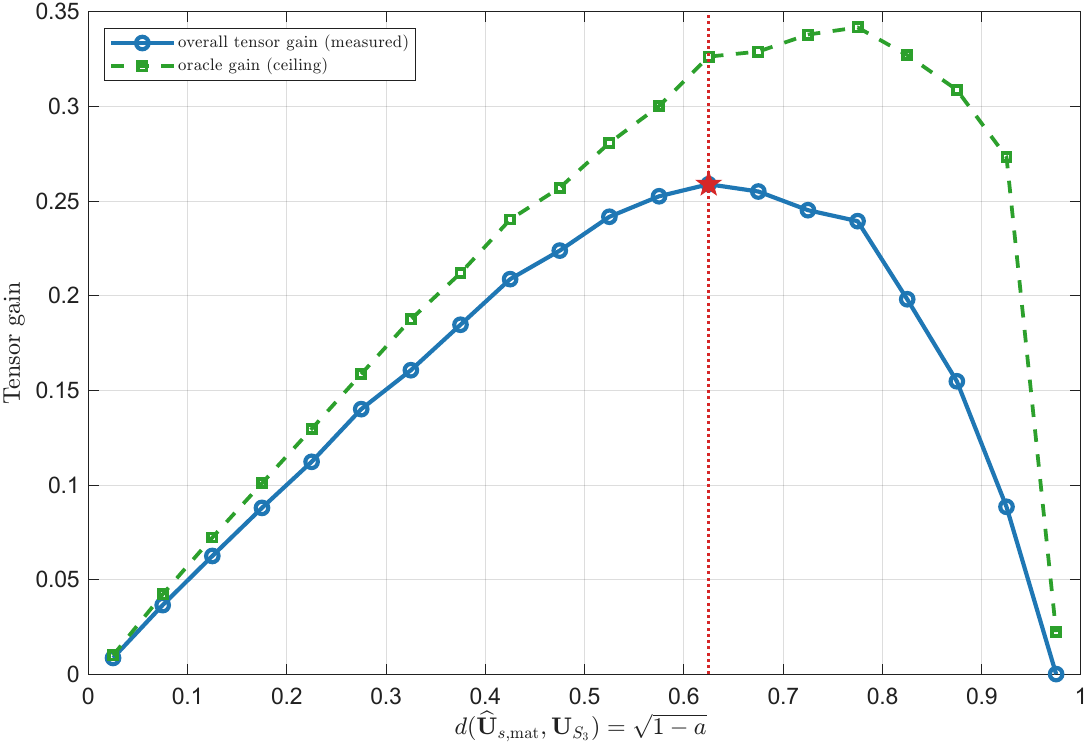} 
\caption{Tensor gain versus the matrix-baseline error. $R = 20$, $M = 10$, $N = 24$, $P= 2$, $L_p = [2, 3]$, $\omega_p =[0.45, 0.55]$, $\theta_p = [35^\circ, -20^\circ]$. SNR is swept from $0$ to $60$ dB with $2$ dB step size and with $300$ trials for each SNR. The amplitudes of sources are complex random.
Every trial produces a unique value for both $d(\Umat,\Utrue)$ and tensor gain; these trials are grouped into bins according to $d(\Umat,\Utrue)$ and the median tensor gain for each bin is plotted.
}
\label{sweet_demo1}
\end{figure}


\begin{figure}[!t]
\includegraphics[width=0.99\columnwidth]{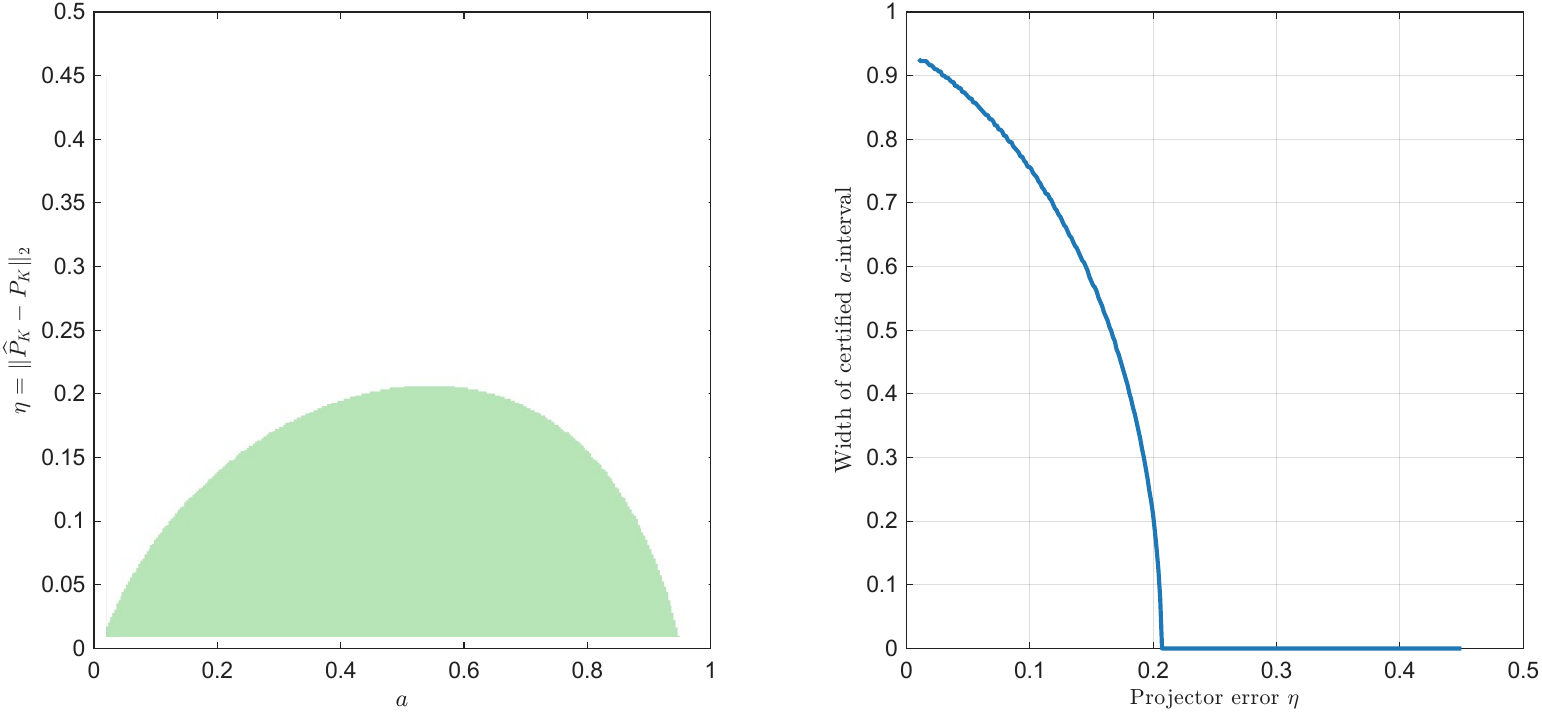} 
\caption{The certified interval versus the projector error $\eta$. 
We set $\|\Epar\|_2 = 0.22$ as a representative fixed value. The left panel marks the certified region in the $(a,\eta)$ plane. At every $(a, \eta)$ grid point, we ``certify'' (shade) the point if $\gorlower > \frac{\eta}{\sqrt{a} - \eta}$, noting that $\frac{\eta}{\sqrt{a} - \eta} \ge \lemupper$ due to Theorem \ref{thm:exact} and Lemma \ref{lem:rholower}.}
\label{sweet_demo3}
\end{figure}

\begin{figure}[!t]
\includegraphics[width=0.99\columnwidth]{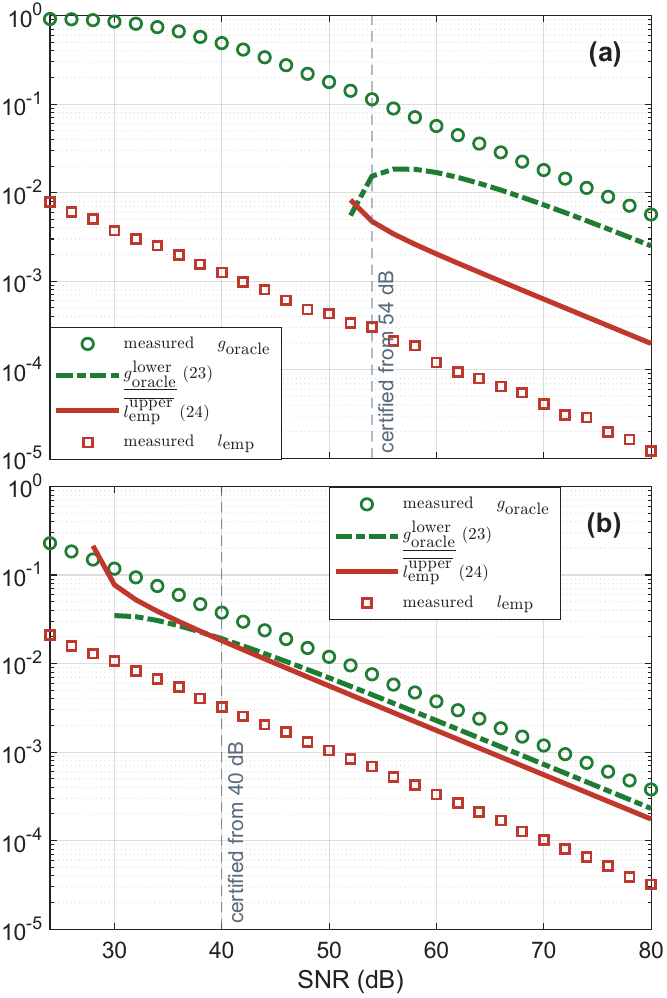} 
\caption{Demonstration of two scenarios where $\overline{\lemupper} \le \underline{\gorlower}$: (a)~$R = M = 60$, $N = 64$, $K = 5$, $P = 2$, $L_p = [2,3]$, $L = 5$, $\omega_p = [0.30, 0.95]$, $\theta_p = [65^\circ, -65^\circ]$; (b) $R = M = 30$, $N = 31$, $K = 2$, $P = 2$, $L_p = [1,1]$, $L = 2$, $\omega_p = [0.25, 0.95]$, $\theta_p = [55^\circ, -40^\circ]$. The amplitudes of sources for both scenarios are complex random. All slack variables $t_1,t_2,t_3,t_4$ are set such that the corresponding upper bound on failure probability $\exp(-t_i^2/(\sigma^2\mMK))=0.02$. We budget another $0.02$ failure probability by restricting $p_{\mathrm{w}}(\varepsilon,\mu) \le 0.02$, setting $\varepsilon = 1/16$, and solving for the smallest allowable $\mu$. By the union bound, every plotted guarantee then holds with probability at least $1-5(0.02)=0.90$.  Each measured point is obtained via taking the smallest measured $\gor$ (or largest measured $\ell_{\textrm{emp}}$) over $300$ trials for each SNR.}
\label{prob_demo}
\end{figure}

\section{Conclusions}
\label{sec:conclusion}

This paper developed a tensor-based framework for joint pitch and DOA estimation of multiple harmonic sources. By arranging the received array data as a multidimensional Hankel tensor, we showed that the true mode-3 signal subspace lies in a Kronecker-structured signal cage determined by the mode-1 and mode-2 subspaces. This observation led to a tensor-refined subspace estimator that projects the conventional matrix-based estimate onto an empirically estimated Kronecker cage before applying ESPRIT-type pitch and DOA retrieval.

The paper also provided a theoretical explanation for the tensor gain. We proved that the oracle refinement using the true Kronecker projector cannot degrade the matrix estimate when the matrix estimate has nonzero overlap with the true subspace. We then decomposed the empirical gain into oracle improvement and empirical projector loss, yielding a deterministic sufficient condition for positive tensor gain. Under additive complex Gaussian noise, this condition was further converted into a high-probability guarantee using subspace perturbation bounds and concentration results for the structured Hankel noise unfoldings.

The numerical results support both the estimation method and the theory. The proposed tensor refinement consistently improves subspace accuracy and pitch/DOA RMSE in low-to-moderate SNR regimes, including closely spaced sources, harmonically related sources, and coherent-source settings. Additional experiments verify the deterministic bounds, demonstrate the sweet-spot phenomenon in which gain is largest at intermediate matrix-estimate quality, and show that the certified gain window shrinks as the empirical Kronecker projector becomes less accurate. Comparisons with the matrix baseline and TT-MDL further indicate that the proposed Kronecker-refined estimator provides a favorable balance between structure exploitation and subspace-estimation robustness. Our theoretical analysis decomposes the tensor gain into multiple factors that one could attempt to optimize through the array configuration, presenting an interesting opportunity for possible future work.

\section*{Acknowledgments}

This work was partially supported by NSF Grant CCF-2106834. 

The authors acknowledge the use of Claude, ChatGPT and Gemini for analytical and grammatical feedback, as well as collaborative code and text generation impacting all sections of the paper. AI-generated content has been carefully verified and heavily edited (or entirely rewritten) by the authors, who take full responsibility for the final work.

\bibliographystyle{IEEEtran}
\bibliography{refs}

\newpage

\appendix[(Supplementary Material)]

\subsection{Supporting proofs for Section~\ref{sec:oraclegain}}

We first prove Lemma~\ref{lem:floor}.

\begin{proof}
By~\eqref{eq:fixed}, $\range(\Utrue)\subseteq\range(\PK)$, hence
$\Utrue\Utrue^{H}\preceq\PK$ as orthogonal projectors and
$\spec{\PK\mathbf v}\ge\spec{\Utrue\Utrue^{H}\mathbf v}$ for all
$\mathbf v$. Therefore 
\begin{align}
\rho &= \sigmaL(\PK\Umat) = \min_{\spec{\mathbf w}=1}\spec{\PK\Umat\mathbf w} \nonumber \\
& \ge\min_{\spec{\mathbf w}=1}\spec{\Utrue^{H}\Umat\mathbf w} 
= \sigmaL(\Utrue^{H}\Umat), \label{eq:rhovsa}
\end{align}
since $\Utrue$ has orthonormal columns. The singular values of the
$L\times L$ matrix $\Utrue^{H}\Umat$ are the cosines of the principal
angles, so $\sigmaL(\Utrue^{H}\Umat)=\cos\theta_{\max}=\sqrt{1-d^2(\Umat,\Utrue)}$,
which is positive (and the rank is $L$) when $d(\Umat,\Utrue)<1$.
\end{proof}

We now complete the proof of Proposition~\ref{prop:oracle}.

\begin{proof}
If $d(\Umat,\Utrue) = 1$, the proposition follows immediately from the fact that $d(\Uorc,\Utrue) \le 1$.

If $d(\Umat,\Utrue) < 1$, let $\PK\Umat=\mathbf Q\mathbf R$ be a thin QR factorization; $\mathbf R$
is invertible by Lemma~\ref{lem:floor}. The principal-angle cosines
between $\range(\PK\Umat)$ and $\range(\Utrue)$ are the singular values of
$\Utrue^{H}\mathbf Q=(\Utrue^{H}\Umat)\mathbf R^{-1}$, where we used~\eqref{eq:fixed}. Since
$\mathbf R^{H}\mathbf R=\Umat^{H}\PK\Umat\preceq\Umat^{H}\Umat=\mathbf I$,
we have $\spec{\mathbf R}\le1$ and $\sigmaL(\mathbf R^{-1})\ge1$, so
\begin{align*}
\cos\theta_{\max}^{\mathrm{or}}
&:= \sigmaL\!\big((\Utrue^{H}\Umat)\mathbf R^{-1}\big) \\
&\ge\sigmaL(\Utrue^{H}\Umat)=:\cos\theta_{\max}^{\mathrm{mat}}.
\end{align*}
Hence $\sin\theta_{\max}^{\mathrm{or}}\le\sin\theta_{\max}^{\mathrm{mat}}$,
i.e.\ $d(\Uorc,\Utrue)\le d(\Umat,\Utrue)$. This completes the proof of Proposition ~\ref{prop:oracle}.
\end{proof}

From the derivation above, we see that
\begin{align}
\gor &:= d(\Umat,\Utrue)-d(\Uorc,\Utrue) \nonumber \\
&= \sin\theta_{\max}^{\mathrm{mat}}-\sin\theta_{\max}^{\mathrm{or}} \nonumber \\
&= \sqrt{1-\cos^2\theta_{\max}^{\mathrm{mat}}} - \sqrt{1-\cos^2\theta_{\max}^{\mathrm{or}}} \nonumber \\
&= \sqrt{1-\sigmaL^2(\Utrue^{H}\Umat)} \nonumber \\ & \qquad - \sqrt{1-\sigmaL^2\!\big((\Utrue^{H}\Umat)\mathbf R^{-1}\big)} \nonumber \\
&:= \sqrt{1-a} - \sqrt{1-b}, \label{eq:glowerab}
\end{align}
where $a:=\sigmaL^2(\Utrue^{\herm}\Uhat), b:=\sigmaL^2\!\big(\Utrue^{\herm}\Uhat\mathbf{R}^{-1}\big)$. We establish a lower bound on $b$.

\begin{lemma}[In-cage retained energy]\label{lem:incageenergy}
We have
\begin{equation}\label{eq:bL}
b\;\ge\;\blower:=\frac{a}{a+\|\Epar\|_2^2}.
\end{equation}
\end{lemma}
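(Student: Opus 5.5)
The plan is to rewrite $b$ as a minimum of a generalized Rayleigh quotient, using the orthogonal splitting of $\PK\Uhat$ into its component along $\range(\Utrue)$ and the in-cage remainder $\Epar$. Set $\mathbf C:=\Utrue^{\herm}\Uhat\in\Cplx^{L\times L}$, so that $a=\sigmaL^2(\mathbf C)$.

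First I will show that $\PK-\PiS$ is itself an orthogonal projector, namely onto $\Ksub\cap\range(\Utrue)^\perp$. This holds because $\range(\Utrue)\subseteq\Ksub$ by~\eqref{eq:fixed}, which gives $\PK\PiS=\PiS\PK=\PiS$. Two consequences follow:
\[
\PK\Uhat=\Utrue\mathbf C+\Epar, \qquad \Utrue^{\herm}\Epar=\mathbf 0 .
\]

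Next I will compute the Gram matrix of $\PK\Uhat$. With the thin QR factorization $\PK\Uhat=\mathbf Q\mathbf R$ from the proof of Proposition~\ref{prop:oracle}, we have
\[
\mathbf R^{\herm}\mathbf R=\Uhat^{\herm}\PK\Uhat=(\PK\Uhat)^{\herm}(\PK\Uhat)=\mathbf C^{\herm}\mathbf C+\Epar^{\herm}\Epar .
\]
The cross terms vanish precisely because $\Utrue^{\herm}\Epar=\mathbf 0$. This orthogonality is the one step I expect to need care: everything downstream depends on the cross terms being exactly zero, and that in turn rests on the containment~\eqref{eq:fixed}.

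Then I will express $b$ as a ratio. By Lemma~\ref{lem:floor} and Assumption 5, $\mathbf R$ is invertible. Substituting $\mathbf y=\mathbf R^{-1}\mathbf x$ in the variational characterization of the smallest singular value gives
\[
b=\min_{\mathbf x\neq\mathbf 0}\frac{\|\mathbf C\mathbf R^{-1}\mathbf x\|_2^2}{\|\mathbf x\|_2^2}
=\min_{\mathbf y\neq \mathbf 0}\frac{\mathbf y^{\herm}\mathbf C^{\herm}\mathbf C\mathbf y}{\mathbf y^{\herm}\mathbf C^{\herm}\mathbf C\mathbf y+\mathbf y^{\herm}\Epar^{\herm}\Epar\mathbf y}.
\]

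Finally, I will bound the ratio. Write $t:=\|\mathbf C\mathbf y\|_2^2$ and $s:=\|\Epar\mathbf y\|_2^2$. Then $t\ge a\|\mathbf y\|_2^2>0$ and $s\le\|\Epar\|_2^2\|\mathbf y\|_2^2$. The map $(t,s)\mapsto t/(t+s)$ is nondecreasing in $t$ and nonincreasing in $s$ for $t>0$ and $s\ge0$. Hence each ratio is at least
\[
\frac{a\|\mathbf y\|_2^2}{a\|\mathbf y\|_2^2+\|\Epar\|_2^2\|\mathbf y\|_2^2}=\frac{a}{a+\|\Epar\|_2^2}.
\]
Taking the minimum over $\mathbf y$ yields~\eqref{eq:bL}.
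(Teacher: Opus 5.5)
Your proof is correct and follows essentially the same route as the paper. You use~\eqref{eq:fixed} to split $\PK=\PiS+(\PK-\PiS)$ into orthogonal projectors, which gives $\mathbf R^{\herm}\mathbf R=\mathbf C^{\herm}\mathbf C+\Epar^{\herm}\Epar$; you then rewrite $b$ as a generalized Rayleigh quotient via $\mathbf y=\mathbf R^{-1}\mathbf x$ and bound it below by $a/(a+\|\Epar\|_2^2)$. The only difference is cosmetic: the paper phrases the last step as $\lmax(\mathbf G_1^{-1}\mathbf G_2)\le\lmax(\mathbf G_2)/\lmin(\mathbf G_1)$, whereas you bound $\|\mathbf C\mathbf y\|_2^2$ and $\|\Epar\mathbf y\|_2^2$ directly, which is the same inequality.
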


\begin{proof}
Because $\PK$ is an orthogonal projector,
$\mathbf{R}^{\herm}\mathbf{R}=\Uhat^{\herm}\PK^{\herm}\PK\Uhat=\Uhat^{\herm}\PK\Uhat$.
Split $\PK=\PiS+(\PK-\PiS)$ into orthogonal projectors with orthogonal ranges (as
$\range(\Utrue)\subseteq\range(\PK)$). Writing $\mathbf{M}:=\Utrue^{\herm}\Uhat$ and using the definition of $\Epar$,
\begin{align*}
&\mathbf{R}^{\herm}\mathbf{R} \\&=\Uhat^{\herm}\PiS\,\Uhat+\Uhat^{\herm}(\PK-\PiS)\Uhat \\
&=\mathbf{M}^{\herm}\mathbf{M}+\Epar^{\herm}\Epar,
\end{align*}
since $(\PK-\PiS)$ is idempotent and Hermitian. With $\mathbf{z}:=\mathbf{R}^{-1}\mathbf{x}$,
\begin{align*}
b&=\sigmaL^2(\mathbf{M}\mathbf{R}^{-1})
=\min_{\mathbf{z}\neq0}
\frac{\mathbf{z}^{\herm}\mathbf{M}^{\herm}\mathbf{M}\,\mathbf{z}}
     {\mathbf{z}^{\herm}\mathbf{R}^{\herm}\mathbf{R}\,\mathbf{z}} \\
&=\min_{\mathbf{z}\neq0}
\frac{\mathbf{z}^{\herm}\mathbf{G}_1\mathbf{z}}
     {\mathbf{z}^{\herm}\mathbf{G}_1\mathbf{z}+\mathbf{z}^{\herm}\mathbf{G}_2\mathbf{z}} =\frac{1}{1+\lmax(\mathbf{G}_1^{-1}\mathbf{G}_2)},
\end{align*}
where $\mathbf{G}_1:=\mathbf{M}^{\herm}\mathbf{M}\succ0$ (as $a=\lmin(\mathbf{G}_1)>0$ due to Assumption 5)
and $\mathbf{G}_2:=\Epar^{\herm}\Epar\succeq0$. Bounding the generalized eigenvalue,
\begin{equation*}
\lmax(\mathbf{G}_1^{-1}\mathbf{G}_2)
=\max_{\mathbf{z}\neq0}
\frac{\mathbf{z}^{\herm}\mathbf{G}_2\mathbf{z}}{\mathbf{z}^{\herm}\mathbf{G}_1\mathbf{z}}
\le\frac{\lmax(\mathbf{G}_2)}{\lmin(\mathbf{G}_1)}=\frac{\|\Epar\|_2^2}{a},
\end{equation*}
so $b\ge\big(1+\|\Epar\|_2^2/a\big)^{-1}=a/(a+\|\Epar\|_2^2)=\blower$.
\end{proof}

We are now prepared to prove Theorem~\ref{thm:oraclebounds}.

\begin{proof}
From~\eqref{eq:glowerab},
$g_{\mathrm{oracle}}=\sqrt{1-a}-\sqrt{1-b}$ with
$b=\sigmaL^2(\Utrue^{\herm}\Uhat\mathbf{R}^{-1})$, and from Lemma~\ref{lem:incageenergy}, $b\ge \blower=a/(a+\|\Epar\|_2^2)$.

The map $t\mapsto\sqrt{1-t}$ is decreasing on $[0,1]$ and
$\blower\le b\le1$, so $\sqrt{1-b}\le\sqrt{1-\blower}$ and therefore
\begin{align*}
g_{\mathrm{oracle}}=\sqrt{1-a}-\sqrt{1-b} &\ge \sqrt{1-a}-\sqrt{1-\blower} \\
&= \sqrt{1-a} - \dfrac{\sqrt{\|\Epar\|_2^2}}{\sqrt{a+\|\Epar\|_2^2}}.
\end{align*}

Note also that 
\begin{align*}
\sigmaL^2\!\big((\Utrue^{H}\Umat)\mathbf R^{-1}\big)
&\leq \sigmaL^2\!\big(\Utrue^{H}\Umat\big) \|\mathbf{R}^{-1}\|_2^2 \\
&= \frac{\sigmaL^2\!\big(\Utrue^{H}\Umat\big)}{\rho^2}. 
\end{align*}
Therefore, we also have an upper bound on the oracle gain:
\[
\gor = \sqrt{1-a}-\sqrt{1-b} \le \sqrt{1-a}-\sqrt{1-a/\rho^2}.
\]
This completes the proof of Theorem~\ref{thm:oraclebounds}.
\end{proof}

\subsection{Supporting proofs for Section~\ref{sec:emploss}}

Here we prove Theorem~\ref{thm:exact}.

\begin{proof}
Define $\mathbf B:=\PK\Umat$ (so $\range(\mathbf B)=\range(\Uorc)$ and $\sigmaL(\mathbf B)=\rho$). Next, define 
$\widehat{\mathbf B}:=\PKhat\Umat$. We can then write $\widehat{\mathbf B} = \mathbf B + \mathbf F$, where $\mathbf F:=(\PKhat-\PK)\Umat$. It follows that $\spec{\mathbf F}\le\spec{\PKhat-\PK}\spec{\Umat}=\eta$. 

Since $\range(\widehat{\mathbf B})$ is $L$-dimensional under Assumption~5, Weyl's inequality gives $\sigmaL(\widehat{\mathbf B}) = \sigmaL(\mathbf{B+F})\ge \sigmaL(\mathbf{B}) - \|\mathbf{F}\|_2 \ge \rho-\eta > 0$. 

For any vector $\mathbf x$, $(\mathbf I-\Proj_{\mathbf B})\mathbf B\mathbf x=\mathbf 0$, hence $\spec{(\mathbf I-\Proj_{\mathbf B})\widehat{\mathbf B}\mathbf x} =\spec{(\mathbf I-\Proj_{\mathbf B})\mathbf F\mathbf x}\le\eta\spec{\mathbf x}$, and hence for any unit vector $\mathbf v=\widehat{\mathbf B}\mathbf x/\spec{\widehat{\mathbf B}\mathbf x}\in\range(\widehat{\mathbf B})$,
\begin{equation*}
\spec{(\mathbf I-\Proj_{\mathbf B})\mathbf v}
= \frac{\spec{(\mathbf I-\Proj_{\mathbf B})\widehat{\mathbf B}\mathbf x}}{\spec{\widehat{\mathbf B}\mathbf x}}  \le\frac{\eta}{\sigmaL(\widehat{\mathbf B})} \le\frac{\eta}{\rho-\eta}.
\end{equation*}
Taking the supremum over such $\mathbf v$ gives
$d(\Uten,\Uorc)\le\eta/(\rho-\eta)$. By the triangle inequality,
\begin{align*}
\lem &= d(\Uten,\Utrue)-d(\Uorc,\Utrue) \\
&\le (d(\Uorc,\Utrue) + d(\Uten,\Uorc)) - d(\Uorc,\Utrue) \\
&= d(\Uten,\Uorc) \le \frac{\eta}{\rho-\eta}.
\end{align*}
\end{proof}

\subsection{Supporting proofs for Section~\ref{sec:prob}}

The proof of Lemma~\ref{lem:rholower} (the fact that $\rho^2 \ge a$) is established in~\eqref{eq:rhovsa}. 

Next we prove Lemma~\ref{lem:kronecker_tele}.

\begin{proof}
We express the difference between the two Kronecker product projectors using a standard telescoping identity:
\begin{align*}
\PKhat - \PK &= \widehat{\mathbf T}_2 \otimes \widehat{\mathbf T}_1 - \mathbf T_2 \otimes \mathbf T_1 \\
&= (\widehat{\mathbf T}_2 - \mathbf T_2) \otimes \widehat{\mathbf T}_1 + \mathbf T_2 \otimes (\widehat{\mathbf T}_1 - \mathbf T_1).
\end{align*}
Noting that the spectral norm of a Kronecker product satisfies $\spec{\mathbf{A} \otimes \mathbf{B}} = \spec{\mathbf{A}}\spec{\mathbf{B}}$, we have:
\[
\spec{\PKhat - \PK} \le \spec{\widehat{\mathbf T}_2 - \mathbf T_2} \spec{\widehat{\mathbf T}_1} + \spec{\mathbf T_2} \spec{\widehat{\mathbf T}_1 - \mathbf T_1}.
\]
Note that $\spec{\widehat{\mathbf T}_1} = 1$ and $\spec{\mathbf T_2} = 1$. Substituting these values along with the definitions of $\delta_1$ and $\delta_2$ yields:
\[
\eta \le \delta_2 \cdot 1 + 1 \cdot \delta_1 = \delta_1 + \delta_2.
\]
\end{proof}

From this follows the proof of Corollary~\ref{cor:wedin_sin}.

\begin{proof}
By Wedin's $\sin\theta$ theorem \cite[Ch. V, Sec. 4.1, Theorem 4.4]{StewartSun1990}, \cite[Sec. 2.3]{CaiZhang2018} applied to the singular subspaces of a matrix under additive deterministic perturbations, the canonical distance between the true and estimated singular subspaces is bounded by:
\begin{align*}
\delta_i &= \spec{\widehat{\mathbf T}_i - \mathbf T_i} = \|\sin\Theta( \mathbf{\widehat U}_i,\mathbf{U}_i)\|_2  \\ 
&\le \frac{ \max\{\|\mathbf{Q}_i \mathbf{\widehat V}_i\|_2,\|\mathbf{\widehat U}_i^H \mathbf{Q}_i\|_2\} }{ \sigmaL(\mathbf{X}_i)} \le \frac{\spec{\mathbf{Q}_i}}{\sigmaL(\mathbf{X}_i)}                                \\
&\le \frac{\spec{\mathbf{Q}_i}}{\sigmaL(\mathbf{S}_i) - \spec{\mathbf{Q}_i}} = \frac{\spec{\mathbf{Q}_i}}{\gamma_i - \spec{\mathbf{Q}_i}},
\end{align*}
where the last inequality follows by Weyl's inequality: $\sigmaL(\mathbf{X}_i) = \sigmaL(\mathbf{S}_i + \mathbf{Q}_i) \ge \sigmaL(\mathbf{S}_i) - \|\mathbf{Q}_i \|_2$.

Combining this with Lemma~\ref{lem:kronecker_tele} yields the stated upper bound.
\end{proof}

Next we prove Lemma~\ref{lem:mode1noise}.

\begin{proof}
Let \(\mathcal{\mathbf H}_1:\mathbb C^{R\times N}\to\mathbb C^{R\times MK}\) 
denote the linear map that sends \(\mathbf Q\) to its mode-1 unfolding \(\mathbf Q_1\). For any \(\mathbf A,\mathbf B\in\mathbb C^{R\times N}\), the reverse triangle inequality and the inequality \(\|\cdot\|_2\le \|\cdot\|_F\) give 
\[ 
\left| \|\mathcal{\mathbf H}_1(\mathbf A)\|_2 - \|\mathcal{\mathbf H}_1(\mathbf B)\|_2 \right| \le \|\mathcal{\mathbf H}_1(\mathbf A-\mathbf B)\|_2 \le \|\mathcal{\mathbf H}_1(\mathbf A-\mathbf B)\|_F. 
\] 
Let \(\mathbf E=\mathbf A-\mathbf B\). Then 
\begin{align*} 
\|\mathcal{\mathbf H}_1(\mathbf E)\|_F^2 \le \min(M,K)\|\mathbf E\|_F^2,
\end{align*}
owing to the fact that $\mathcal{\mathbf H}_1(\mathbf E)$ contains repeated copies of the entries of $\mathbf E$, with no entry of $\mathbf E$ appearing more than $\min(M,K)$ times in $\mathcal{\mathbf H}_1(\mathbf E)$. Thus,
\begin{align*}
\left|
\| \mathbf{H}_1(\mathbf A)\|_2
-
\|\mathbf H_1(\mathbf B)\|_2
\right|
\le \sqrt{\min(M,K)}\|\mathbf A-\mathbf B\|_F.
\end{align*}
Hence the mapping \(\mathbf Q\mapsto \|\mathcal{\mathbf H}_1(\mathbf Q)\|_2\) is \(\sqrt{\min(M,K)}\)-Lipschitz. 

Now write
\[
    \mathbf Q=\sigma\mathbf W,
    \qquad
    \mathbf W=\frac{\mathbf G_R+i\mathbf G_I}{\sqrt2},
\]
where \(\mathbf G_R,\mathbf G_I\in\mathbb R^{R\times N}\) have i.i.d. standard
real Gaussian entries. Define
\[
    F(\mathbf G_R,\mathbf G_I)
    :=
    \left\|
    \mathbf H_1
    \left(
    \sigma\frac{\mathbf G_R+i\mathbf G_I}{\sqrt2}
    \right)
    \right\|_2.
\]
From the Lipschitz bound above, for two real inputs
\((\mathbf G_R,\mathbf G_I)\) and
\((\mathbf G_R',\mathbf G_I')\), we have
\[
\begin{aligned}
&
\left|
F(\mathbf G_R,\mathbf G_I)
-
F(\mathbf G_R',\mathbf G_I')
\right|  \\
&\le
\sigma\sqrt{\min(M,K)}
\left\|
\frac{
(\mathbf G_R-\mathbf G_R')
+
i(\mathbf G_I-\mathbf G_I')
}{\sqrt2}
\right\|_F \\
&=
\sigma\sqrt{\frac{\min(M,K)}{2}}
\left(
\|\mathbf G_R-\mathbf G_R'\|_F^2
+
\|\mathbf G_I-\mathbf G_I'\|_F^2
\right)^{1/2} \\
&= \sigma\sqrt{\frac{\min(M,K)}{2}}
\left\| [\mathbf G_R ~ \mathbf G_I] - [\mathbf G_R' ~ \mathbf G_I'] \right\|_F.
\end{aligned}
\]
Therefore, as a function of the underlying real Gaussian vector
\((\mathbf G_R,\mathbf G_I)\), \(F\) is Lipschitz with constant
\begin{equation}
    J=\sigma\sqrt{\frac{\min(M,K)}{2}}.
    \label{eq:Jlip}
\end{equation}
By the standard Gaussian concentration inequality for Lipschitz functions~\cite[Theorem 5.6]{boucheron2013non},
\[
    \mathbb P\left(
    F-\mathbb E F>t
    \right)
    \le
    \exp\left(
    -\frac{t^2}{2J^2}
    \right).
\]
Substituting \(F=\|\mathbf Q_1\|_2\) and using~\eqref{eq:Jlip} gives
\begin{equation}
\mathbb P\left( \|\mathbf Q_1\|_2-\mathbb E\|\mathbf Q_1\|_2>t \right) \le \exp\left( -\frac{t^2}{\sigma^2\min(M,K)} \right). \label{eq:q1conc}
\end{equation}

It remains to bound \(\mathbb E\|\mathbf Q_1\|_2\). To do this, we exploit an  algebraic identity for the mode-1 unfolding $\mathbf{Q}_1 \in \mathbb{C}^{R \times MK}$. By tracing the column-repetition profile of the spatial-smoothing tensor construction, the columns of $\mathbf{Q}_1$ consist of shifted, overlapping blocks of $\mathbf{Q}$. This implies that the row spaces are modulated by a deterministic column-selection operator $\mathbf{H} \in \mathbb{R}^{N \times MK}$ such that $\mathbf{Q}_1 = \mathbf{Q}\mathbf{H}$.

The Gram matrix satisfies:
\[
\mathbf{Q}_1 \mathbf{Q}_1^H = \mathbf{Q} \mathbf{H}\mathbf{H}^H \mathbf{Q}^H = \mathbf{Q} \mathbf{D} \mathbf{Q}^H,
\]
where $\mathbf{D} \in \mathbb{R}^{N \times N}$ is a diagonal matrix whose entries $d_n$ count the number of times the $n$-th column of $\mathbf{Q}$ is repeated across the tensor slices:
\[
d_n = \min(n, \, M, \, K, \, N - n + 1), \quad n = 1, \dots, N.
\]
The maximum entry of $\mathbf{D}$ is bounded by the window dynamics: $d_{\max} = \max_n d_n = \min(M, K)$. This yields:
\begin{align*}
\spec{\mathbf{Q}_1} &= \spec{\mathbf{Q}\mathbf{D}^{1/2}} \le \sigma \sqrt{\min(M, K)} \spec{\mathbf{W}} \\
& = \sigma \sqrt{\min(M, K)} \left(\frac{1}{\sqrt{2}} \spec{\mathbf G_R} + \frac{1}{\sqrt{2}} \spec{\mathbf G_I}\right).
\end{align*}
Invoking the sharp concentration of the spectral norm for standard real Gaussian matrices \cite[Theorem 7.3.1]{vershynin2018high}, we have $\mathbb E\|\mathbf G_R\|_2 \le \sqrt{R} + \sqrt{N}$ and $\mathbb E\|\mathbf G_I\|_2 \le \sqrt{R} + \sqrt{N}$. Therefore, 
\begin{equation}   
\mathbb E\|\mathbf{Q}_1\|_2 \le \sigma \sqrt{2\min(M, K)} \left(\sqrt{R}+\sqrt{N}\right).
\label{eq:expQ1}
\end{equation}
Combining~\eqref{eq:expQ1} with~\eqref{eq:q1conc}, we arrive at the tail bound in Lemma~\ref{lem:mode1noise}.
\end{proof}

We also prove Lemma~\ref{lem:mode2noise}.

\begin{proof} 
Let \(\mathcal{\mathbf H}_2:\mathbb C^{R\times N}\to\mathbb C^{M\times RK}\) 
denote the linear map that sends \(\mathbf Q\) to its mode-2 unfolding \(\mathbf Q_2\). Repeating the arguments from the proof of Lemma~\ref{lem:mode1noise} and using the fact that $\mathcal{\mathbf H}_2(\mathbf E)$ contains the same entries as $\mathcal{\mathbf H}_1(\mathbf E)$ in a different arrangement, we conclude that the mapping \(\mathbf Q\mapsto \|\mathcal{\mathbf H}_2(\mathbf Q)\|_2\) is \(\sqrt{\min(M,K)}\)-Lipschitz. Using Gaussian concentration for Lipschitz functions \cite[Theorem 5.6]{boucheron2013non}, we then have 
\[
\mathbb P\left( \|\mathbf Q_2\|_2-\mathbb E\|\mathbf Q_2\|_2>t \right) \le \exp\left( -\frac{t^2}{\sigma^2\min(M,K)} \right).  
\]

It remains to bound \(\mathbb E\|\mathbf Q_2\|_2\). For \(k=1,\ldots,K\), define 
\[ 
\mathbf Q^{(k)}:=\mathbf Q(:,k:k+M-1)^T\in\mathbb C^{M\times R}. 
\] 
For any \(\mathbf z\in\mathbb C^{RK}\), write \[ \mathbf z= \begin{bmatrix} \mathbf z^{(1)}\\ \vdots\\ \mathbf z^{(K)} \end{bmatrix}, \qquad \mathbf z^{(k)}\in\mathbb C^R. \] 
Then 
\[ 
\mathbf Q_2\mathbf z = \sum_{k=1}^K \mathbf Q^{(k)}\mathbf z^{(k)}. 
\] 
Therefore, 
\[ 
\|\mathbf Q_2\mathbf z\|_2 \le \sum_{k=1}^K \|\mathbf Q^{(k)}\|_2\|\mathbf z^{(k)}\|_2. 
\] 
Since \(\mathbf Q^{(k)}\) is the transpose of a submatrix of \(\mathbf Q\), $\|\mathbf Q^{(k)}\|_2\le \|\mathbf Q\|_2$. Thus 
\[ 
\|\mathbf Q_2\mathbf z\|_2 \le \|\mathbf Q\|_2\sum_{k=1}^K\|\mathbf z^{(k)}\|_2 \le \sqrt K\|\mathbf Q\|_2\|\mathbf z\|_2. 
\] 
Taking the supremum over \(\|\mathbf z\|_2=1\), we get $\|\mathbf Q_2\|_2\le \sqrt K\|\mathbf Q\|_2$. Since \(\mathbf Q=\sigma\mathbf W\), $\mathbb E\|\mathbf Q_2\|_2 \le \sigma\sqrt K\,\mathbb E\|\mathbf W\|_2$. Using the standard expectation bound \cite[Theorem 7.3.1]{vershynin2018high} as in the proof of Lemma~\ref{lem:mode1noise}, we have
$\mathbb E\|\mathbf W\|_2\le \sqrt{2}\left( \sqrt R+\sqrt N \right)$. This gives 
\[ 
\mathbb E\|\mathbf Q_2\|_2 \le \sigma\sqrt{2K}(\sqrt R+\sqrt N). 
\] 
Combining this expectation bound with the concentration inequality proves the result. 
\end{proof}

Next we prove Lemma~\ref{lemma:baseline_alignment_lower}.

\begin{proof}
From the geometric relationship between principal angles and canonical subspace distance metrics, the baseline distance is given by 
\[d(\Umat,\Utrue)=\sqrt{1-\sigmaL^{2}(\Umat^{H}\Utrue)}.\]
Squaring both sides yields the identity 
\begin{equation}
\label{eq:adrelation}
a=\sigmaL^2(\Utrue^H \Umat) = 1 - d^2(\Umat, \Utrue).
\end{equation}

By applying Wedin's $\sin\theta$ theorem~\cite[Ch. V, Sec. 4.1, Theorem 4.4]{StewartSun1990}, \cite[Sec. 2.3]{CaiZhang2018} to the left singular subspaces of $\mathbf{X}_3$ under the deterministic additive noise perturbation $\mathbf{Q}_3$, the subspace distance is bounded by 
\[d(\Umat, \Utrue) \le \frac{\|\mathbf{Q}_3\|_2}{\gamma_3 - \|\mathbf{Q}_3\|_2}.\]
Substituting this upper bound into~\eqref{eq:adrelation} completes the proof.
\end{proof}

Next we prove Lemma~\ref{lemma:mode3_noise_concentration}.

\begin{proof}
Let $\mathbf H_3:\mathbb C^{R\times N}\to\mathbb C^{RM\times K}$ denote the linear map that sends the base noise matrix \(\mathbf Q\) to its mode-3 unfolding \(\mathbf Q_3\). Repeating the arguments from the proof of Lemma~\ref{lem:mode1noise} and using the fact that $\mathcal{\mathbf H}_3(\mathbf E)$ contains the same entries as $\mathcal{\mathbf H}_1(\mathbf E)$ in a different arrangement, we conclude that the mapping \(\mathbf Q\mapsto \|\mathcal{\mathbf H}_3(\mathbf Q)\|_2\) is \(\sqrt{\min(M,K)}\)-Lipschitz. Using Gaussian concentration for Lipschitz functions \cite[Theorem 5.6]{boucheron2013non}, we then have 
\[
\mathbb P\left( \|\mathbf Q_3\|_2-\mathbb E\|\mathbf Q_3\|_2>t \right) \le \exp\left( -\frac{t^2}{\sigma^2\min(M,K)} \right).  
\]

It remains to bound \(\mathbb E\|\mathbf Q_3\|_2\). For any
\(\mathbf z\in\mathbb C^K\), define vectors
\(\mathbf g_m(\mathbf z)\in\mathbb C^N\), \(m=1,\ldots,M\), by
\[
    [\mathbf g_m(\mathbf z)]_{m+k-1}=z_k,
    \qquad k=1,\ldots,K,
\]
and zero elsewhere. Each \(\mathbf g_m(\mathbf z)\) is a shifted copy of
\(\mathbf z\), so
\[
    \|\mathbf g_m(\mathbf z)\|_2=\|\mathbf z\|_2.
\]
By the definition of \(\mathbf Q_3\), the product \(\mathbf Q_3\mathbf z\)
can be written, up to a fixed permutation of entries, as
\[
    \mathbf Q_3\mathbf z
    =
    \begin{bmatrix}
    \mathbf Q\mathbf g_1(\mathbf z)\\
    \mathbf Q\mathbf g_2(\mathbf z)\\
    \vdots\\
    \mathbf Q\mathbf g_M(\mathbf z)
    \end{bmatrix}.
\]
Therefore,
\[
\begin{aligned}
    \|\mathbf Q_3\mathbf z\|_2^2
    &=
    \sum_{m=1}^M
    \|\mathbf Q\mathbf g_m(\mathbf z)\|_2^2 \\
    &\le
    \|\mathbf Q\|_2^2
    \sum_{m=1}^M
    \|\mathbf g_m(\mathbf z)\|_2^2 \\
    &=
    M\|\mathbf Q\|_2^2\|\mathbf z\|_2^2 .
\end{aligned}
\]
Taking the supremum over all \(\|\mathbf z\|_2=1\), we get
\[
    \|\mathbf Q_3\|_2
    \le
    \sqrt M\,\|\mathbf Q\|_2 .
\]
Hence,
\[
    \mathbb E\|\mathbf Q_3\|_2
    \le
    \sqrt M\,\mathbb E\|\mathbf Q\|_2.
\]
Using the standard expectation bound \cite[Theorem 7.3.1]{vershynin2018high} as in the proof of Lemma~\ref{lem:mode2noise}, we have
$\mathbb E\|\mathbf Q\|_2\le \sigma \sqrt{2}\left( \sqrt R+\sqrt N \right)$. This gives 
\[
    \mathbb E\|\mathbf Q_3\|_2
    \le
    \sigma\sqrt{2M}(\sqrt R+\sqrt N).
\]
Combining this expectation bound with the concentration inequality proves
the result.
\end{proof}

Next we prove Lemma~\ref{lem:det}. 

\begin{proof}
The $L$-th SVD relation of $\Xt$ is $\Xt\vLh=\widehat\sigma_L\,\uLh$. Since
$\range(\St)\subseteq\range(\Utrue)$ we have $\UperpH\St=\mathbf 0$, so
\[
\UperpH\Qt\,\vLh=\UperpH\Xt\,\vLh=\widehat\sigma_L\,\UperpH\uLh .
\]
As $\uLh$ is a column of $\Uhat$, $\|\UperpH\uLh\|_2\le\|\UperpH\Uhat\|_2
=d(\Uhat,\Utrue)$; hence $\|\UperpH\Qt\vLh\|_2\le\widehat\sigma_L\,d(\Uhat,\Utrue)$. Weyl's
inequality gives $\widehat\sigma_L=\sigmaL(\Xt)\le\sigmaL(\St)+\|\Qt\|_2=\gamma_3+\|\Qt\|_2$.
Combining yields 
\begin{equation*}
d(\Uhat,\Utrue)\;\ge\;
\frac{\big\|\UperpH\Qt\,\vLh\big\|_2}{\gamma_3+\|\Qt\|_2}.
\end{equation*}
Substituting this lower bound into~\eqref{eq:adrelation} completes the proof.
\end{proof}

Before proving Lemma~\ref{lem:prob}, we establish the following structural result.

Write $\Sig_{\vL}:=\sigma^{-2}\,\E\big[(\Qt\vL)(\Qt\vL)^{\herm}\big]\in\mathbb{C}^{RM\times RM}$ for the
normalized second-moment matrix of the weakest-direction noise leakage.

\begin{lemma}[Structure of the second-moment matrix]\label{lem:secmom}
The second-moment matrix has the Kronecker form $\Sig_{\vL}=\mathbf G\otimes\mathbf I_R$, where $\mathbf G=\mathbf S^{\herm}\mathbf S\in\mathbb C^{M\times M}$ is the Gram matrix of the $M$ unit-norm shifts of $\vL$, collected as the convolution matrix $\mathbf S=[\mathbf g_1\,\cdots\,\mathbf g_M]\in\mathbb C^{N\times M}$
of the length-$K$ filter $\vL$; consequently $\|\Sig_{\vL}\|_2\le\mMK$.
\end{lemma}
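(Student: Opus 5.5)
The plan is to reuse the shift representation from the proof of Lemma~\ref{lemma:mode3_noise_concentration}. For a fixed vector $\mathbf z=\vL\in\mathbb C^K$, and up to the fixed entry permutation used there, we have
\[
\Qt\vL=\big[(\mathbf Q\mathbf g_1)^{\top}\ \cdots\ (\mathbf Q\mathbf g_M)^{\top}\big]^{\top},
\]
where $\mathbf g_m=\mathbf g_m(\vL)\in\mathbb C^N$ is the $(m-1)$-step shifted copy of $\vL$. Each block is $\mathbf Q\mathbf g_m\in\mathbb C^R$.

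\textbf{Block computation.} I will compute the $(m,m')$ block of $\E[(\Qt\vL)(\Qt\vL)^{\herm}]$ directly:
\[
\E[\mathbf Q\mathbf g_m\mathbf g_{m'}^{\herm}\mathbf Q^{\herm}]=\tr(\mathbf g_m\mathbf g_{m'}^{\herm})\,\sigma^2\mathbf I_R=\sigma^2(\mathbf g_{m'}^{\herm}\mathbf g_m)\,\mathbf I_R.
\]
This uses the identity $\E[\mathbf Q\mathbf B\mathbf Q^{\herm}]=\sigma^2\tr(\mathbf B)\mathbf I_R$. That identity holds because the rows of $\mathbf Q$ are i.i.d.\ $\mathcal{CN}(\mathbf 0,\sigma^2\mathbf I_N)$, and circularity kills the pseudo-covariance terms. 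The block scalars form $\mathbf S^{\herm}\mathbf S$, up to a transpose or conjugation convention. With the stacking order chosen so that the shift index is the outer index (matching $\mathbf a=\mathbf a_t\otimes\mathbf a_s$), this gives $\Sig_{\vL}=\mathbf G\otimes\mathbf I_R$. I will fix the convention so that $\mathbf G=\mathbf S^{\herm}\mathbf S$ exactly; conjugating $\mathbf G$ would not change its norm in any case. Each column of $\mathbf S$ has unit norm, since $\|\vL\|_2=1$ and a shift is an isometry.

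\textbf{Norm bound.} Since $\|\mathbf G\otimes\mathbf I_R\|_2=\|\mathbf G\|_2=\|\mathbf S\|_2^2$, it suffices to show $\|\mathbf S\|_2^2\le\mMK$. I see two routes and would use both.
\begin{itemize}
\item[(i)] Bound by the Gram matrix. $\|\mathbf G\|_2\le\max_m\sum_{m'}|\mathbf g_{m'}^{\herm}\mathbf g_m|\le M$ by Cauchy--Schwarz, since every entry of $\mathbf G$ has modulus at most $1$.
\item[(ii)] Bound by the convolution matrix. $\|\mathbf S\|_2^2=\|\mathbf S\mathbf S^{\herm}\|_2$, and $\mathbf S$ is an $N\times M$ Toeplitz (convolution) matrix built from the length-$K$ filter $\vL$. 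The map $\mathbf y\mapsto\mathbf S\mathbf y$ is the full linear convolution $\vL*\mathbf y$. By Young's inequality, or by Cauchy--Schwarz on each output sample, which has at most $\min(M,K)$ nonzero terms, we get $\|\vL*\mathbf y\|_2^2\le\|\vL\|_1^2\|\mathbf y\|_2^2\le K\|\mathbf y\|_2^2$.
\end{itemize}
Combining the two routes gives $\|\mathbf S\|_2^2\le\min(M,K)$.

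\textbf{Expected obstacle.} The only delicate point is bookkeeping: making the permutation and stacking order agree with the $\mathbf I_R$ placement in the Kronecker factor, and pinning down the conjugation convention for $\mathbf G$. Neither affects the norm bound. The probabilistic content is just the identity $\E[\mathbf Q\mathbf B\mathbf Q^{\herm}]=\sigma^2\tr(\mathbf B)\mathbf I_R$.
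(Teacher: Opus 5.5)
Your proposal is correct and follows the paper's proof. The block computation via $\E[\mathbf Q\mathbf B\mathbf Q^{\herm}]=\sigma^2\tr(\mathbf B)\mathbf I_R$ is identical. The norm bounds differ only cosmetically: the paper gets $M$ from $\|\mathbf G\|_2\le\tr\mathbf G$, and $K$ from the Toeplitz-symbol bound $\max_\omega\Phi_{\vL}(\omega)\le\|\vL\|_1^2$, which is the Fourier-side form of your Young's-inequality step. Your per-sample Cauchy--Schwarz argument gives $\mMK$ in one stroke. Your caution about conjugation is warranted: the block scalar $\mathbf g_{m'}^{\herm}\mathbf g_m$ is strictly the conjugate of $(\mathbf S^{\herm}\mathbf S)_{mm'}$, but, as you note, this does not affect the norm.
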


\begin{proof}
With $\Qt\vL=[\mathbf Q\mathbf g_1;\dots;\mathbf Q\mathbf g_M]$
(Lemma~\ref{lemma:mode3_noise_concentration} identity) and $\E[\mathbf Q\mathbf B\mathbf Q^{\herm}]=\sigma^2\tr(\mathbf B)\mathbf I_R$ for deterministic
$\mathbf B$, the $(m,m')$ block of $\Sig_{\vL}$ is
$\sigma^{-2}\E[\mathbf Q\mathbf g_m\mathbf g_{m'}^{\herm}\mathbf Q^{\herm}]=(\mathbf g_{m'}^{\herm}\mathbf g_m)\mathbf I_R$, i.e.
\begin{equation}
\Sig_{\vL}=\mathbf G\otimes\mathbf I_R
\label{gir}
\end{equation}
with $\mathbf G=\mathbf S^{\herm}\mathbf S$. Hence $\tr(\Sig_{\vL})=R\,\tr(\mathbf G)=RM$ (each $\|\mathbf g_m\|_2=\|\vL\|_2=1$) 
and $\|\Sig_{\vL}\|_2=\|\mathbf G\|_2$. Two bounds follow: $\|\mathbf G\|_2\le\tr(\mathbf G)=M$ and, since $\mathbf S$ is a finite section of the Toeplitz
operator with symbol
\begin{equation}
\sum_k(v_L)_k e^{-\mathrm i\omega(k-1)},
\end{equation}
$\|\mathbf G\|_2=\|\mathbf S\|_2^2\le\max_\omega\Phi_{\vL}(\omega)\le K$, where
$\Phi_{\vL}(\omega):=\big|\sum_k(v_L)_k e^{-\mathrm i\omega(k-1)}\big|^2$. Together $\|\Sig_{\vL}\|_2\le\mMK$.
\end{proof}

The proof of the lemma above gives the exact Kronecker structure for any unit vector $\mathbf{z} \in \mathbb{C}^K$:
\begin{equation}
\begin{aligned}
\E\big[(\mathbf Q_3\mathbf z)(\mathbf Q_3\mathbf z)^H\big]
&=\sigma^2\,(\mathbf G_{\mathbf z}\otimes\mathbf I_R),
\\
\tr\mathbf G_{\mathbf z}&=M\norm{\mathbf z}_2^2,
\\
\norm{\mathbf G_{\mathbf z}}_2 &\le \min(M,K)\,\norm{\mathbf z}_2^2,
\end{aligned}
\label{eq:lem9}
\end{equation}

We are now in position to prove Lemma~\ref{lem:prob}.

\begin{proof}
We proceed in seven steps.

\medskip\noindent
\emph{Step 1 (Rayleigh quotient).} Define $\mathbf P_{\Utrue} := \Utrue \Utrue^H$. Since $\Uperp\UperpH=\mathbf I_{RM}-\mathbf P_{\Utrue}$
and $\norm{\widehat{\mathbf v}_L}_2=1$,
\begin{align}
\bigl\|\UperpH\mathbf Q_3\widehat{\mathbf v}_L\bigr\|_2^{2}
&=\widehat{\mathbf v}_L^{H}\,\mathbf Q_3^{H}
(\mathbf I_{RM}-\mathbf P_{\Utrue})\mathbf Q_3\,\widehat{\mathbf v}_L
=\widehat{\mathbf v}_L^{H}\mathbf B\,\widehat{\mathbf v}_L
\; \nonumber \\
&\ge\;\lambda_{\min}(\mathbf B)
=\sigma_K\!\bigl((\mathbf I_{RM}-\mathbf P_{\Utrue})\mathbf Q_3\bigr)^2,
\label{eq:rayleigh1}
\end{align}
where
$\mathbf B:=\mathbf Q_3^H(\mathbf I_{RM}-\mathbf P_{\Utrue})\mathbf Q_3
\in\Cplx^{K\times K}$, $\mathbf B\succeq0$. The inequality holds for every unit vector, in particular for the data-dependent
$\widehat{\mathbf v}_L$.

\medskip\noindent
\emph{Step 2 (reduction to a $K\times K$ Gram matrix).}
Let $\mathbf M_0:=\sigma^{-1}(\mathbf I_{RM}-\mathbf P_{\Utrue})\mathbf Q_3
\in\Cplx^{RM\times K}$, so
$\sigma_K((\mathbf I_{RM}-\mathbf P_{\Utrue})\mathbf Q_3)^2
=\sigma^2\lambda_{\min}(\mathbf M_0^H\mathbf M_0)$. Writing
$\bm\Delta:=\mathbf M_0^H\mathbf M_0-\E[\mathbf M_0^H\mathbf M_0]$
and $\lambda_0:=\lambda_{\min}\big(\E[\mathbf M_0^H\mathbf M_0]\big)$,
Weyl's inequality gives
\begin{equation}\label{eq:weyl}
\lambda_{\min}(\mathbf M_0^H\mathbf M_0)
\;\ge\;\lambda_0-\norm{\bm\Delta}_2 .
\end{equation}

\medskip\noindent
\emph{Step 3 (deterministic floor for  $\lambda_0$).}
Let $m := \min(M,K)$. Since $\mathbf I_{RM}-\mathbf P_{\Utrue}$ is an orthogonal projector,
$\mathbf M_0^H\mathbf M_0
=\sigma^{-2}\mathbf Q_3^H\mathbf Q_3
-\sigma^{-2}\mathbf Q_3^H\mathbf P_{\Utrue}\mathbf Q_3$, hence
\[
\E[\mathbf M_0^H\mathbf M_0]=RM\,\mathbf I_K-\mathbf D,
\]
where $\mathbf D:=\sigma^{-2}\,\E\big[\mathbf Q_3^H\mathbf P_{\Utrue}\mathbf Q_3\big]
\succeq0$. Here $\E[\mathbf Q_3^H\mathbf Q_3]=RM\sigma^2\,\mathbf I_K$ because
distinct time indices of $\mathbf W$ are independent:
$(\E[\mathbf Q_3^H\mathbf Q_3])_{jk}
=\sigma^2\sum_{r,m}\E[W_{r,j+m-1}^*\,W_{r,k+m-1}]
=RM\sigma^2\delta_{jk}$. For any unit $\mathbf z\in\Cplx^K$, using
\eqref{eq:lem9}, and cyclicity of the trace,
\begin{align*}
\mathbf z^H\mathbf D\mathbf z
&=\sigma^{-2}\,\E\norm{\mathbf P_{\Utrue}\mathbf Q_3\mathbf z}_2^2
=\tr\!\big(\mathbf P_{\Utrue}(\mathbf G_{\mathbf z}\otimes\mathbf I_R)\big) \\
&\le\rank(\mathbf P_{\Utrue})\,
\norm{\mathbf G_{\mathbf z}\otimes\mathbf I_R}_2
=L\norm{\mathbf G_{\mathbf z}}_2\le Lm ,
\end{align*}
so $\lambda_{\max}(\mathbf D)\le Lm$ and
$\lambda_0=RM-\lambda_{\max}(\mathbf D)\ge RM-Lm$.

\medskip\noindent
\emph{Step 4 (tail bound with explicit constant).}
Fix a unit $\mathbf z\in\Cplx^K$ and set
$D_{\mathbf z}
:=\norm{\mathbf M_0\mathbf z}_2^2-\E\norm{\mathbf M_0\mathbf z}_2^2$.
The vector $\mathbf y:=\mathbf M_0\mathbf z$ is a zero-mean Gaussian random vector with covariance
\[
\mathbf{Cov}_{\mathbf y}
:=\E[\mathbf y\mathbf y^H]
=(\mathbf I_{RM}-\mathbf P_{\Utrue})
(\mathbf G_{\mathbf z}\otimes\mathbf I_R)
(\mathbf I_{RM}-\mathbf P_{\Utrue}),
\]
by \eqref{eq:lem9}. Diagonalize
$\mathbf{Cov}_{\mathbf y}=\mathbf V\bm\Lambda\mathbf V^H$ with
eigenvalues $\lambda_1\ge\dots\ge0$; then
$\mathbf y{=}\mathbf V\bm\Lambda^{1/2}\mathbf w$ with
$\mathbf w$ standard complex Gaussian, so
\[
\norm{\mathbf y}_2^2
{=}\sum_i\lambda_i\,\lvert w_i\rvert^2
=\sum_i\frac{\lambda_i}{2}\,(\xi_i^2+\eta_i^2),
\qquad \xi_i,\eta_i\ \text{i.i.d.\ }N(0,1).
\]
Thus $\norm{\mathbf M_0\mathbf z}_2^2$ is a real Gaussian quadratic
form $\mathbf g^T\mathbf A_{\mathbf z}\mathbf g$,
$\mathbf g\sim N(0,\mathbf I)$, whose matrix
$\mathbf A_{\mathbf z}$ has nonzero spectrum
$\{\lambda_i/2,\ \text{each with multiplicity }2\}$; consequently $\tr\mathbf A_{\mathbf z}=\tr\mathbf{Cov}_{\mathbf y}$, $\norm{\mathbf A_{\mathbf z}}_F^2
=\tfrac12\tr(\mathbf{Cov}_{\mathbf y}^2)$, and $\norm{\mathbf A_{\mathbf z}}_2
=\tfrac12\norm{\mathbf{Cov}_{\mathbf y}}_2$.

The following dimensional bounds follow from \eqref{eq:lem9} and the fact that
projections contract PSD traces and spectral norms:
$\tr\mathbf{Cov}_{\mathbf y}\le R\,\tr\mathbf G_{\mathbf z}=RM$,
$\norm{\mathbf{Cov}_{\mathbf y}}_2
\le\norm{\mathbf G_{\mathbf z}}_2\le m$, and
$\tr(\mathbf{Cov}_{\mathbf y}^2)
\le\norm{\mathbf{Cov}_{\mathbf y}}_2\tr\mathbf{Cov}_{\mathbf y}
\le RMm$; hence
$\norm{\mathbf A_{\mathbf z}}_F^2\le RMm/2$ and
$\norm{\mathbf A_{\mathbf z}}_2\le m/2$. The Gaussian Hanson--Wright
inequality with explicit constant $\tfrac18$
\cite[Lemma.~4]{lamperski2023nonasymptotic}, applied to $\mathbf A_{\mathbf z}$ and to
$-\mathbf A_{\mathbf z}$, then yields the following guarantee for all $r > 0$:
\begin{align}
\Prob(\lvert D_{\mathbf z}\rvert>r)
&\le2\exp\!\Big[-\tfrac18\min\Big(\tfrac{r^2}{RMm/2},
\tfrac{r}{m/2}\Big)\Big] \nonumber 
\\ &=2\exp\!\Big[-\tfrac14\min\Big(\tfrac{r^2}{RMm},
\tfrac{r}{m}\Big)\Big]. \label{eq:phanson}
\end{align}

\medskip\noindent
\emph{Step 5 ($\varepsilon$-net for the operator
norm).}
$\bm\Delta$ is Hermitian and
$\mathbf z^H\bm\Delta\mathbf z=D_{\mathbf z}$, so
$\norm{\bm\Delta}_2=\sup_{\norm{\mathbf z}_2=1}
\lvert D_{\mathbf z}\rvert$. Let $\mathcal N$ be an $\varepsilon$-net of
the unit sphere of $\Cplx^K$ (real dimension $2K$), so
$\lvert\mathcal N\rvert
\le(1+2/\varepsilon)^{2K}$
\cite[Corollary.~4.2.11]{vershynin2018high}. For any unit $\mathbf z$ pick
$\widehat{\mathbf z}\in\mathcal N$ with
$\norm{\mathbf z-\widehat{\mathbf z}}_2\le\varepsilon$;
then
$\lvert D_{\mathbf z}-D_{\widehat{\mathbf z}}\rvert
=\lvert\mathbf z^H\bm\Delta(\mathbf z-\widehat{\mathbf z})
+(\mathbf z-\widehat{\mathbf z})^H\bm\Delta\widehat{\mathbf z}\rvert
\le2\varepsilon\norm{\bm\Delta}_2$, whence
$\norm{\bm\Delta}_2\le(1-2\varepsilon)^{-1}
\max_{\mathcal N}\lvert D_{\widehat{\mathbf z}}\rvert$. A union bound
over $\mathcal N$ with~\eqref{eq:phanson} at level
$(1-2\varepsilon)r$ gives, for all $r>0$,
\begin{align}
&\Prob\big(\norm{\bm\Delta}_2>r\big) \nonumber \\
&\le\Big(1+\tfrac{2}{\varepsilon}\Big)^{2K}\cdot
2\exp\!\Big[-\tfrac14\min\Big(
\tfrac{\big((1-2\varepsilon)r\big)^2}{RMm},\;
\tfrac{(1-2\varepsilon)r}{m}\Big)\Big].
\label{eq:hansen2}
\end{align}

\medskip\noindent
\emph{Step 6 (specialization).}
Set $r = \mu RM$. The quadratic branch of the minimum in~\eqref{eq:hansen2} is the smaller one iff
$(1-2\varepsilon)r\le RM$, i.e.\
$\mu(1-2\varepsilon)\le1$, which holds automatically for $\mu<1$. Hence~\eqref{eq:hansen2} collapses to the closed form
\[
\Prob\big(\norm{\bm\Delta}_2>\mu RM\big)
\le p_{\mathrm{w}}(\varepsilon,\mu).
\]

\medskip\noindent
\emph{Step 7 (assembly).}
On the event $\{\norm{\bm\Delta}_2\le\mu RM\}$, Steps 2--3 give
$\lambda_{\min}(\mathbf M_0^H\mathbf M_0)\ge RM-Lm-\mu RM = RM(1-\mu) - Lm$. If
the right-hand side is negative, the claim \eqref{star_mu} is trivially
true; otherwise take square roots and multiply by $\sigma$. Combining
with~\eqref{eq:rayleigh1} completes the proof of Lemma~\ref{lem:prob}.
\end{proof}

Next we prove Lemma~\ref{lem:incage}.

\begin{proof}
The thin SVD relation $\Xt\Vhat=\Uhat\Shat$ and $\UperpH\St=\mathbf{0}$ give $\UperpH\Qt\Vhat=\UperpH\Uhat\Shat$, so
$\UperpH\Uhat=\UperpH\Qt\Vhat\,\Shat^{-1}$. Hence
\begin{align}
(\I-\PiS)\Uhat &=\Uperp\UperpH\Uhat \nonumber \\
&=(\I-\PiS)\Qt\Vhat\Shat^{-1}. \label{eq:proj1}
\end{align}
Because $\mathrm{range}(\PiS)\subseteq\mathrm{range}(\PK)$ we have $\PK\PiS=\PiS$, so
$(\PK-\PiS)(\I-\PiS)=\PK-\PiS$; left-multiplying~\eqref{eq:proj1} by $(\PK-\PiS)$ then yields 
\begin{equation*}
\Epar=(\PK-\PiS)\,\Qt\,\Vhat\,\Shat^{-1},
\label{eq:identity}
\end{equation*}
Using $\twonorm{\Shat^{-1}}=1/\sigmaL(\Xt)$ and submultiplicativity, we obtain
\[
\twonorm{\Epar}\;\le\;\frac{\twonorm{(\PK-\PiS)\,\Qt\,\Vhat}}{\sigmaL(\Xt)}
\]
Using $\twonorm{\Vhat}=1$ and Weyl with $\sigmaL(\Xt)\ge\gamma_3-\twonorm{\Qt}$, we obtain~\eqref{eq:incage-bound}.
\end{proof}

Next we prove Lemma~\ref{lem:incageenv}.

\begin{proof}
Since $\E[\Qt\Qt^{H}]=\sigma^2 K\,\I_{RM}$ and $\operatorname{tr}(\PK-\PiS)=L^2-L$, we have
$\E\,\fnorm{(\PK-\PiS)\Qt}^2=\sigma^2 K(L^2-L)$, hence
$\E\twonorm{(\PK-\PiS)\Qt}\le\sigma\sqrt{(L^2-L)K}$. The lemma follows from the fact that the mapping \(\mathbf Q\mapsto \|(\PK-\PiS)\mathcal{\mathbf H}_3(\mathbf Q)\|_2\) is \(\sqrt{\min(M,K)}\)-Lipschitz.
\end{proof}

\subsection{Supporting proofs for Section~\ref{sec:finalproof}}

We first characterize the topology of $\gorlower$ to justify the substitution of our directional inequalities.

\begin{lemma}[Monotonicity]
\label{lem:monotonicity}
Let $g(a,E):=\sqrt{1-a}-\sqrt{E/(a+E)}$ on $a\in(0,1),\,E\ge0$. Let
\[
a^* := \frac{4\sqrt{7}-2}{27} \approx 0.318.
\]
If $a \ge a^*$, then $g(a, E)$ is non-increasing in both variables $a$ and $E$. Consequently, if $a^* \le a \le \overline{a} <1$ and $0 \le E \le \overline{E}$, then $g(a,E) \ge g(\overline{a}, \overline{E})$.
\end{lemma}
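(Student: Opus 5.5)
The plan is to treat the two variables separately by elementary calculus, compute $\partial g/\partial a$ explicitly, and identify $a^*$ as the exact threshold at which the worst case over $E$ of this derivative becomes nonpositive. The consequence then follows by a two-step chain of inequalities.

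\emph{Monotonicity in $E$.} For fixed $a\in(0,1)$, the map $E\mapsto E/(a+E)=1-a/(a+E)$ is nondecreasing on $E\ge0$. Since $\sqrt{\cdot}$ is increasing, $\sqrt{E/(a+E)}$ is nondecreasing in $E$, so $g(a,E)$ is non-increasing in $E$. This holds for every $a\in(0,1)$ and does not need $a\ge a^*$.

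\emph{Monotonicity in $a$.} For $E>0$ I would differentiate:
\[
\frac{\partial g}{\partial a}=-\frac{1}{2\sqrt{1-a}}+\frac{\sqrt{E}}{2(a+E)^{3/2}} .
\]
For $E=0$ the second term is absent and the derivative is clearly negative. So it suffices to show $h_a(E):=\sqrt{E}\,(a+E)^{-3/2}\le 1/\sqrt{1-a}$ for all $E>0$ whenever $a\ge a^*$. I would maximize $h_a$ over $E$ by logarithmic differentiation: $\tfrac{1}{2E}=\tfrac{3}{2(a+E)}$ gives the unique critical point $E=a/2$. Because $h_a\to0$ at both $E\to0$ and $E\to\infty$, this critical point is the maximum, with value
\[
\max_{E>0}h_a(E)=\frac{2}{3\sqrt3\,a}.
\]
The condition $\tfrac{2}{3\sqrt3 a}\le \tfrac{1}{\sqrt{1-a}}$ is equivalent, after squaring with both sides positive, to $27a^2+4a-4\ge0$. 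The positive root of this quadratic is
\[
\frac{-4+\sqrt{16+432}}{54}=\frac{8\sqrt7-4}{54}=\frac{4\sqrt7-2}{27}=a^*,
\]
so $\partial g/\partial a\le0$ for all $a\in[a^*,1)$ and all $E\ge0$. Hence $g$ is non-increasing in $a$ on $[a^*,1)$ for every fixed $E$. This calculus step is the only real obstacle. The key is to decouple $E$ by taking the supremum of $h_a$ over $E$, rather than attempting a joint analysis in $(a,E)$.

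\emph{Consequence.} Suppose $a^*\le a\le\overline a<1$ and $0\le E\le\overline E$. Monotonicity in $E$ gives $g(a,E)\ge g(a,\overline E)$. Monotonicity in $a$ on the interval $[a^*,1)$, which contains both $a$ and $\overline a$, gives $g(a,\overline E)\ge g(\overline a,\overline E)$. Combining the two yields $g(a,E)\ge g(\overline a,\overline E)$.
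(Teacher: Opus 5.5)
Your proof is correct and follows essentially the same route as the paper: the same expression for $\partial g/\partial a$, maximization of the $E$-dependent term at $E=a/2$, and the resulting condition $27a^2+4a-4\ge 0$, whose positive root is $a^*$. Your differences are cosmetic refinements rather than a different argument: you prove monotonicity in $E$ via $E/(a+E)=1-a/(a+E)$, which also covers $E=0$ where the paper's $\partial g/\partial E$ is undefined, and you write out the two-step chain for the consequence explicitly.
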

\begin{proof}
Note that 
\begin{align*}
\frac{\partial g}{\partial E}=-\frac{a}{2\sqrt{E}\,(a+E)^{3/2}}<0
\end{align*}
since $a,E>0$. Therefore, $g$ is non-increasing in $E$.

In addition, 
\[
\frac{\partial g}{\partial a}=-\frac{1}{2\sqrt{1-a}}+\frac{\sqrt{E}}{2(a+E)^{3/2}}.
\]
Hence, $\frac{\partial g}{\partial a} \le 0$ if and only if $E(1-a) \le (a+E)^3$.

For fixed $a$, define
\[
h_a(E) := \frac{E (1-a)}{(a+E)^3}. 
\]
The function $h_a(E)$ is maximized over $E \ge 0$ at $E = \frac{a}{2}$. The maximum value of $h_a(E)$ at that point is 
\[
h_a\Big(\frac{a}{2}\Big) = \frac{4(1-a)}{27a^2.}
\]
To ensure $h_a(a/2) \le 1$ for any $E \ge 0$, we require $a \ge a^*$. 

Therefore, we conclude that $g$ is non-increasing in both variables on $[a^*, 1) \times [0, \infty)$. 
\end{proof}

We are now ready to prove Theorem~\ref{thm:main}.

\begin{proof}
{\em Statement 1 (oracle gain):} Lemma~\ref{lemma:mode3_noise_concentration} gives $\|\mathbf{Q}_3\|_2 \le \omega_3$ with failure probability at most $e^{-t_3^2/(\sigma^2\mMK)}$. On this event, Lemma~\ref{lemma:baseline_alignment_lower} implies $a \ge \underline{a}$. On this same event, Lemma~\ref{lem:det} and Lemma~\ref{lem:prob} give $a \le \overline{a}$ with failure probability at most $p_{\mathrm{w}}(\varepsilon,\mu)$. On this same event, Lemma~\ref{lem:incage} and Lemma~\ref{lem:incageenv} give $\spec{\Epar}^2 \le \overline{E}$ with failure probability at most $e^{-t_4^2/(\sigma^2\mMK)}$. Taking a union bound gives a failure probability bounded by the sum of the failure probabilities above. Lemma~\ref{lem:monotonicity} then gives $g(a,\spec{\Epar}^2) \ge g(\overline{a}, \overline{E})$, i.e., $\gorlower \ge \underline{\gorlower}$. The conclusion~\eqref{eq:finalgor} then follows from Theorem~\ref{thm:oraclebounds}.

{\em Statement 2 (empirical loss):} 
Lemma~\ref{lem:mode1noise} gives $\|\mathbf{Q}_1\|_2 \le \omega_1$ with failure probability at most $e^{-t_1^2/(\sigma^2\mMK)}$, Lemma~\ref{lem:mode2noise} gives $\|\mathbf{Q}_2\|_2 \le \omega_2$ with failure probability at most $e^{-t_2^2/(\sigma^2\mMK)}$, and Lemma~\ref{lemma:mode3_noise_concentration} gives $\|\mathbf{Q}_3\|_2 \le \omega_3$ with failure probability at most $e^{-t_3^2/(\sigma^2\mMK)}$. Taking a union bound gives a failure probability bounded by the sum of the failure probabilities above. Combining Lemma~\ref{lem:rholower} and Lemma~\ref{lemma:baseline_alignment_lower} then gives $\rho \ge \underline{\rho} = \sqrt{\underline{a}}$, and Corollary~\ref{cor:wedin_sin} gives $\eta \le \overline{\eta}$. Since $\overline{\eta} \le \frac{1}{2} \underline{\rho}$, we also have $\eta < \rho$. It follows that $\frac{\eta}{\rho-\eta} \le \frac{\overline{\eta}}{\underline{\rho}-\overline{\eta}}$, i.e., $\lemupper \le \overline{\lemupper}$. The conclusion~\eqref{eq:finallem} then follows from Theorem~\ref{thm:exact}.

{\em Statement 3 (overall tensor gain):} Taking a union bound gives a failure probability bounded by the sum of the failure probabilities above; the failure probability of the event $\|\mathbf{Q}_3\|_2 \le \omega_3$ need be counted only once. The conclusion~\eqref{eq:finalgain} follows directly from~\eqref{eq:overalltg2} together with~\eqref{eq:finalgor} and~\eqref{eq:finallem}.
\end{proof}

\end{document}